\def\ShowAuthNotes{1}
\newcommand{\Pd}{\mathsf{P}}
\newcommand{\cA}{\mathcal{A}}\newcommand{\cB}{\mathcal{B}}
\newcommand{\cH}{\mathcal{H}}
\newcommand{\cI}{\mathcal{I}}\newcommand{\cJ}{\mathcal{J}}
\newcommand{\cK}{\mathcal{K}}\newcommand{\cL}{\mathcal{L}}
\newcommand{\cM}{\mathcal{M}}\newcommand{\cN}{\mathcal{N}}
\newcommand{\cS}{\mathcal{S}}
\newcommand{\cW}{\mathcal{W}}\newcommand{\cX}{\mathcal{X}}
\newcommand{\accept}{\text{accept}}
\newcommand{\reject}{\text{reject}}
\newtheorem*{rep@theorem}{\rep@title}
\newcommand{\newreptheorem}[2]{%
\newenvironment{rep#1}[1]{%
 \def\rep@title{#2 \ref{##1}}%
 \begin{rep@theorem}}%
 {\end{rep@theorem}}}
\newcounter{hours}
\newcounter{minutes}
\newcommand{\printtime}{ %
        \setcounter{hours}{\time/60} %
        \setcounter{minutes}{\time-\value{hours}*60} %
        \ifthenelse{\value{hours}<10}{0}{}\thehours:%
        \ifthenelse{\value{minutes}<10}{0}{}\theminutes%
        } 
\theoremstyle{plain}
        \newtheorem{theorem}{Theorem}[section]
        \newtheorem{lemma}[theorem]{Lemma}
        \newtheorem{corollary}[theorem]{Corollary}
        \newtheorem{claim}[theorem]{Claim}
\theoremstyle{definition}
        \newtheorem{definition}[theorem]{Definition}
\theoremstyle{remark}
\DeclareMathOperator*{\Exp}{E}
\DeclareMathOperator*{\Var}{\text{Var}}
\newcommand{\inu}{\ensuremath{\leftarrow}}
\DeclareMathOperator{\poly}{poly}
\newcommand{\eps}{\ensuremath{\varepsilon}}
\newcommand{\NP}{\mathsf{NP}}
\renewcommand{\P}{\mathsf{P}}
\newcommand{\Bad}{\mathsf{Bad}}
\newcommand{\PSPACE}{\mathsf{PSPACE}}
\newcommand{\IP}{\mathsf{IP}}
\newcommand{\AM}{\mathsf{AM}}
\definecolor{DSgray}{cmyk}{0,0,0,0.7}
\definecolor{DSred}{cmyk}{0,0.7,0,0.7}
\newcommand{\Authornote}[2]{\noindent{\small\textcolor{DSgray}{\sf{
\textcolor{red}{[#1: #2]\marginpar{\textcolor{red}{\fbox{\Large !}}}}}}}}
\newcommand{\Authormarginnote}[2]{\marginpar{\parbox{2.2cm}{\raggedright\tiny \textcolor{red}{#1: #2}}}}
\newcommand{\Authornote}[2]{}
\newcommand{\Authormarginnote}[2]{}
\begin{document}
\title{A Protocol for Generating Random Elements with
their Probabilities}

\author{
Thomas Holenstein\thanks{ETH Zurich, Department of Computer Science, 8092 Zurich, Switzerland. E-mail: {\tt thomas.holenstein@inf.ethz.ch}} \and
Robin K\"unzler\thanks{ETH Zurich, Department of Computer Science, 8092 Zurich, Switzerland. E-mail: {\tt robink@inf.ethz.ch}}}
\maketitle
\begin{abstract}
We give an $\AM$ protocol that allows the verifier to sample 
elements $x$ from a probability distribution $\Pd$, which is held by the prover. 
If the prover is honest, the verifier outputs $(x,\Pd(x))$ with probability close to $\Pd(x)$. 

In case the prover is dishonest, one may hope for the following guarantee: if the verifier outputs $(x,p)$, then the probability that the verifier outputs $x$ is close to $p$. Simple examples show that this cannot be achieved. Instead, 
we show that the following weaker condition holds (in a well defined sense) \textit{on average}: If $(x,p)$ is output, then $p$ is an upper bound on the probability that $x$ is output. 
	
Our protocol yields a new transformation to turn interactive proofs where the verifier uses private random coins into proofs with public coins. The verifier has better running time compared to the well-known Goldwasser-Sipser transformation (STOC, 1986). For constant-round protocols, we only lose an arbitrarily small constant in soundness and completeness, while our public-coin verifier calls the private-coin verifier only once.  
\end{abstract}

\newcommand{\Isize}{\text{IntervalSize}}
\newcommand{\Gsize}{\text{GapSize}}
\newcommand{\SampGap}{\text{SamplingGap}}

\newpage
\tableofcontents

\newpage
\section{Introduction}
\label{chap:Sampling}

In an interactive proof \cite{GMR85, Bab85, BM88}, an all-powerful prover tries to convince a computationally bounded verifier that some statement is true. The study of such proofs has a rich history, and has lead to numerous important and surprising results. 

We are interested in interactive protocols that allow the verifier to sample elements from a probability distribution. Such protocols have proved to be very useful, and their applications include the study of private versus public coins in interactive proof systems \cite{GS86}, perfect zero knowledge \cite{For87}, 
basing average-case hardness or cryptographic security on worst-case hardness \cite{FF93, BT06, AGGM06, HMX10}, and many more.

We consider constant-round protocols that allow the verifier to sample an element $x$ from a probability distribution $\Pd$ together with an approximation $p$ of the probability $\Pd(x)$.
The verifier outputs pairs $(x,p)$, and for a fixed prover we let $(X,P)$ be the random variables corresponding to the verifier's output, $\Pd_{XP}$ is their joint distribution, and $\Pd_X$ is the marginal distribution defined as $\Pd_X(x)= \sum_p\Pd_{XP}(x,p)$. We would like to achieve the following properties:

\begin{quote}
	\textbf{Property~1:} 
		For every $x$ we have $\Pd_X(x) \approx \Pd(x)$. 
\end{quote}

\begin{quote}
	\textbf{Property~2:} 
		If the verifier outputs $(x,p)$, then 
			$p \approx \Pd_X(x)$. 
\end{quote}

Recently, Haitner et al.~\cite{HMX10} gave such a protocol for sampling a distribution $\Pd$ on bit strings, which is given as $\Pd=f(\Pd_U)$ for an efficiently computable function $f$, where $\Pd_U$ is the uniform distribution on $n$-bit strings. For \textit{any} (possibly cheating) prover their protocol achieves property $1$ with equality (i.e.~$\Pd_X(x)=\Pd(x)$), and property $2$ for polynomially small error (i.e.~$p = (1\pm \eps)\Pd(x)$ for $\eps$ polynomially small in $n$). The protocol extends to distributions $f(\Pd_S)$, where $\Pd_S$ is the uniform distribution on an efficiently decidable set $\cS \subseteq \{0,1\}^n$. 

\subsection{Contributions of this Paper}
We give a sampling protocol that is similar to the one of~\cite{HMX10}. However, in our protocol only the prover gets as input the distribution $\Pd$. This distribution can be arbitrary, and in particular does not have to be efficiently samplable. The verifier does \textit{not} get $\Pd$ as input, and in particular does not have the ability to sample from $\Pd$. 

We obtain the following completeness guarantee: 
\begin{quote}
	\textbf{Completeness:} If the prover is honest, then both 
		properties 1 and 2 are satisfied. 
		
		More precisely, property~1 is satisfied with polynomially
		small error (i.e.~$\Pd_X(x) = (1\pm\eps)\Pd(x)$ for 
		polynomially small $\eps$), and instead of property 2 we
		even guarantee that the verifier only outputs pairs 
		$(x,\Pd(x))$. 
\end{quote}
In case the prover is not honest, since the verifier does not know $\Pd$, we cannot hope to satisfy property~1. However, one could hope that property~2 is satisfied for any (possibly cheating) prover. Unfortunately, simple examples show that this cannot be achieved. Instead, we prove the following weaker guarantee:
\begin{quote}
	\textbf{Soundness:} The following condition holds (in a well defined sense) \textit{on average}: If $(x,p)$ is output, then $\Pd_X(x) \leq p$. 
\end{quote}

To illustrate the usefulness of our protocol, we apply it to obtain a private-coin to public-coin transformation for interactive proofs in Section~\ref{chap:PrivCoinsPubCoins}. Compared to the original transformation by Goldwasser and Sipser~\cite{GS86}, our transformation is more efficient in terms of the verifier's running time. In particular, for constant-round protocols we only lose an arbitrarily small constant in soundness and completeness, while executing the private-coin verifier exactly once.
We show that this transformation can be viewed as an interactive sampling process, where the public-coin verifier iteratively samples messages for the private-coin verifier and requests the corresponding answers from the prover. 

We remark that the soundness guarantee of our protocol is weaker than the one of the~\cite{HMX10} protocol. However, as mentioned above, the guarantee of~\cite{HMX10} cannot be achieved in the more general setting we consider. Furthermore, to achieve a private-coin to public-coin transformation, it is also possible to employ the~\cite{HMX10} sampling protocol. However, this yields a less efficient public-coin verifier, as the private coin-verifier must be executed many times when running the sampling protocol. 

\subsection{Related Work}

\paragraph{Interactive protocols.}
Interactive proof systems were introduced by Goldwasser et al.~\cite{GMR85}. Independently, Babai and Moran \cite{Bab85, BM88} defined the public-coin version. As mentioned above, Goldwasser and Sipser~\cite{GS86} showed that the two definitions are equivalent with respect to language recognition (for a nice exposition of the proof, we also refer to the book of Goldreich~\cite{Gol08}). The study of interactive proofs has a long and rich history, influencing the study of zero knowledge and probabilistic checkable proofs, and has lead to numerous important and surprising results such as $\IP = \PSPACE$~\cite{LFKN92, Shamir92}. For historical overviews we refer for example to \cite{Bab90, Gol08, AB09}. 

\paragraph{Interactive sampling protocols.}
Goldwasser and Sipser \cite{GS86} show that private coins in interactive proofs can be made public. A constant-round set lower bound protocol is introduced which can be viewed as a sampling process: the verifier uses pairwise independent hashing to randomly select a few elements in a large set. This protocol is used in many subsequent works such as \cite{For87,AH91,GVW02,BT06,AGGM06,HMX10}. 
To study the complexity of perfect zero knowledge, Fortnow \cite{For87} introduces a constant-round protocol that allows to prove set upper bounds assuming that the verifier is given a uniform random element in the set, which is not known to the prover. The same protocol is used in a similar context by Aiello and H{\aa}stad \cite{AH91}. As in the lower bound protocol, hashing is used to sample a small number of elements from the set.
To prove that any interactive proof system can have perfect completeness, Goldreich et al.~\cite{GMS87} give a protocol that allows to sample a perfectly uniform random element from a decidable set. Their protocol requires a polynomial number of rounds (depending on the set size), and they show that no constant-round protocol can achieve this task. 
The upper and lower bound protocols of \cite{GS86, For87, AH91} are used by Bogdanov and Trevisan \cite{BT06} in their proof that the worst-case hardness of an $\NP$-complete problem cannot be used to show the average-case hardness of an $\NP$ problem via non-adaptive reductions, unless the polynomial hierarchy collapses.
The ideas used in the set lower and upper bound protocols can be employed to sample a single element from an $\NP$ set in case the verifier knows the set size. This is done in \cite{GVW02} in the context of studying interactive proofs with bounded communication.
This protocol is refined by Akavia et al.~\cite{AGGM06}, where it is used to give protocols for proving the size of a set (both upper and lower bound) under the assumption that the verifier knows some approximate statistics about the size of the set. These protocols are then used to study the question whether one-way functions can be based on $\NP$-hardness. 
The ideas behing the sampling protocol by Akavia et al.~are refined and extended by Haitner et al.~\cite{HMX10} in order to give a protocol that allows to sample an element $x \leftarrow \Pd$ of a given distribution, which is specified as $\Pd = f(\Pd_U)$ for an efficiently computable function $f$, where $\Pd_U$ is the uniform distribution on $\{0,1\}^n$. The verifier outputs $(x,p)$, 
where $x$ is sampled from $\Pd$, and $p = (1\pm\eps)\Pd(x)$ for polynomially small $\eps$. Their protocol extends to distributions 
$f(\Pd_S)$, where $\Pd_S$ is the uniform distribution on an efficiently decidable set $\cS \subseteq \{0,1\}^n$. It is shown that this sampling protocol can be employed to allow the verifier to verify the shape of the distribution $\Pd$, in terms of its histogram. These protocols are used to prove that a number of cryptographic primitives, such as statistically hiding commitment, cannot be based on $\NP$-hardness via certain classes of reductions (unless the polynomial hierarchy collapses).

\section{Preliminaries}
\label{sec:SamplingPreliminaries}

\subsection{Notation and Definitions}
We denote sets using calligraphic letters $\cA, \cB, \ldots$. We usually denote the elements of a set $\cX$ using lower case letters $x_1, x_2, \ldots$. For $n\in \mathbb{N}$ we let $(n):=\{0, 1, \ldots, n\}$ and $[n]:=\{1,2,\ldots, n\}$.

For a finite set $\cX$, a probability distribution $\Pd_X$ over $\cX$ is a function $\Pd_X: \cX \rightarrow [0,1]$ that satisfies the condition $\sum_{x\in \cX}\Pd_X(x)=1$. For a set $\cM \subseteq \cX$ we let $\Pd_X(\cM)=\sum_{x\in \cM}\Pd_X(x)$. 
 We will denote random variables by capital letters (e.g.~$X$), and values they assume by lower case letters (e.g.~$x$). Probability distributions that are not specifically associated with a random variable will be denoted by $\Pd$.

We say a function $f:\mathbb{N}\rightarrow\mathbb{N}$ is \textit{time-constructible} if there exists an algorithm that runs in time $O(f(n))$ and on input $1^n$ outputs $f(n)$.

\subsection{Concentration Bounds}
\label{sec:ConcBounds}
We use several concentration bounds, and first state the well-known Chebyshev and Chernoff bounds. 
\begin{lemma}[Chebyshev's inequality]
	\label{lem:Chebyshev}
	For any random variable $X$ where $\Exp[X]$ and $\Var[X]$ are finite, and any $k \in \mathbb{R}_{>0}$ we have
		$$\Pr\left[\big|X-\Exp[X]\big|\geq k\cdot \sqrt{\Var[X]}\right]\leq 1/k^2.$$
\end{lemma}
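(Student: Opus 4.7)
The plan is to derive Chebyshev's inequality as a direct consequence of Markov's inequality applied to the squared deviation $(X - \Exp[X])^2$. I will first recall Markov's inequality: for any non-negative random variable $Y$ with finite expectation and any $a > 0$, $\Pr[Y \geq a] \leq \Exp[Y]/a$. This can be justified in one line by writing $\Exp[Y] \geq \Exp[Y \cdot \mathbf{1}_{\{Y \geq a\}}] \geq a \cdot \Pr[Y \geq a]$.

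The core observation is that, since $k > 0$ and squaring is monotone on $[0, \infty)$, the event $|X - \Exp[X]| \geq k \sqrt{\Var[X]}$ coincides with the event $(X - \Exp[X])^2 \geq k^2 \Var[X]$. I would then apply Markov's inequality to the non-negative random variable $Y := (X - \Exp[X])^2$ with threshold $a := k^2 \Var[X]$, obtaining
\[
\Pr\big[\,|X - \Exp[X]| \geq k\sqrt{\Var[X]}\,\big] = \Pr\big[(X - \Exp[X])^2 \geq k^2 \Var[X]\big] \leq \frac{\Exp[(X - \Exp[X])^2]}{k^2 \Var[X]} = \frac{1}{k^2},
\]
where the last equality uses the definition $\Var[X] = \Exp[(X - \Exp[X])^2]$.

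There is no real obstacle here, as this is a classical one-line reduction. The only minor subtlety is the degenerate case $\Var[X] = 0$: then $X$ equals its mean almost surely, so the left-hand side is zero and the bound holds trivially; this case must be separated out since the Markov step formally requires the threshold $k^2 \Var[X]$ to be strictly positive.
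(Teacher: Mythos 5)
Your proof is correct and is the standard derivation of Chebyshev from Markov applied to the squared deviation $(X-\Exp[X])^2$; the paper states Lemma~\ref{lem:Chebyshev} as a well-known preliminary without providing a proof, so there is nothing to compare against, but your argument (including the careful separation of the degenerate case $\Var[X]=0$) is exactly the classical one.
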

\begin{lemma}[Chernoff bound]
	\label{lem:ChernoffBound}
	Let $X_1, \ldots, X_k$ be independent random variables where for all $i$ we have $X_i \in \{0,1\}$ and $\Pr[X_i=1]=p$ for some $p \in (0,1)$. Define $\widetilde{X}:=
	\frac{1}{k}\sum_{i\in [k]} X_i$. 	
	Then for any $\eps >0$ it holds that
	\begin{align*}
		&\Pr_{X_1, \ldots, X_k}\left[\widetilde{X} \geq p+\eps\right]
			<\exp{\left( 
					-\frac{\eps^2k}{2}
				\right)}, \quad
		\Pr_{X_1, \ldots, X_k}\left[\widetilde{X} \leq p-\eps\right]
			<\exp{\left( 
					-\frac{\eps^2k}{2}
				\right)}.
	\end{align*}
\end{lemma}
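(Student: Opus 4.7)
\textbf{Proof plan for the Chernoff bound (Lemma~\ref{lem:ChernoffBound}).}
The plan is to use the standard exponential moment (Bernstein) method. For the upper tail, I fix a parameter $t>0$ to be optimized, write $S_k=\sum_{i\in[k]}X_i$ so that $\widetilde{X}=S_k/k$, and apply Markov's inequality to the nonnegative random variable $e^{tS_k}$. This gives
\[
\Pr[\widetilde{X}\geq p+\eps]=\Pr[e^{tS_k}\geq e^{tk(p+\eps)}]\leq e^{-tk(p+\eps)}\,\Exp[e^{tS_k}].
\]
By independence of the $X_i$, the moment generating function factors as $\Exp[e^{tS_k}]=\prod_{i\in[k]}\Exp[e^{tX_i}]=(1-p+pe^t)^k$. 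Using $1+x\leq e^x$ on $x=p(e^t-1)$ I bound this by $\exp(kp(e^t-1))$, so
\[
\Pr[\widetilde{X}\geq p+\eps]\leq \exp\bigl(kp(e^t-1)-tk(p+\eps)\bigr).
\]

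The next step is to choose $t$ so as to minimize the right-hand side. Setting the derivative with respect to $t$ to zero yields $t=\ln((p+\eps)/p)$, which is the classical choice giving the multiplicative Chernoff form $\bigl((p/(p+\eps))^{p+\eps}(1/(1-\eps/(1-p)))^{1-p-\eps}\bigr)^k$. To reach the additive shape $\exp(-\eps^2 k/2)$ stated in the lemma, I plug $t=\eps$ into the bound above (rather than the true optimum) and use the elementary inequality $p(e^\eps-1)-\eps(p+\eps)\leq -\eps^2/2$, which follows from the Taylor expansion $e^\eps-1\leq \eps+\eps^2/2+\cdots$ after a short manipulation; alternatively one can invoke Hoeffding's lemma $\Exp[e^{t(X_i-p)}]\leq e^{t^2/8}$ for $X_i\in[0,1]$ and optimize, which even gives the stronger $\exp(-2\eps^2 k)$. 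Either route produces the stated bound after routine algebra.

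For the lower tail I run the same argument with the variables $Y_i:=1-X_i$, which are again independent Bernoullis with mean $1-p$; the event $\{\widetilde{X}\leq p-\eps\}$ becomes $\{\widetilde{Y}\geq (1-p)+\eps\}$ and the upper-tail bound applies verbatim. The only step that needs any care is the passage from the multiplicative to the additive form in the previous paragraph, but since this is a standard manipulation rather than a genuine obstacle, the overall proof is entirely routine. No further ingredients beyond Markov's inequality and independence are required.
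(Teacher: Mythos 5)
The paper does not prove this lemma; it states the Chernoff bound as a standard concentration result without proof. Your exponential-moment (Markov-on-$e^{tS_k}$) argument is the standard route, so there is no alternate "paper approach" to compare to, and your outline is basically correct. One step deserves more care, though.

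You claim that $p(e^\eps-1)-\eps(p+\eps)\leq -\eps^2/2$ "follows from the Taylor expansion after a short manipulation." Rearranging, this says $p(e^\eps-1-\eps)\leq \eps^2/2$. Since $e^\eps-1-\eps=\eps^2/2+\eps^3/6+\cdots>\eps^2/2$ for every $\eps>0$, the inequality is actually \emph{false} when $p$ is close to $1$ (e.g.\ $p=0.9$, $\eps=1$). What saves it is the harmless reduction to $p+\eps\leq 1$: when $p+\eps>1$ the event $\widetilde X\geq p+\eps$ is impossible and the bound is trivial, and when $p\leq 1-\eps$ one can check that $(1-\eps)(e^\eps-1-\eps)<\eps^2/2$ for $\eps>0$ (e.g.\ the function $f(\eps)=(1-\eps)(e^\eps-1-\eps)-\eps^2/2$ has $f(0)=0$ and $f'(\eps)=-\eps(e^\eps-1)<0$). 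You should make this reduction explicit; as written, the "short manipulation" from the Taylor series alone does not give the inequality. The alternative you sketch via Hoeffding's lemma $\Exp[e^{t(X_i-p)}]\leq e^{t^2/8}$ sidesteps this issue entirely, gives the sharper $\exp(-2\eps^2 k)$, and is the cleaner of your two routes; I would lead with it. The reduction of the lower tail to the upper tail via $Y_i=1-X_i$ is fine.
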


Hoeffding's bound \cite{Hoe63} states that for $k$ independent random variables $X_1, \ldots, X_k$ that take values in some appropriate range, with high probability their sum is close to its expectation. 
\begin{lemma}[Hoeffding's inequality]
	\label{lem:HoeffdingBound}
		Let $X_1, \ldots, X_k$ be independent random variables
	with $X_i \in [a,b]$, define $\widetilde{X}:=
	\frac{1}{k}\sum_{i\in [k]} X_i$ and let 
	$p=\Exp_{X_1, \ldots, X_k}[\widetilde{X}]$. 
	Then for any $\eps > 0$ we have
	\begin{align*}
		&\Pr_{X_1, \ldots, X_k}\left[
		\widetilde{X} \geq p
			+ \eps			
		\right]
		\leq \exp\left(-\frac{2\eps^2k}	
																{(b-a)^2}\right), \\
		&\Pr_{X_1, \ldots, X_k}\left[
		\widetilde{X} \leq
			p- \eps			
		\right]
		\leq \exp\left(-\frac{2\eps^2k}	
																{(b-a)^2}\right).
		\end{align*}
\end{lemma}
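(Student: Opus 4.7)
The plan is the classical Cramér–Chernoff exponential-moment method. First I would center the summands by setting $Y_i := X_i - \Exp[X_i]$, so that $\Exp[Y_i]=0$ and each $Y_i$ lies in an interval of length exactly $b-a$, and I would write $S := \sum_{i\in [k]} Y_i = k(\widetilde{X} - p)$. Then for any $s > 0$, Markov's inequality applied to $e^{sS}$ gives
$$\Pr[\widetilde{X} \geq p + \eps] \;=\; \Pr\!\left[e^{sS} \geq e^{sk\eps}\right] \;\leq\; e^{-sk\eps}\,\Exp\!\left[e^{sS}\right],$$
and by independence $\Exp[e^{sS}] = \prod_{i \in [k]} \Exp[e^{sY_i}]$.

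The only real technical step -- and thus the main obstacle -- is what is usually called \emph{Hoeffding's lemma}: for a zero-mean random variable $Y$ supported in an interval $[\alpha,\beta]$ with $\beta - \alpha = b-a$, one has $\Exp[e^{sY}] \leq \exp\bigl(s^2(b-a)^2/8\bigr)$ for every $s\in\mathbb{R}$. I would prove this by using convexity of the exponential to dominate $e^{sy}$ on $[\alpha,\beta]$ by its secant line through $(\alpha,e^{s\alpha})$ and $(\beta,e^{s\beta})$, taking expectations (which depends only on $\Exp[Y]=0$), and then showing that the resulting function $\varphi(u) := -\lambda u + \log(1-\lambda+\lambda e^u)$, with $\lambda := -\alpha/(\beta-\alpha)$ and $u := s(\beta-\alpha)$, satisfies $\varphi''(u) \leq 1/4$. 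A Taylor expansion around $0$ (using $\varphi(0)=\varphi'(0)=0$) then yields $\varphi(u) \leq u^2/8$, which is the claimed bound.

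Combining the two ingredients gives $\Exp[e^{sS}] \leq \exp\bigl(ks^2(b-a)^2/8\bigr)$ and hence
$$\Pr[\widetilde{X} \geq p + \eps] \;\leq\; \exp\!\left(\frac{ks^2(b-a)^2}{8} - sk\eps\right).$$
The exponent is a quadratic in $s$ minimized at $s^\ast = 4\eps/(b-a)^2$, and substituting this value gives exactly $\exp\bigl(-2\eps^2 k/(b-a)^2\bigr)$, proving the upper-tail inequality. The lower-tail bound follows immediately by applying the identical argument to the variables $-X_i$, which again lie in an interval of length $b-a$ and whose sample mean has expectation $-p$; no separate calculation is required.
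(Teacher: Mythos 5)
Your proof is correct, and it is the standard Cramér--Chernoff argument (centering, exponentiating and applying Markov, bounding each factor via Hoeffding's lemma, then optimizing the free parameter $s$). The paper itself does not prove this lemma --- it states it as a known result with a citation to Hoeffding's 1963 paper --- so there is no in-paper proof to compare against; but what you give is precisely the argument from that cited source and from every standard treatment. All the details check out: $Y_i$ is zero-mean and supported on an interval of length $b-a$, $S = k(\widetilde X - p)$, the secant-line bound followed by the second-derivative estimate $\varphi''(u) = t(1-t) \leq 1/4$ yields $\Exp[e^{sY_i}] \leq e^{s^2(b-a)^2/8}$, and plugging in $s^\ast = 4\eps/(b-a)^2$ gives exactly $\exp(-2\eps^2 k/(b-a)^2)$, with the lower tail by symmetry.
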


\subsection{Interactive Proofs}
\label{sec:InteractiveProofs}

In an interactive proof, an all-powerful prover tries to convince a computationally bounded verifier that her claim is true. The notion of an interactive protocol formalizes the interaction between the prover and the verifier, and is defined as follows. 

\begin{definition}[Interactive Protocol]
\label{def:InteractiveProtocol}
Let $n \in \mathbb{N}$,
$V:\{0,1\}^{\ast} \times \{0,1\}^{\ast} \rightarrow \{0,1\}^{\ast} \cup \{\accept, \reject\}$, 
$P:\{0,1\}^{\ast} \rightarrow \{0,1\}^{\ast}$, and $k, \ell, m: \mathbb{N}\rightarrow \mathbb{N}$. A 
$k$-round interactive
protocol $(V,P)$ with message length $m$ and $\ell$ random coins between $V$ and $P$ on input $x\in \{0,1\}^n$ is defined as follows:
\begin{enumerate}
	\item Uniformly choose random coins $r \in \{0,1\}^{\ell(n)}$ 
				for $V$.
	\item Let $k:=k(n)$ and repeat the following for $i=0,1,\ldots,k-1$:
				\begin{enumerate}[a)]	
					\item $m_i := V(x,i,r,a_0, \ldots, a_{i-1}), m_i \in \{0,1\}^{m(n)}$
					\item $a_i := P(x,i,m_0, \ldots, m_i), a_i \in \{0,1\}^{m(n)}$
				\end{enumerate}
				Finally, we have $V(x,k,r,m_0,a_0, \ldots, m_{k-1},
				a_{k-1}) \in \{\accept, \reject\}$.
\end{enumerate}
We denote by $(V(r),P)(x) \in \{\accept, \reject\}$ the output of $V$ on random coins $r$ after an interaction with $P$.
We say that $(x,r,m_0,a_0, \ldots, m_{j},a_{j})$ is consistent for $V$ if for all $i \in (k-1)$
we have $V(x,i,r,a_0, \ldots, a_{i-1})=m_i$. Finally, 
if $(x,r,m_0,a_0, \ldots, m_{k-1},	a_{k-1})$ is not consistent
for $V$, then $V(x,k,r,m_0,a_0, \ldots, m_{k-1}, 
a_{k-1}) = \reject$. 
\end{definition}
We remark that we restrict the verifier to only accept consistent transcripts as this will simplify the notation in Section~\ref{chap:PrivCoinsPubCoins}.

We now define the classes $\IP$ and $\AM$ of interactive proofs. The definition of $\IP$ was initially given by Goldwasser, Micali, and Rackoff~\cite{GMR85}, and the definition of $\AM$ goes back to Babai~\cite{Bab85}. 

\begin{definition}[$\IP$ and $\AM$]	
	\label{def:InteractiveProofs}
	The set
	\begin{align*}
		\IP\left(
			\begin{array}{lll}
					\textsf{rounds}& =    & k(n)	 \\ 
					\textsf{time}  & =    & t(n) 	 \\
					\textsf{msg size}  & =    & m(n) 	 \\
					\textsf{coins} & =    & \ell(n)\\
					\textsf{compl} & \geq & c(n)   \\ 
					\textsf{sound} & \leq & s(n)   \\ 
			\end{array}
		\right)
	\end{align*}
	contains the languages $L$ that admit a $k$-round
	interactive protocol $(V,P)$ with message length $m$ and 
	$\ell$ random coins, and the following properties:
	\begin{itemize}
		\item[]
	\noindent \textbf{Efficiency:} 
	$V$ can be computed by an algorithm such that
	for any $x \in \{0,1\}^*$ and $P^*$ the total running time
	of $V$ in $(V,P^*)(x)$ is at most $t(|x|)$.
	
	\noindent \textbf{Completeness: }
	$$
		x \in L \implies \Pr_{r \leftarrow \{0,1\}^{\ell(|x|)}}
					\left[ (V(r),P)(x) = \accept
					\right] 
					\geq c(|x|).
	$$
	
	\noindent \textbf{Soundness: }
	For any $P^*$ we have
	$$
		x \notin L \implies \Pr_{r \leftarrow \{0,1\}^{\ell(|x|)}}
					\left[ (V(r),P^*)(x) = \accept
					\right] 
					\leq s(|x|).
	$$
	\end{itemize}
	The set $\AM$ is defined analogously, with the additional
	restriction that $(V,P)$ is \textit{public-coin}, i.e.~for
	all $i$, $m_i$ is an independent uniform random string. 
	We sometimes omit the \textsf{msg size} and
	\textsf{coins} parameters from the notation, in which case
	they are defined to be at most \textsf{time}. If we omit the \textsf{time} parameter, it is defined to be $\poly(n)$.  We then let
	\begin{align*}
		\IP:=
		\IP\left(
			\begin{array}{lll}
					\textsf{rounds}& =    & \poly(n)	 \\ 
					\textsf{compl} & \geq & 2/3   \\ 
					\textsf{sound} & \leq & 1/3   \\ 
			\end{array}
		\right),		\quad
		\AM:=
		\AM\left(
			\begin{array}{lll}
					\textsf{rounds}& =    & 1	 \\ 
					\textsf{compl} & \geq & 2/3   \\ 
					\textsf{sound} & \leq & 1/3   \\ 
			\end{array}
		\right)
	\end{align*}
\end{definition}

Instead of writing (for example) $L \in \AM(\textsf{rounds}=k,\textsf{time}=t,\textsf{compl}\geq c, \textsf{sound}\leq s)$, we sometimes say that $L$ has a $k$-round public-coin interactive proof with completeness $c$ and soundness $s$, where the verifier runs in time $t$. 

Babai and Moran~\cite{BM88} showed that in the definition of $\AM$ above, setting $\textsf{rounds}=k$ for any constant $k\geq 1$ yields the same class. 

\paragraph{Assuming deterministic provers.} 
For proving the soundness condition of an interactive proof, without loss of generalty we may assume that the prover is determinsitic: we consider the deterministic prover that always sends the answer which maximizes the verifier's acceptance probability. No probabilistic prover can achieve better acceptance probability.  

\subsection{Hash Functions}
\label{sec:HashFunctions}
A family of pairwise independent hash functions satisfies the property that for any $x,x'$, the distribution $(h(x),h(x'))$ over the choice of $h$ from the family is uniform. More generally, we can define this concept for $k$-tuples:

\begin{definition}[$k$-wise independent hash functions]
	A family of functions $\cH(n,m) = \{h: \{0,1\}^n 
		\rightarrow \{0,1\}^m\}$ is $k$-wise independent
		if for any $x_1, \ldots, x_k \in \{0,1\}^n$ and
		$y_1, \ldots, y_k \in \{0,1\}^m$ we have
		$\Pr_{h \inu \cH(n,m)}[\forall i\in [k]: h(x_i)=y_i] 
		 = 2^{-km}$.
\end{definition}
Such families exist, and can be efficiently sampled for any 
$k=\poly(n,m)$ \cite{CW79}. 
The following lemma, as given by Nisan \cite{Nis92}, states that for any set $\cB \subseteq \{0,1\}^n$, 
if we choose a pairwise independent hash function $h$ from $n$ to $m$ bits, then the size of the 
set $\{y \in \cB: h(y)=0^m\}$ is close to its expected value $\frac{|\cB|}{2^m}$ 
with high probability. If a $3$-wise independent hash function is used, this property holds
even conditioned on $h(x)=0$ for some fixed $x$. 
\begin{lemma}[Hash Mixing Lemma]
		\label{lemma:hashMixingLemma}
		Let $\cB \subseteq \{0,1\}^n$, $x \in \{0,1\}^n$.
		If $\cH(n,m)$ is a family of $2$-wise independent hash functions mapping $n$ bits to $m$ bits, then the following holds.
		\begin{enumerate}[(i)]
			\item For all $\gamma > 0$ we have
				\begin{align*}
				   \Pr_{h\inu \cH(n,m)}
					 &\left[ |\{y\in \cB: h(y)=0^m\}|\notin
					 			 (1\pm\gamma)\frac{|\cB|}{2^m}
					 \right]\leq 
					 \begin{cases} 
					 		 \frac{2^m}{\gamma^2|\cB|}
							    &\mbox{if } |\cB|>0 \\
							0   &\mbox{if } |\cB|=0.
					 \end{cases}
				\end{align*}
		\end{enumerate}
		If $\cH(n,m)$ is a family of $3$-wise independent hash functions mapping $n$ bits to $m$ bits, then the following holds.
		\begin{enumerate}
			\item[(ii)] If $x \notin \cB$, then for all $\gamma > 0$ we have
				\begin{align*}
				   \Pr_{h\inu \cH(n,m)}
					 &\left[|\{y\in \cB: h(y)=0^m\}|\notin
					 			 (1\pm\gamma)\frac{|\cB|}{2^m}
					 \Big|h(x)=0^m\right]  \leq 
					 \begin{cases} 
					 		 \frac{2^m}{\gamma^2|\cB|}
							    &\mbox{if } |\cB|>0 \\
							0   &\mbox{if } |\cB|=0.
					 \end{cases}
				\end{align*}
			\item[(iii)] If $x \in \cB$, then for all $\gamma > 0$
						we have
				\begin{align*}
				 	 \Pr_{h\inu \cH(n,m)}
					 &\left[|\{y\in \cB: h(y)=0^m\}|\notin
					 			 1+(1\pm\gamma)\frac{|\cB|-1}{2^m}
					 \Big|h(x)=0^m\right] \leq
					 \begin{cases} 
					 		\frac{2^m}{\gamma^2(|\cB|-1)}
							    &\mbox{if } |\cB|>1 \\
							0   &\mbox{if } |\cB|=1.
					 \end{cases}
				\end{align*}
		\end{enumerate}
\end{lemma}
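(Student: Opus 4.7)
The plan is to define an indicator random variable $Y_y := \mathbb{1}[h(y)=0^m]$ for each $y \in \cB$, write $N := \sum_{y\in \cB} Y_y = |\{y \in \cB : h(y) = 0^m\}|$, and apply Chebyshev's inequality (Lemma~\ref{lem:Chebyshev}) to $N$. For each $y$ we have $\Exp[Y_y] = 2^{-m}$ and $\Var[Y_y] = 2^{-m}(1-2^{-m}) \leq 2^{-m}$. The central point is that pairwise independence of the hash family gives pairwise independence of the $Y_y$'s, so covariances vanish and $\Var[N] \leq |\cB|\cdot 2^{-m}$. Since $\Exp[N] = |\cB|/2^m$, Chebyshev with $k = \gamma \sqrt{\Exp[N]^2/\Var[N]} = \gamma \sqrt{|\cB|/2^m}$ yields deviation probability at most $1/k^2 = 2^m/(\gamma^2 |\cB|)$, which is exactly (i). The cases $|\cB| = 0$ are trivial because $N$ is deterministically $0$.

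For part (ii) I would use essentially the same argument under the conditional distribution on $\{h : h(x) = 0^m\}$. The key observation is that for a $3$-wise independent family, once we fix one value $h(x)$, the induced distribution on $(h(y_1), h(y_2))$ for any distinct $y_1, y_2 \in \{0,1\}^n\setminus\{x\}$ is still uniform on $(\{0,1\}^m)^2$; that is, the conditional distribution of $(h(y))_{y \neq x}$ is a $2$-wise independent family with each $h(y)$ still uniform. Since $x \notin \cB$, all the $Y_y$ for $y \in \cB$ are of this form, so the conditional expectation, conditional variance, and pairwise-independence computation go through exactly as in (i), giving the same bound.

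For part (iii), conditioning on $h(x) = 0^m$ now forces $Y_x = 1$, so I would split $N = 1 + N'$ where $N' := \sum_{y \in \cB\setminus\{x\}} Y_y$. By the same reasoning as in (ii), conditional on $h(x) = 0^m$, the variables $(Y_y)_{y \in \cB\setminus\{x\}}$ are pairwise independent with mean $2^{-m}$, so $\Exp[N' \mid h(x) = 0^m] = (|\cB|-1)/2^m$ and $\Var[N' \mid h(x) = 0^m] \leq (|\cB|-1)/2^m$. Chebyshev then gives $\Pr[N' \notin (1\pm\gamma)(|\cB|-1)/2^m \mid h(x)=0^m] \leq 2^m/(\gamma^2(|\cB|-1))$, and substituting $N = 1 + N'$ yields the stated bound; the case $|\cB|=1$ is trivial since then $N = 1$ deterministically.

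The whole argument is routine second-moment computation once the right independence property is isolated. The only mild obstacle is being careful about the $3$-wise independent conditioning in parts (ii) and (iii): one must verify that conditioning on a single value $h(x)=0^m$ preserves \emph{pairwise} independence (and uniform marginals) of the remaining $h(y)$'s, which is where the third moment is spent. Aside from that, no nontrivial inequality or combinatorial step is required.
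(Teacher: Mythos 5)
Your proposal is correct and follows essentially the same route as the paper: Chebyshev on $Z = |\{y \in \cB : h(y)=0^m\}|$, using pairwise independence to bound the variance by $|\cB|/2^m$, then for (ii)/(iii) observing that $3$-wise independence buys you pairwise independence (and uniform marginals) of the $h(y)$'s after conditioning on $h(x)=0^m$, with the $+1$ term in (iii) coming from forcing $Y_x = 1$. The only quibble is the claimed equality $k = \gamma\sqrt{\Exp[N]^2/\Var[N]} = \gamma\sqrt{|\cB|/2^m}$: since you only proved $\Var[N] \le |\cB|/2^m$ this is really $k \ge \gamma\sqrt{|\cB|/2^m}$, but the resulting bound $1/k^2 \le 2^m/(\gamma^2|\cB|)$ is still correct.
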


\begin{proof}[Proof of Lemma~\ref{lemma:hashMixingLemma}]
	We first prove (i). 
	The case $|\cB|=0$ is trivial, so assume $|\cB|>0$. 
	The proof uses Chebyshev's inequality (see 
	Lemma~\ref{lem:Chebyshev}).
	In the following, all probabilities, expectations and 
	variances are	over $h\inu \cH(n,m)$. For some event $\mathsf{A}$ we
	let $[\mathsf{A}]=1$	if $\mathsf{A}$ occurs, and $[\mathsf{A}]=0$ otherwise.
	We define the random variable 
	$Z:=|\{y \in \cB: h(y) = 0^m\}|$ and find that
	\begin{align*}
		\Exp\left[
			Z
		\right]
		=\Exp\left[
			\sum_{y\in \cB}[h(y)=0]
		\right]
		=\sum_{y\in \cB}\Exp\left[
			[h(y)=0]
		\right]	
		= \sum_{y\in \cB}\Pr[h(y)=0]
		= \frac{|\cB|}{2^m},
	\end{align*}
	and
	\begin{align*}
		&\Exp\left[
			Z^2
		\right]
		=\Exp\left[
			|\{y \in \cB: h(y) = 0^m\}|^2
		\right]
		=\Exp\left[
			|\{(y,y') \in \cB^2: h(y)=h(y')=0^m\}|
		\right]\\
		&=\Exp\left[
			\sum_{(y,y')\in \cB^2}
			[h(y)=h(y')=0^m]
		\right]
		=\sum_{(y,y')\in \cB^2}\Exp\left[
			[h(y)=h(y')=0^m]
		\right]\\
		&=\sum_{(y,y')\in \cB^2}\Pr\left[
			h(y)=h(y')=0^m
		\right]
		=|\cB|^2\Pr\left[
			h(y)=h(y')=0^m
		\right]\\
		&=|\cB|^2
		\Bigl(
		\underbrace{\Pr\left[
			h(y)=h(y')=0^m | y = y'
				\right]}_{=\frac{|\cB|}{2^m} `}
			\cdot \frac{1}{|\cB|^2}\\
		&\qquad\qquad +
		\underbrace{\Pr\left[
			h(y)=h(y')=0^m | y \neq y'
				\right]}_{=\frac{1}{2^{2m}}} 
			\cdot \frac{|\cB|^2-|\cB|}{|\cB|^2}
		\Bigr) \\
		&= \frac{|\cB|}{2^m} +\frac{|\cB|^2-|\cB|}{2^{2m}}.
	\end{align*}
	In the fourth equality above we used the pairwise
	independence of $\cH(n,m)$. Thus we have
	\begin{align*}	
		\Var[Z]=\Exp[Z^2]-\Exp[Z]^2 = \frac{|\cB|}{2^m} - 
																	\frac{|\cB|}{2^{2m}}
	\end{align*}
	Now applying Chebyshev's inequality 
	(Lemma~\ref{lem:Chebyshev}) for $k=\frac{\gamma\Exp[Z]}{\sqrt{\Var[Z]}}$ gives the 
	claim, as
	\begin{align*}
		\Pr\left[
			|Z-\Exp[Z]|\geq \gamma \Exp[Z]
		\right]
		\leq
		\frac{1}{k^2} = \frac{2^m}{\gamma^2|\cB|}- \frac{1}{|\cB
		|\gamma^2} \leq \frac{2^m}{\gamma^2|\cB|}.
	\end{align*}
	
	Finally, (ii) and (iii) follow by the following argument:
	As we have $3$-wise independence, we still have pairwise
	independence conditioned on $h(x)=0^m$, and we can apply 
	(i). 
\end{proof}

\subsection{Histograms}
\label{sec:HistAndWasserstein}
We give the definition of histograms as given in \cite{HMX10}. The histogram of a probability distribution $\Pd$ is a function $h:[0,1] \rightarrow [0,1]$ such that $h(p)=\Pr_{x\leftarrow \Pd}[\Pd(x)=p]$. The following definition describes a discretized version of this concept.\begin{definition}[$(\eps, t)$-histogram]
	\label{def:histogram}
	Let $\Pd$ be a probability distribution on $\{0,1\}^n$, fix
	$t\in \mathbb{N}$, and let $\eps > 0$. 
	For $i \in (t)$ we define the $i$'th interval $\cA_i$ and the
	$i$'th bucket $\cB_i$ as
	\begin{align*}
			\cA_i := \left(2^{-(i+1)\eps}, 
								2^{-i\eps}\right], \qquad
			\cB_i := \left\{x: \Pd(x) \in \cA_i\right\}.
	\end{align*}
	We then let $h:=(h_0, \ldots, h_t)$ where 
	$h_i := \Pr_{x\leftarrow \Pd}[x \in \cB_i] = 
					\sum_{x\in \cB_i}\Pd(x)$. The tuple
	$h$ is called the $(\eps,t)$-histogram of $\Pd$.
\end{definition}
If for all $x$ we have either $\Pd(x)=0$ or 
$\Pd(x)\geq 2^{-n}$, and we consider the $(\eps,t)$-histogram of $\Pd$ for $t=\lceil n/\eps\rceil$, then $\bigcup_{i\in (t)}\cB_i = \{0,1\}^n$ and $\sum_{i\in (t)}h_i = 1$. If smaller probabilities occur (e.g.~$\Pd(x)=2^{-2n}$ for some $x$), this sum is smaller than $1$. 

The following observation follows directly from the above definition:
\begin{claim}
	\label{claim:sizeBi}
	For all $i\in (t)$ we have $h_i 2^{i\eps} 
					\leq |\cB_i| \leq h_i 2^{(i+1)\eps}$.
\end{claim}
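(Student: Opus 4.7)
The plan is to unwind the definition of $h_i$ and use the fact that every element of $\cB_i$ has probability within the fixed interval $\cA_i$ to sandwich the sum. By Definition~\ref{def:histogram}, $h_i = \sum_{x\in\cB_i}\Pd(x)$, and for every $x\in\cB_i$ we have $\Pd(x)\in\cA_i = \left(2^{-(i+1)\eps},\,2^{-i\eps}\right]$. Replacing each summand by the upper endpoint gives $h_i \leq |\cB_i|\cdot 2^{-i\eps}$, which rearranges to the lower bound $|\cB_i| \geq h_i\cdot 2^{i\eps}$. Replacing each summand by the lower endpoint gives $h_i \geq |\cB_i|\cdot 2^{-(i+1)\eps}$ (strict if $\cB_i$ is nonempty, but non-strict inequality suffices here), which rearranges to $|\cB_i| \leq h_i\cdot 2^{(i+1)\eps}$.

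The only edge case is $\cB_i = \emptyset$, but then both sides of both inequalities are $0$ and the claim holds trivially. There is no real obstacle; the argument is a one-line sandwiching plus rearrangement, which is presumably why the authors phrase it as a claim stated without a full proof.
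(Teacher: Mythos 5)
Your proof is correct and is essentially identical to the paper's: both unwind $h_i = \sum_{x\in\cB_i}\Pd(x)$ and sandwich each summand between the endpoints of $\cA_i = (2^{-(i+1)\eps}, 2^{-i\eps}]$. The only superficial difference is that you explicitly flag the empty-set edge case and the strictness of the lower endpoint, neither of which changes the argument.
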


\begin{proof}
	Recall that $h_i = \sum_{x\in \cB_i}\Pd(x)$ and
	by definition of $\cB_i$ we have
	$$ 
	|\cB_i|2^{-(i+1)\eps}
		=
			\sum_{x\in \cB_i}2^{-(i+1)\eps} 
		\leq
			\sum_{x\in \cB_i}\Pd(x) 
		\leq 
			\sum_{x\in \cB_i}2^{-i\eps} 
			= |\cB_i|2^{-i\eps}.\qedhere 
	$$
\end{proof}

\section{The Sampling Protocol}

\subsection{Informal Theorem Statement and Discussion}
\label{sec:SamplingDiscussion}
In our sampling protocol, the verifier will output pairs $(x,p) \in \{0,1\}^n \times (0,1]$. For a fixed prover, we let $(X,P)$ be the random variables corresponding to the verifier's output, and we denote their joint distribution by $\Pd_{XP}$. Also, $\Pd_X$ is defined by $\Pd_X(x)=\sum_p \Pd_{XP}(x,p)$. Informally, our sampling theorem can be stated as follows. 
\begin{theorem}[The Sampling Protocol, informal]
	\label{thm:SamplingInformal}
	There exists a constant-round public-coin interactive
	protocol\footnote{
		 See Definition~\ref{def:InteractiveProtocol} for a 
		 standard definition of interactive protocols. 
	} 
	such that the following holds.
	The verifier and the prover take as input 
	$n \in \mathbb{N}, \eps, \delta \in (0,1)$, 
	and the prover additionally gets as input a probability
	distribution $\Pd$ over $\{0,1\}^n$.
	The verifier runs in time
	$\poly\left(n \left(\frac{1}{\eps}\right)^{1/\delta}\right)$
	and we have:
	
	\vspace{0.2cm}	
	\noindent\textbf{Completeness: }
	If the prover is honest, then the verifier outputs $(x,\Pd(x))$
	with probability $(1\pm\eps)\Pd(x)$. 
	
	\vspace{0.2cm}	
	\noindent\textbf{Soundness: }
	Fix any (possibly cheating) prover. Then for all 
	$x\in \{0,1\}^n$ we have
	$\sum_{p} \frac{\Pd_{XP}(x,p)}{p}
	 \leq 1+\eps+\delta$.
\end{theorem}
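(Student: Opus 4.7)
My plan is to follow the template of the \cite{HMX10} sampling protocol, adapted to the setting where the verifier cannot evaluate $\Pd$. The protocol has three phases: (i) the prover sends an approximate $(\eps,t)$-histogram $(\tilde h_0,\ldots,\tilde h_t)$ of $\Pd$ together with claimed bucket sizes $(\tilde s_0,\ldots,\tilde s_t)$, where $t = \lceil n/\eps \rceil$; (ii) the verifier picks an index $i\in(t)$ with probability $\tilde h_i$ (or aborts); (iii) the verifier and prover run a hash-based sub-protocol to sample a nearly uniform $x\in\cB_i$, where the verifier sends a fresh $3$-wise independent $h \inu \cH(n,m_i)$ with $2^{m_i}$ close to $\tilde s_i/K$ for a small constant $K$, and the prover responds with a distinguished $x$ together with its claimed probability $p$, as well as the full preimage $\{y\in\cB_i:h(y)=0^{m_i}\}$ so that the verifier can perform a set-size check. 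The verifier outputs $(x,p)$ iff $p\in\cA_i$ and all consistency checks pass.

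For \textbf{completeness}, when the prover is honest the histogram is exact. Combining Claim~\ref{claim:sizeBi} with the Hash Mixing Lemma part (iii) (Lemma~\ref{lemma:hashMixingLemma}), the preimage size concentrates around $|\cB_i|/2^{m_i}$, so the total output probability of $x\in \cB_i$ equals $h_i/|\cB_i|$ up to a multiplicative $(1\pm O(\eps))$ error. Using the bucket bounds $2^{-(i+1)\eps}\le h_i/|\cB_i| \le 2^{-i\eps}$ from Claim~\ref{claim:sizeBi} and $\Pd(x)\in\cA_i$, this matches $(1\pm\eps)\Pd(x)$ after rescaling $\eps$ by an absolute constant.

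The heart of the proof is the \textbf{soundness} argument. Assume w.l.o.g.\ the cheating prover is deterministic. Decomposing by bucket gives
\[
\sum_p \frac{\Pd_{XP}(x,p)}{p} \;=\; \sum_{i\in(t)}\sum_{p\in\cA_i}\frac{\Pd_{XP}(x,p)}{p} \;\le\; \sum_i 2^{(i+1)\eps}\cdot \Pd_{XP}(x,\cA_i).
\]
The set-lower-bound portion of the subprotocol forces the prover, whenever its bucket-$i$ branch delivers output with nonzero probability, to effectively commit to a candidate set $\cC_i$ of size at least $(1-\delta)\tilde s_i$; applying Lemma~\ref{lemma:hashMixingLemma}~(i) to $\cC_i$ then gives that each $x$ is delivered with conditional probability at most $(1+O(\delta))/|\cC_i|$. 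Summing over $x\in\cC_i$ shows the contribution of bucket $i$ to $\sum_{x,p}\Pd_{XP}(x,p)/p$ is at most $\tilde h_i\cdot 2^{(i+1)\eps}\cdot(1+O(\delta))$. Summing over $i$ and using $\sum_i \tilde h_i \le 1$ (enforced because the verifier uses the $\tilde h_i$ as sampling weights) yields $1+O(\eps)+O(\delta)$.

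The main obstacle will be driving the slack in the set-lower-bound check uniformly below $\delta$ across all $t = \Theta(n/\eps)$ buckets, while simultaneously guaranteeing that a cheating prover cannot game the hash-concentration bound. To get the stated constant $1+\eps+\delta$ I expect to repeat the hash-size test $\Theta(\log(1/\eps)/\delta)$ times in parallel and combine them via a Chernoff-type estimate (Lemma~\ref{lem:ChernoffBound}) applied to Lemma~\ref{lemma:hashMixingLemma}~(i). This repetition, together with the range size $2^{m_i} = \poly(n/\eps)$, accounts for the $\poly(n(1/\eps)^{1/\delta})$ bound on the verifier's running time, and a final accounting lemma absorbs the histogram-rounding error, the hash-concentration error, and the set-size-check error into the single expression $1+\eps+\delta$.
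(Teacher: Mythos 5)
The central gap is in the soundness argument, and it is twofold.

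\paragraph{(1) You bound the wrong quantity.}
The theorem's soundness clause is a per-$x$ guarantee: \emph{for every fixed $x$}, $\sum_p \Pd_{XP}(x,p)/p \le 1+\eps+\delta$. Your argument instead sums over $x$ (``Summing over $x\in\cC_i$ shows the contribution of bucket $i$ to $\sum_{x,p}\Pd_{XP}(x,p)/p$''), which bounds the aggregate $\sum_x\sum_p \Pd_{XP}(x,p)/p$. That is strictly weaker; the aggregate can be $O(1)$ even when some individual $x$ has $\sum_p\Pd_{XP}(x,p)/p = \Theta(t)$.

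\paragraph{(2) Your protocol is in fact vulnerable to exactly that attack.}
Because your verifier samples one bucket $i$ at a time and the verifier cannot test membership in $\cB_i$ (it has no access to $\Pd$), nothing stops a cheating prover from placing a single favorite element $x$ into its candidate set $\cC_i$ for \emph{every} bucket $i$. For each $i$ the contribution is
\[
\frac{\Pd_{XP}(x,\,\approx 2^{-i\eps})}{2^{-i\eps}}
\approx \frac{\tilde h_i/|\cC_i|}{2^{-i\eps}}
\approx \frac{\tilde h_i}{\tilde h_i 2^{i\eps}\cdot 2^{-i\eps}} = 1,
\]
so summing over buckets gives $\sum_p \Pd_{XP}(x,p)/p = \Omega(t) = \Omega(n/\eps)$ for that $x$, blowing up the claimed bound. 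No amount of Chernoff-style repetition of the hash-size test fixes this, because the prover is not cheating on set sizes — it is cheating by re-using $x$ across disjoint branches that your analysis never couples together.

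This is precisely what the paper's interval/gap structure is for. The verifier partitions $[0,t]$ into intervals $\cI_k$ of length $\Theta\bigl(\tfrac{\log(1/\eps)}{\eps\delta}\bigr)$ separated by gaps of length $\Theta\bigl(\tfrac{\log(1/\eps)}{\eps}\bigr)$, samples a single interval, and receives \emph{all} sets $\{\cX_j\}_{j\in\cI_k}$ at once, checking that they are pairwise disjoint. Disjointness caps the conditional probability mass for $x$ inside one interval at roughly $1$ (Claim~\ref{claim:rsumtoone}). A random shift of the gap positions then guarantees that, for any fixed $x$, the ``medium'' buckets (those $j$ with $2^{-j\eps}\approx\Pd_X(x)$) typically all lie inside one interval. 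Buckets with much larger $p$ contribute at most $O(\eps)$ because $\sum_p\Pd_{XP}(x,p)\le\Pd_X(x)$, and buckets with much smaller $p$ are absorbed into the low-probability $\Bad$ event that the precise version of the theorem (Theorem~\ref{thm:main}) conditions on. You would need to add this interval/gap/shift machinery and replace your aggregated accounting by a per-$x$ decomposition into small/medium/large $p$ to make the soundness claim hold.

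As a secondary point, the $(1/\eps)^{1/\delta}$ factor in the running time comes from the interval length (and hence from the size of the sets $\cX_j$ the prover must send), not from parallel repetition of the hash test; no Chernoff-style amplification is used in the paper's proof.
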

The soundness condition may not be very intuitive at first sight. We therefore discuss it in detail below. 

To keep the discussion simple, the above theorem statement is only almost true: in fact, the completeness only holds with probability $1-\eps$ over the choice of $x$ from $\Pd$, and the soundness condition only holds if we condition on good protocol executions (where an execution is bad with probability at most $\eps$). We omit this here, and refer to Theorem~\ref{thm:main} for the exact statement.

We will prove soundness only for deterministic provers, but the same statement holds in case the prover is probabilistic: this follows easily by conditioning on the prover's random choices, and applying the result for the deterministic prover.\footnote{
	We remark that our definition of interactive protocols (Definition~\ref{def:InteractiveProtocol}) does not allow the prover to use randomness. 
	This is because when considering decision problems, the prover can always be assumed to be deterministic, as explained in Section~\ref{sec:InteractiveProofs}. 
	However, since our theorem does not consider a decision problem, and the sampling protocol might be used as a subprotocol of some other protocol, we do not assume that the prover is deterministic. 
}

In the sampling protocol of \cite{HMX10} (as described above) the verifier gets access to the distribution $\Pd$, in the sense that a circuit computing a function $f$ is provided as input to the protocol, where $\Pd = f(\Pd_U)$ and $\Pd_U$ is the uniform distribution. During the execution of the sampling protocol, the function $f$ needs to be evaluated many times. In contrast, in our protocol the verifier does not get access to the distribution, and never needs to evaluate such a circuit.

We note that the verifier runs in polynomial time for any polynomially small $\eps$ and constant $\delta$. It is an interesting open problem if it is possible to improve the protocol to allow both polynomially small $\delta$ and $\eps$ and an efficient verifier. 

\paragraph{Discussion of the soundness condition.}
The main motivation for the soundness condition is that it is actually sufficient for applying the sampling protocol to obtain the private-coin to public-coin transformation in Section~\ref{chap:PrivCoinsPubCoins}. 

However, there are several remarks we would like to discuss. We first give an overview of these remarks, and then discuss them in detail below. 
\begin{itemize}
	\item \textbf{Remark~1:} Fix any (possibly cheating) prover and recall property~2 as described in the introduction: If the verifier outputs $(x,p)$, then $p \approx \Pd_X(x)$. One may hope to give a protocol that satisfies this property. However, we show below that this cannot be achieved in our setting.
	\item \textbf{Remark~2:} It is possible to interpret the soundness condition as follows: The situation $\Pd_X(x) \gg p$ cannot occur too often. In that sense, the protocol provides an upper bound on $\Pd_X(x)$ ``on average''. 
	\item \textbf{Remark~3:} Assume the completeness condition is satisfied. Then the soundness condition actually holds if the prover behaves like a convex combination of honest provers, i.e.~it first chooses a distribution $\Pd$ from a set of distributions, and then behaves honestly for $\Pd$. 
	\item \textbf{Remark~4:} 
	It is natural to ask if the converse of the statement in remark~3 is also true. That is, we ask the following. Suppose we have some protocol that satisfies both our completeness and soundness conditions. Fix any prover and consider the verifier's output distribution $\Pd_{XP}$. 
	Is there a convex combination of probability distributions, such that if the prover first chooses $\Pd$ and then behaves honestly for $\Pd$, the verifier's output distribution equals $\Pd_{XP}$? 	
	We show below that this is not the case in general. 
\end{itemize}
Proving the soundness condition that any cheating prover can be seen as a convex combination of honest provers would imply that the protocol is optimal, since for any protocol a probabilistic prover actually \textit{can} first choose some distribution and then behave honestly for it. Remark~4 implies that our soundness condition does not imply this in general.

\paragraph{Remark~1: Property~2 cannot be achieved.}
 Unfortunately, property~2 is impossible to achieve, and we now describe a strategy for a dishonest (probabilistic) prover that shows this. Let $\Pd_0$ be the uniform distribution on $\{0^n, 1^n\}$ and $\Pd_1$ the distribution that outputs $0^n$ with probability $1$. 
The prover chooses $b\inu \{0,1\}$ uniformly at random, and then behaves like the honest prover for $\Pd_b$ (i.e.~conditioned on $b$, the completeness condition is satisfied for $\Pd_b$). We find that $\Pd_X(0^n) \approx 3/4$ and $\Pd_X(1^n) \approx 1/4$. 
But still we have for both $x\in \{0^n,1^n\}$ that
$\Pd_{XP}(x,1/2) \approx 1/4$. Thus, if the verifier outputs $(x,p)$, both variants $\Pd_X(x) \gg p$ and $\Pd_X(x) \ll p$
are possible.\footnote{
	To be very explicit, we have 
	$\Pd_{XP}(0^n,1/2) = 1/4$, and 
	$\Pd_{X}(0^n) =3/4 \gg 1/2$. On the other hand, 
	we find 
	$\Pd_{XP}(1^n,1/2) = 1/4$, and
	$\Pd_{X}(1^n) =1/4 \ll 1/2$. 
}

\paragraph{Remark~2: The situation $\Pd_X(x) \gg p$ cannot occur too often.}
We cannot hope to avoid $\Pd_X(x) \ll p$, since the prover can always choose to make the verifier reject with, say, probability $1/2$, and behave honestly for a given distribution $\Pd$ otherwise. In that case for any $x$ we have $\Pd_X(x) = \Pd_{XP}(x,\Pd(x)) \approx \Pd(x)/2 \ll \Pd(x)$. 

We now consider the case $\Pd_X(x) \gg p$. An interpretation of our soundness condition is that the situation $\Pd_X(x) \gg p$ cannot occur too often, and we now explain this.
For any fixed $x$ we find
\begin{align*}
	\sum_{p} \frac{\Pd_{XP}(x,p)}{p}
	= \sum_{p} \Pd_{P\mid X}(p,x) \frac{\Pd_X(x)}{p} 
	=\Exp_{p\leftarrow \Pd_{P\mid X=x}}
		\left[\frac{\Pd_X(x)}{p}\right]
\end{align*}
and thus the soundness condition is equivalent to 
\begin{align*}
	\Exp_{p\leftarrow \Pd_{P\mid X=x}}
		\left[\frac{\Pd_X(x)}{p}\right] \leq 1+\eps+\delta.
\end{align*}
Indeed, an interpretation of this statement is that for any $x$, on average, $P$ is not much smaller than $\Pd_X(x)$. In this sense, $p$ provides an upper bound on $\Pd_X(x)$ \textit{on average}. 

\paragraph{Remark~3: The soundness condition holds for a convex combination of provers.}
We proceed to argue that the soundness condition actually holds in case the prover first chooses a distribution $\Pd$ from a set of distributions and then behaves honestly for $\Pd$. First observe that for the prover strategy using $\Pd_0$ and $\Pd_1$ as described above (see remark~1), our soundness property actually holds for $0^n$: 
$$ 
\sum_{p} \frac{\Pd_{XP}(0^n,p)}{p}
 \approx \frac{1/4}{1/2} + \frac{1/2}{1} = 1
$$
And indeed, this property holds for any $x\in \{0,1\}^n$ and any strategy where the prover first chooses some distribution $\Pd_i$ with probability $q_i$, and then behaves honestly for $\Pd_i$. Let $I$ be the random variable over the choice of $i$ indexing the distributions $\Pd_i$. We find
\begin{align*}
	\sum_{p} \frac{\Pd_{XP}(x,p)}{p}
	&= \sum_{p} \sum_i 
			\frac{q_i \Pr[(X,P)=(x,p) | I=i]}{p} \\
	&= \sum_i q_i \sum_{p} 
			\frac{\Pr[(X,P)=(x,p) | I=i]}{p} \\
	&=	\sum_{i: \Pd_i(x)>0} q_i \frac{\Pr[(X,P)=(x,\Pd_i(x)) | I=i]}
		{\Pd_i(x)}
	\approx 
		\sum_{i: \Pd_i(x)>0} q_i \leq 1.
\end{align*}
If the prover decides to make the verifier reject with some probability, then the above sum decreases. 

\paragraph{Remark 4: Soundness does not necessarily imply a convex combination of provers.}
Consider the following distribution $\Pd_{XP}$: $\Pd_{XP}(x_1, 1/2)=1/4$, $\Pd_{XP}(x_1, 1/4)=1/8$, $\Pd_{XP}(x_2, 3/4)=1/2$, and $\Pd_{XP}(x_2, 3/8)=1/8$. Clearly $\Pd_{XP}$ is a probability distribution, and the soundness condition is satisfied, i.e.~for $x\in \{x_1, x_2\}$ we have $\sum_p \frac{\Pd_{XP}(x,p)}{p}=1$. 
Then the prover who first chooses a probability distribution and then behaves honestly for it must choose with positive probability a distribution $\Pd_i$ with $\Pd_i(x_1)=1/2$. (Otherwise, $(x_1, 1/2)$ would have probability zero). Then $\Pd_i(x_2)=1/2$, and thus outputting $(x_2, 1/2)$ must also have positive probability, but $\Pd_{XP}(x_2,1/2)=0$. 

\subsection{Technical Overview}
\label{sec:TechnicalOverviewSampling}
We informally describe a simplified version of the sampling protocol of Theorem~\ref{thm:SamplingInformal}, and sketch how correctness and soundness can be proved. Then we discuss how to get rid of the simplifying assumptions. 

\paragraph{Histograms.} 
Let $\Pd$ be a distribution over $\{0,1\}^n$, and consider
the $(1,n)$-histogram of $\Pd$ according to Definition~\ref{def:histogram}. That is, we let $h = (h_0, \ldots, h_n)$, where $h_i := \Pr_{y\leftarrow \Pd}[y \in \cB_i]$ for $\cB_i := \left\{x: \Pd(x) \in (2^{-(i+1)}, 2^{-i}]\right\}$. For simplicity we assume that for all $x$ either $\Pd(x)=0$ or $\Pd(x)\geq 2^{-n}$, which implies that $\sum_{i\in (n)} h_i = 1$. 

\paragraph{A simplified sampling protocol with an inefficient verifier.} 
If efficiency were not an issue, the honest prover could just send all pairs $(x,\Pd(x))$ to the verifier, who then outputs $(x,\Pd(x))$ with probability $\Pd(x)$. It is clear that this protocol achieves even stronger completeness and soundness guarantees than stated in our theorem. 
We now change this protocol, still leaving the verifier inefficient. But, using hashing, the verifier of this modified protocol can later be made efficient.

To describe the intuition, we make the following simplifications: we let the verifier output probabilities of the form $2^{-j}$, and when interacting with the honest prover, the verifier will output pairs $(x,p)$ where $p$ is a $2$-approximation of $\Pd(x)$. We will also make an assumption on $\Pd$, but it is easiest to state it while describing the protocol. 

\vspace{0.2cm}
\noindent\textit{The protocol.}
The honest prover sends the histogram $h$ of $\Pd$ to the verifier. The verifier splits the interval $[0, n]$ into intervals $\cJ_j$ of length $\log_2(n)$. We denote by $\cI_i:=\cJ_{2i}$ the even intervals, and we will call the odd intervals \textit{gaps}. For simplicity we assume $\log_2(n)$ is an even integer, and that $\sum_k \sum_{i\in \cI_k} h_i = 1$, i.e. that $\Pd$ is such that $h$ has no probability mass in the gaps. Now the verifier selects an interval $\cI_k$ at random, where the probability of $\cI_k$ corresponds to its weight according to $h$, i.e.~the probability of $\cI_k$ is $w_k := \sum_{j\in \cI_k}h_j$. The prover sends sets $\cX_i$ for $i\in \cI_k$ to the verifier, where the honest prover lets $\cX_i = \cB_i$. The verifier checks that the $\cX_i$ are disjoint, and that $|\cX_i| \in 2^{\pm 1} h_i2^i$.\footnote{
	Actually, the verifier can check the stronger condition
	$|\cX_i| \in [1/2,1] h_i2^i$, but to simplify our
	statements we use	$2^{\pm 1}$. 
} It then randomly chooses one of the sets $\cX_j$, where $\cX_j$ has probability $\frac{h_j}{\sum_{i\in \cI_k} h_j}$. Finally, the verifier chooses a uniform random element $x$ from $\cX_j$, and outputs $(x,2^{-j})$. 

\vspace{0.2cm}
\noindent\textit{Completeness:} 
It is not hard to see that if the prover is honest, then for any $x$ and $j$ the following holds. If $x \in \cB_j$, then 
$\Pd_{XP}(x,2^{-j}) \in 2^{\pm 1}\Pd(x)$. Otherwise, $\Pd_{XP}(x,2^{-j})=0$. 

\vspace{0.2cm}
\noindent\textit{Soundness:}
We sketch a proof for the following soundness guarantee: for any (possibly dishonest) prover there is an event $\Bad$ such that 
$\Pr[\Bad]\leq 50/\sqrt{n}$, and for all $x$ we have
$\sum_p \frac{\Pr[(X,P)=(x,p)|\lnot \Bad]}{p} \leq 2+50/\sqrt{n}$. 

We let $k(j)$ be the function that outputs the interval of $j$ (i.e.~$k$ such that $j\in \cI_k$). We first observe that\footnote{
	In Eqn.~(\ref{eqn:SampOvw1}) we implicitly assume that the prover always sends disjoint sets $\cX_j$ that are of appropriate size. 
	Removing this assumption is a minor technicality that is dealt with in the full proof. 
}
\begin{align}
	\Pd_{XP}(x,2^{-j}) \in 2^{\pm 1} \cdot \Pr[x \in \cX_j | k=k(j)] \cdot 2^{-j},
	\label{eqn:SampOvw1}
\end{align}
where $\Pr[x \in \cX_j | k=k(j)]$ is the probability that the prover puts $x$ into $\cX_j$ given that $k(j)$ was chosen by the verifier. 
To see this, note that 
$
	\Pd_{XP}(x,2^{-j}) = \Pr[x \in \cX_j | k=k(j)]
								\cdot w_k
								\cdot \frac{h_j}{\sum_{i \in \cI_k}h_i}
								\cdot \frac{1}{|\cX_j|}
$,
where $w_k$ is the probability of choosing $k$, the third factor is the probability of choosing $j$ given $k$ was chosen, and $1/|\cX_j|$ is the probability of choosing $x$, given $k,j$ were chosen and $x$ is in $\cX_j$. Indeed, by the definition of $w_k$ and using the verifier's check $|\cX_j| \in 2^{\pm 1} h_j2^j$, this implies (\ref{eqn:SampOvw1}).

To prove our soundness claims, we consider the following sets representing small, medium and large probabilities, respectively.
\begin{align*}
	\cS(x) &= \left\{j: 2^{-j} \leq \frac{\Pd_X(x)}{\sqrt{n}}\right\}
				 = \left\{j: j\geq \log_2(1/\Pd_{X}(x))+1/2\log_2(n)\right\},\\
	\cM(x) &= \left\{j: 2^{-j} > \frac{\Pd_{X}(x)}{\sqrt{n}}
								\land 
								2^{-j} < \sqrt{n}\Pd_{X}(x)\right\}\\
				 &= \left\{j: \log_2(1/\Pd_{X}(x))-1/2\log_2(n) 
								< j < 
								\log_2(1/\Pd_{X}(x))+1/2\log_2(n)\right\},\\
	\cL(x) &= \left\{j: 2^{-j} \geq \sqrt{n}\Pd_{X}(x)\}
				 = \{j: j \leq \log_2(1/\Pd_{X}(x))-1/2\log_2(n)\right\}.
\end{align*}
We define the event $\Bad$ to occur if the verifier outputs a probability that is much too small, i.e.~it outputs some $(x,2^{-j})$ where $j\in \cS(x)$. We find
\begin{align*}
	\Pr[\Bad] &= \sum_x\sum_{j\in \cS(x)} \Pd_{XP}(x,2^{-j})
		\stackrel{\text{(\ref{eqn:SampOvw1})}}{\leq }
			\sum_x\sum_{j\in \cS(x)}
				2 \cdot \Pr[x \in \cX_j | k=k(j)] \cdot 2^{-j} \\
		&\leq 
			2 \sum_x\sum_{j\in \cS(x)}
				 2^{-j}
		\leq 
			2\sum_x \frac{\Pd_{X}(x)}{\sqrt{n}} 
			\sum_{i=0}^{\infty}\frac{1}{2^i}
		\leq 
			\frac{4}{\sqrt{n}},
\end{align*}
where the third inequality follows by definition of $\cS$. 

To prove the second soundness claim, we find
$
	\sum_p \frac{\Pr[(X,P)=(x,p)\land \lnot \Bad]}{p}
	=
		\sum_{j \in \cM(x)} 
		\frac{\Pd_{XP}(x,2^{-j})}{2^{-j}}
	+
		\sum_{j \in \cL(x)}
		\frac{\Pd_{XP}(x,2^{-j})}{2^{-j}} 
$.
Now 
\begin{align*}
		&\sum_{j \in \cL(x)}
			\frac{\Pd_{XP}(x,2^{-j})}{2^{-j}}
			\leq
			\frac{1}{\sqrt{n}\Pd_{X}(x)} 
				\underbrace{\sum_{j \in \cL(x)}
				\Pd_{XP}(x,2^{-j})}_{\leq \Pd_{X}(x)}
			\leq \frac{1}{\sqrt{n}},		\\
		&\sum_{j \in \cM(x)}
			\frac{\Pd_{XP}(x,2^{-j})}{2^{-j}}
		\stackrel{\text{(\ref{eqn:SampOvw1})}}{\leq }
			2 \sum_{j \in \cM(x)} \Pr[x \in \cX_j | k=k(j)]
		\leq 2,
\end{align*}
where the last inequality again follows since the prover must send disjoint sets $\cX_i$, and by definition of $\cM(x)$ we have $|\cM(x)| < \log_2(n)$ and thus $\cM(x) \cap \cI_k$ is non-empty for at most one $\cI_k$. (This is one reason for defining the intervals, gaps and the sets $\cS, \cM, \cL$ as we did!) Thus we have 
$
	\sum_p \frac{\Pr[(X,P)=(x,p)| \lnot \Bad]}{p}
	=
	\frac{1}{\Pr[\lnot \Bad]}
		\sum_p \frac{\Pr[(X,P)=(x,p)\land \lnot \Bad]}{p}
	\leq (1+8/\sqrt{n})(2+1/\sqrt{n}) \leq 2+17/\sqrt{n}$.

\paragraph{Making the verifier efficient.}
In the protocol as sketched above, it is possible to make the verifier run in time $\poly(n)$. To achieve this, the prover does not send the whole sets $\cB_j$, but instead the verifier chooses a $3$-wise independent hash function 
$f$ mapping $n$ bits to $m$ bits, where 
$m = \log_2(\sum_{i\in \cI_k} 2^i h_i) + c \log_2(n)$.
The hash mixing lemma (as given in Section~\ref{sec:HashFunctions}) ensures that there are close to $n^c$ many elements $x\in \bigcup_{i\in \cI_k}\cB_i$ with $f(x)=0^m$.\footnote{
	We use $3$-wise (and not $2$-wise) independent hashing because in the analysis we need to argue that $\{x': x' \in \bigcup_{i\in \cI_k}\cB_i \land f(x')=0^m\}$ is of size close to $n^c$ even conditioned on the event $f(x)=0$ for a fixed $x$. 
} 
The prover now only sends these elements, and the verifier checks that the set sizes are appropriate. It is possible to still achieve the same soundness and completeness guarantees, and we remark that we specifically designed the protocol so that this is possible.

\paragraph{Using a more accurate histogram.}
Using the basis $2^{\eps}$ instead of $2$ for the histogram, where $\eps$ is polynomially small allows to improve the protocol as follows. For completeness, there exists a constant $c$ such that each $x$ is output with probability $2^{\pm c\eps} \Pd(x)$. For soundness, there exist constants $c$ and $d$ such that $\Pr[\Bad] \leq c\eps$ and
$\sum_p \frac{\Pr[(X,P)=(x,p)|\lnot \Bad]}{p} \leq 1+d\eps$. 

\paragraph{Handling general distributions.}
Above we made the simplifying assumption that there is no probability mass in the gaps. That is, we assumed $\sum_k\sum_{i\in \cI_k}h_i = 1$. This assumption clearly does not hold in general. If we analyse the given protocol assuming that 
$\sum_k\sum_{i\in \cI_k}h_i = w$ for some $w \in [0,1]$, the completeness and soundness guarantees we get change as follows:

\vspace{0.2cm}
\noindent\textit{Completeness:} There exists a constant $c$ such that for any $x$ and $j$ the following holds. If $x\in \cB_j$ and $j\in \bigcup_i \cI_i$, then $\Pd_{XP}(x,2^{-j}) \in \frac{2^{\pm c\eps}\Pd(x)}{w}$. Otherwise, 
$\Pd_{XP}(x,2^{-j})= 0$. 

\vspace{0.2cm}
\noindent\textit{Soundness:}
There exist constants $c$ and $d$ such that the following holds. 
For any (possibly dishonest) prover there is an event $\Bad$ such that $\Pr[\Bad]\leq c\eps$, and for all $x$ we have $\sum_p \frac{\Pr[(X,P)=(x,p)|\lnot \Bad]}{p} \leq \frac{1+d\eps}{w}$. 

\vspace{0.2cm}
\noindent Note that by averaging, the verifier can place the gaps such that $w > \frac{1}{2}$.

For the final protocol, we want that any $x$ has a chance of being output. For this, the verifier considers both possible gap placements, i.e.~it either switches the intervals and gaps, or not by choosing a random bit $s\in \{0,1\}$, where each gap placement is chosen with probability $w(s)$, which is the weight inside the intervals. For each fixed choice of 
$s$ the above analysis applies. This gives that each element is output with the correct probability in case the prover is honest. However, one only obtains the soundness guarantee $\sum_p \frac{\Pr[(X,P)=(x,p)|\lnot \Bad]}{p} \leq 2(1+d\eps)$.

This factor of $2$ can be decreased to $(1+\delta)$ by making the gaps smaller, such that the gap size equals $\delta$ times the interval size. Unfortunately there is a tradeoff: 
If the gap size is $o(\log(n))$, then the event $\Bad$ no longer has polynomially small probability. Furthermore, if we fix the gap size to $\Theta(\log(n))$, and set $\delta = o(1)$, then the verifier is no longer efficient because the prover needs to send too many elements. For the exact tradeoff, we refer to Theorem~\ref{thm:main}.

\paragraph{Outputting the exact probabilities.}
To improve the completeness such that the exact probability $\Pd(x)$ (instead of $2^{-j}$) is output, the verifier simply asks the prover to send the correct probability $\Pd(x)\in (2^{-(j+1)}, 2^{-j}]$ in the end. 

\subsection{Theorem Statement}
\label{sec:SamplingTheoremStatement}
We now state our main theorem. For an informal discussion of the theorem statement and a proof sketch we refer to Section~\ref{sec:TechnicalOverviewSampling}. In our sampling protocol, the verifier will output pairs $(x,p) \in \{0,1\}^n \times (0,1]$. For a fixed prover, we let $(X,P)$ be the random variables corresponding to the verifier's output, and we denote their joint distribution by $\Pd_{XP}$. Also, $\Pd_X$ is defined by $\Pd_X(x)=\sum_p \Pd_{XP}(x,p)$. 
\begin{theorem}[The Sampling Protocol]
	\label{thm:main}
	There exists a constant-round public-coin interactive
	protocol such that the following holds.
	The verifier and the prover take as input 
	$n \in \mathbb{N}, \eps, \delta \in (0,1)$, 
	and the prover additionally gets as input a probability
	distribution $\Pd$ over $\{0,1\}^n$.
	
	The verifier runs in time
	$\poly\left(n \left(\frac{1}{\eps}\right)^{1/\delta}\right)$
	and either outputs $(x,p) \in \{0,1\}^n	\times (0,1]$, or
	rejects. Furthermore, we have:
	
	\vspace{0.2cm}	
	\noindent\textbf{Completeness: }
	Suppose the prover is honest. 
	Then the verifier rejects with probability at most $\eps$.
	Furthermore, there exists a set $\cM \subseteq \{0,1\}^n$ with
	$\Pd(\cM) \geq 1-\eps$ such that	for all $x\in \{0,1\}^n$ we have
	\begin{align*}	
		\Pd_{XP}(x,p)
			\begin{cases}
				\in (1\pm \eps)\cdot \Pd(x)
											& \text{if } p = \Pd(x) \land x \in \cM \\
				= 0					& \text{otherwise.} 
		\end{cases}
	\end{align*}

	\noindent\textbf{Soundness: }
	Fix any deterministic prover. Then there is an event $\Bad$ such that 
	the following holds:
	\begin{align*}
		\text{(i)}  \quad &\Pr[\Bad]	\leq \eps, \\
		\text{(ii)} \quad &\text{For all $x \in \{0,1\}^n$ we have} 
					\sum_{p\in (0,1]} \frac{\Pr[(X,P)=(x,p)|\lnot \Bad]}{p} \leq 1+\eps +\delta.
	\end{align*}
\end{theorem}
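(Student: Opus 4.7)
The plan is to formalize the protocol and analysis sketched in Section~\ref{sec:TechnicalOverviewSampling}, with a careful choice of parameters. I would fix a polynomially small granularity $\teps$, consider the $(\teps,\lceil n/\teps\rceil)$-histogram of $\Pd$, and partition the index range into alternating intervals of length $L := 1/\teps$ and gaps of length $\delta L$. After the prover commits to the histogram, the verifier would pick a gap placement $s \in \{0,1\}$ and an interval $\cI_k$ with probabilities proportional to their histogram mass, then a $3$-wise independent hash $f$ to $m = \log_2\!\bigl(\sum_{i\in\cI_k} h_i 2^{i\teps}\bigr) + c\log_2 n$ bits; the prover replies with the hashed slices $\cX_i = \{x \in \cB_i : f(x) = 0^m\}$, the verifier checks disjointness and the expected sizes, samples $j \in \cI_k$ with probability $h_j / \sum_{i\in\cI_k} h_i$ and $x$ uniformly from $\cX_j$, and finally asks the prover for the exact $\Pd(x) \in (2^{-(j+1)\teps}, 2^{-j\teps}]$.

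\paragraph{Completeness.}
For an honest prover I would first apply Lemma~\ref{lemma:hashMixingLemma}(iii) and a union bound over the $O(L)$ buckets in $\cI_k$ to argue that the verifier's size checks succeed with probability $1 - O(\teps)$. Combined with Claim~\ref{claim:sizeBi}, the chain of sampling probabilities then yields $\Pd_{XP}(x,\Pd(x)) = (1 \pm O(\teps))\Pd(x)$ for every $x$ whose bucket lies in a non-gap under the chosen $s$. Taking $\cM$ to be the set of $x$ that avoids a gap for the chosen placement and using that averaging over $s$ weighted by $w(s)$ covers all but an $O(\teps)$-fraction of the probability line gives $\Pd(\cM) \geq 1 - \eps$; combined with the earlier $O(\teps)$ rejection bound, this closes the completeness clause once $\teps$ is taken as a sufficiently small polynomial factor times $\eps$.

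\paragraph{Soundness.}
Fix a deterministic prover and let $\cX_j$ denote what it would send under the induced choices $k(j), s(j)$. By Lemma~\ref{lemma:hashMixingLemma}(ii) and a union bound, with probability $1-O(\teps)$ over $f$ the verifier's acceptance forces $|\cX_j| = (1\pm\teps) h_j 2^{j\teps} / 2^m$, which yields the key estimate $\Pd_{XP}(x, 2^{-j\teps}) \in 2^{\pm O(\teps)} \cdot \Pr[x \in \cX_j \mid k(j), s(j)] \cdot 2^{-j\teps}$ recorded in equation~(\ref{eqn:SampOvw1}). I would then partition the probabilities $p = 2^{-j\teps}$ into $\cS(x),\cM(x),\cL(x)$ with cutoffs $\Pd_X(x) \cdot 2^{\pm \delta L \teps / 2}$ tuned so that $\cM(x)$ fits inside a single interval for each $s$, define $\Bad$ as the event ``$f$ is atypical for some chosen bucket, or the output $(x,p)$ has $p \in \cS(x)$'', and bound $\Pr[\Bad] \leq \eps$ exactly as in the overview. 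The $\cL$-contribution to $\sum_p \Pd_{XP}(x,p)/p$ is bounded by the probability cutoff, the $\cM$-contribution is bounded using disjointness of the $\cX_j$'s within a single interval, and averaging over $s$ weighted by $w(s)$ inflates the total by only a factor $1+\delta$, giving the claimed $\leq 1+\eps+\delta$.

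\paragraph{Main obstacle.}
The principal difficulty is the joint parameter balance: $\teps$ must be small enough to absorb the histogram grain, hashing slack, and all union bounds into $\eps$; the $\cM(x)$ cutoff must be tight enough to fit inside one interval and loose enough to keep the $\cS$-mass below $\eps$; and the hashed slice sizes $\approx n^c$ must be large enough for the Hash Mixing Lemma to concentrate while keeping the prover's message polynomial in $1/\eps$. These three constraints interact through the gap-to-interval ratio $\delta$ and together dictate the $(1/\eps)^{1/\delta}$ factor in the verifier's running time. The secondary subtlety is the two-placement average: one has to verify that weighting by $w(s)$ simultaneously (i) gives the correct marginal $(1\pm\eps)\Pd(x)$ in completeness, and (ii) loses only an additive $\delta$ rather than a factor $2$ in the soundness sum, which is precisely what motivates choosing the gap length as $\delta L$.
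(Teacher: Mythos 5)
Your plan matches the paper's architecture almost everywhere — the $(\teps,t)$-histogram, the interval/gap decomposition, the $3$-wise independent hash and Hash Mixing Lemma (parts (ii)/(iii) for completeness, and part (i) together with a conditioning argument for soundness), the verifier's size checks, the small/medium/large ($\cS,\cM,\cL$) classification of output probabilities, and the final request for the exact $\Pd(x)$ — so in most respects it is the same proof.

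There is, however, one structural choice that does not work as stated: you take the shift parameter $s$ to range over $\{0,1\}$, i.e.\ only two gap placements. In the paper the set $\cS$ of shifts has size $\Theta(1/\delta)$, chosen so that the gap regions of the different shifts (nearly) tile the index range $(t)$; this \emph{partition} property (Claim~\ref{claim:setS}~(i)) is the engine behind both clauses of Theorem~\ref{thm:main}. For soundness, the sum over $s$ (with each shift chosen with probability $w(s)/\sum_{s'}w(s')$) inflates the bound $\frac{1+O(\eps)}{w(s)}$ by exactly $\frac{|\cS|}{\sum_s w(s)}$; the partition structure forces $\sum_s w(s) = (|\cS|-1)\sum_i h_i$, so this ratio is $\approx \frac{|\cS|}{|\cS|-1} \leq 1 + O(\delta)$. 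With only two placements, $\sum_s w(s)$ can be as low as $1$ (e.g.\ half the histogram mass in each placement's gap region), and the factor $\frac{2}{w(0)+w(1)}$ reaches $2$ no matter how small you make the gap-to-interval ratio $\delta$; shrinking the gaps does not by itself remove the factor of $2$. The same problem kills completeness: the marginal $\Pd_{XP}(x,\Pd(x))$ equals $(1\pm O(\eps))\cdot\frac{\text{(\# shifts avoiding a gap at }x)}{\sum_s w(s)}\Pd(x)$, and with two placements the numerator is $1$ for some $x$ and $2$ for others, so no single normalization of $\sum_s w(s)$ can make all these marginals lie in $(1\pm\eps)\Pd(x)$. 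The fix is exactly the paper's: a family of $\lceil\Isize/\Gsize\rceil \approx 1/\delta$ shifts whose gaps tile the index line, so that every position is in a gap under (almost) exactly one shift. Your stated motivation for the gap length $\delta L$ is correct, but it only produces the $1+\delta$ savings once the number of placements is scaled to match.

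As a minor note, $m = \log_2\!\bigl(\sum_{i\in\cI_k}h_i 2^{i\teps}\bigr) + c\log_2 n$ has the sign of the correction term flipped (this matches a typo in Section~\ref{sec:TechnicalOverviewSampling}); you want to \emph{subtract} $\Theta(\log n)$ (more precisely the quantity the paper calls $\SampGap$) so that the hashed slices $\cX_i$ have $\poly(n,1/\eps)$ expected size rather than vanishing size.
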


We give a protocol for the case 
$\bigl(\frac{9000}{\eps}\bigr)^{16/\delta}\leq 2^{n/50}$. In case this inequality
does not hold, we only need to give a protocol that runs in time $\poly(2^n)$. The protocol where 
the honest prover sends 
$\{(x, \Pd(x)): x\in \{0,1\}^n\}$,
and the verifier outputs $(x,p)$ with probability $p$ gives 
stronger guarantees than what is required by the theorem, 
and clearly runs within the required time bound.

\subsection{The Protocol}
\label{sec:SamplingProtocolDescription}

\paragraph{Definitions and protocol parameters.}
Let $\cH(n,m)$ be a family of $3$-wise independent hash functions mapping $n$ bits to $m$ bits.
Given a finite set $\cJ \subseteq \mathbb{N}$, and a set 
$\{ (w_i,i) \}_{i\in \cJ}$ where $w_i \in [0,1]$, let
$\cW(\{ (w_i,i) \}_{i\in \cJ})$ be the distribution that outputs $i$ with probability $\frac{w_i}{\sum_{j\in \cJ}w_j}$.

Given an input $(n,\eps',\delta',\Pd)$, we define $\eps:=\eps'/9000$ and $\delta:=\delta'/16$. Our assumption becomes $\bigl(\frac{1}{\eps}\bigr)^{1/\delta}\leq 2^{n/50}$. We let $t:=\left \lceil \frac{2n}{\eps} \right\rceil$. 
For $i \in (t)$, let $\cA_i$ and $\cB_i$ be defined as in 
Definition~\ref{def:histogram} for $\Pd$. We let
\begin{align*}
	&\Gsize':= \frac{2}{\eps}\log_2\left(\frac{1}{\eps}\right),
	\quad 
	\Isize':= \frac{\Gsize'}{\delta},\\
	&\Gsize:=\left\lceil\Gsize'\right\rceil, 
	\quad
	\Isize:=\left\lceil\frac{\Isize'}{\Gsize}
		\right\rceil\Gsize, \\
	&\SampGap:=\log_2\Bigl(
				\frac{t \cdot
							\Isize' \cdot 2^{2\cdot\Isize'\eps}}
						 {\eps^4}
			\Bigr).
\end{align*}
We let $\cK:=\Bigl\{0, 1, \ldots,
				\bigl\lceil\frac{t}
				{\Isize+\Gsize}\bigr\rceil \Bigr\}$.
		For $i \in \cK\setminus \{0\}$ and shift 
		$s \in \cS := \{-1, \Gsize -1, 2\cdot \Gsize -1, \ldots, 
				(\frac{\Isize}{\Gsize}-1)\cdot \Gsize-1\}$ we 
define the intervals
		\begin{align*}
			\cI_0(s) 			:= \{&0,\ldots, s\} \\
			\cI_{i}(s)	:= \bigl\{&s + i \Gsize+ (i-1) 
											\Isize+1, \ldots, \\
								     &s + i(\Gsize +
								     \Isize)\bigr\} \cap (t).
		\end{align*}
The regions between these intervals will be called \textit{gaps}. 
\begin{claim}
	\label{claim:ParamProperties}
	The following properties hold:
	\begin{enumerate}[(i)]
	\item $\Isize,\Gsize,\frac{\Isize}{\Gsize} \in \mathbb{N}$ and $\Gsize, \Isize \geq 1$. 
	\item For all $i$ and $s$ we have $|\cI_i(s)| \leq 
				\Isize$. The gaps have size exactly $\Gsize$. 				
	\item $\Isize' \leq \Isize \leq 2\cdot \Isize'$, and thus
				$2^{\SampGap}$ is contained in the interval
			\begin{align*}
				\left[\frac{t \cdot \Isize \cdot 2^{\Isize\cdot \eps}}
						 {2\eps^4},
					\frac{t \cdot \Isize \cdot 
							2^{2\cdot \Isize \cdot \eps}}{\eps^4}\right]
			\end{align*}
		\item There are at least $25$ intervals and gaps: 
					$|\cK| \geq 25$.
	\end{enumerate}
\end{claim}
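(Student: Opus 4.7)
The plan is to unwind the definitions, verifying each of the four parts in turn; all four are essentially direct calculations, and the only place where the global assumption $(1/\eps)^{1/\delta} \leq 2^{n/50}$ enters is part~(iv), which is where I expect the main care to be needed.

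For part~(i), integrality is immediate from the definitions: $\Gsize = \lceil \Gsize'\rceil \in \mathbb{N}$, $\Isize = \lceil \Isize'/\Gsize\rceil \cdot \Gsize$ is a non-negative integer multiple of $\Gsize$, and $\Isize/\Gsize = \lceil \Isize'/\Gsize\rceil \in \mathbb{N}$. Positivity follows because $\eps \in (0,1)$ forces $\Gsize' = (2/\eps)\log_2(1/\eps) > 0$ and hence $\Gsize \geq 1$, while $\Isize'/\Gsize > 0$ gives $\Isize \geq \Gsize \geq 1$. For part~(ii), the set whose intersection with $(t)$ defines $\cI_i(s)$ has endpoints separated by exactly $\Isize - 1$, so it contains exactly $\Isize$ integers; intersecting with $(t)$ can only shrink it. The gap between $\cI_i(s)$ and $\cI_{i+1}(s)$ is $\{s+i(\Gsize+\Isize)+1,\ldots,s+(i+1)\Gsize+i\Isize\}$, of size exactly $\Gsize$, and similarly the gap preceding $\cI_1(s)$ is $\{s+1,\ldots,s+\Gsize\}$.

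For part~(iii), rounding up gives $\Isize = \lceil \Isize'/\Gsize\rceil\cdot \Gsize \geq \Isize'$ directly. For the reverse inequality, $\Isize \leq \Isize' + \Gsize$; since $\eps < 1/9000$ forces $\Gsize' \geq 1$, we have $\Gsize \leq \Gsize' + 1 \leq 2\Gsize'$, and since $\delta \leq 1/16$ we get $2\Gsize' = 2\delta\Isize' \leq \Isize'/8$, so $\Isize \leq (9/8)\Isize' \leq 2\Isize'$. Once $\Isize' \leq \Isize \leq 2\Isize'$ is in hand, substituting these extremes into $2^{\SampGap} = t\Isize' 2^{2\Isize'\eps}/\eps^4$ yields the claimed two-sided bound: the upper bound uses $\Isize' \leq \Isize$ and $2\Isize'\eps \leq 2\Isize\eps$, and the lower bound uses $2\Isize' \geq \Isize$ together with $2\Isize'\eps \geq \Isize\eps$.

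Part~(iv) is where the assumption enters. It suffices to show $t/(\Isize+\Gsize) \geq 24$, for then $|\cK| = \lceil t/(\Isize+\Gsize)\rceil + 1 \geq 25$. Using the bounds $\Isize \leq (9/8)\Isize'$ and $\Gsize \leq \Isize'/8$ already established gives $\Isize + \Gsize \leq (5/4)\Isize'$; combining this with $\Isize' = (2/\eps)\log_2(1/\eps)/\delta$ and $t \geq 2n/\eps$ reduces the goal to the inequality $n\delta/\log_2(1/\eps) \geq 30$. The assumption $(1/\eps)^{1/\delta} \leq 2^{n/50}$ is precisely $n\delta/\log_2(1/\eps) \geq 50$, which beats $30$ with room to spare. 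The hard part, to the extent there is any, is just tracking these slackness factors carefully to confirm that the assumption is strong enough; the calculation itself is elementary.
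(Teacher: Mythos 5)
Your proposal is correct and takes essentially the same approach as the paper, which dispatches parts (i)--(iii) as immediate from the definitions and proves part (iv) from the assumption $(1/\eps)^{1/\delta}\leq 2^{n/50}$ via the same chain $|\cK| \geq t/(\Isize+\Gsize) \geq \Theta(t/\Isize')$, just with slightly different constants. You spell out (i)--(iii) more explicitly, but nothing diverges.
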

\begin{proof}
	Parts (i)-(iii) follow directly by definition. 
	To see (iv), note that 
	$\bigl(\frac{1}{\eps}\bigr)^{1/\delta}\leq 2^{n/50}$, 
	implies
	$
		\frac{\Isize}{2} 
		\leq \Isize' 
		= \frac{1}{\eps\delta}\log_2\left(\frac{1}{\eps}\right)
		\leq \frac{t}{100}
	$. Thus, since $\Isize \geq \Gsize$, we have that 
	$|\cK| \geq \frac{t}{\Isize+\Gsize} \geq 
	\frac{t}{2\Isize} \geq \frac{t}{4\Isize'}\geq 25$.
\end{proof}	

\paragraph{The Sampling Protocol.} On input $(n,\eps',\delta')$ to the verifier, and $(n,\eps',\delta',\Pd)$ to the prover:

\begin{itemize}			
	\item[] \textbf{Prover:} 
		Send an $(\eps, t)$-histogram $h$ to the verifier.
		
		If the prover is honest, it lets 
		$h$ be the $(\eps, t)$-histogram of $\Pd$.
	\item[] \textbf{Verifier:} 
		Reject if $\sum_{i\in (t)} h_{i} \notin [1-2^{-n},1]$. 
		Define 
		$\cN := \{j: h_j \geq \frac{\eps}{2t}\}$.
		
		For $i \in \cK$ let $w_{i}(s) := \sum_{j\in \cI_{i}(s)} h_j$, 
		and define $w(s):=\sum_{i \in \cK}w_i(s)$.
		
		Pick $s \inu \cW\bigl(\{(w(s),s)\}_{s\in \cS}\bigr)$.
		
		For all $i \in \cK$ define $\cI_{i} := \cI_{i}(s)$,
		$w_{i}:=w_{i}(s)$, $w := w(s)$ and 
		$\cI'_{i}:= \cI_i \cap \cN$.
		
		Pick $k \inu \cW\bigl( \{(w_i,i)\}_{i\in \cK} \bigr)$, and let
		\begin{align*}
			g:= \SampGap + 
							\Bigl(&\bigl(\log_2(\sum_{i\in \cI_k}2^{i\eps}
							h_i)-\SampGap\bigr) \\
							&- 
							\bigl\lfloor\log_2(\sum_{i\in \cI_k}2^{i\eps}
							h_i)-\SampGap \bigr\rfloor\Bigr)
		\end{align*}		
		Let $m:=\max\bigl\{0,\log_2\bigl(\sum_{i\in \cI_k}2^{i\eps}
						 h_i\bigr)-g\bigr\}$ 
		(By definition of $g$, $m$ is integer)
		
		Pick $f \inu \cH(n,m)$, and send $(s,k,f)$ to the prover.
	\item[] \textbf{Prover:} Send sets 
		$\{\cX_i\}_{i\in \cI'_k}$ to the verifier.
	
		If the prover is honest, it lets $ \cX_i := \{x \in \cB_i: f(x) = 0^m\}$.
	\item[] \textbf{Verifier:}
		If one of the following conditions does not hold, reject:
		\begin{align}
		  \text{(a)} \quad 
		  	&\forall i\in \cI'_k: \forall x \in \cX_i: f(x) = 0^m \nonumber \\
			\text{(b)} \quad
				&\text{If $m = 0$ then} \quad  
				\forall i\in \cI'_k: 
					2^{i\eps} h_i
					\leq |\cX_{i}| \leq
					2^{(i+1)\eps} h_i
				\nonumber	\\
				&\text{If $m > 0$ then} \quad
				\forall i\in \cI'_k: 
					2^{-\eps} 
						\frac{2^{g} 2^{i\eps} h_i}
								 {\sum_{j\in \cI_{k}}2^{j\eps}h_j}
					\leq |\cX_{i}| \leq
					2^{\eps} 
						\frac{2^{g} 2^{(i+1)\eps} h_i}
								 {\sum_{j\in \cI_{k}}2^{j\eps}h_j}
				\nonumber \\
			\text{(c)} \quad &\text{The sets $\cX_i$ are pairwise disjoint}\nonumber
		\end{align}
		Pick $j \inu \cW\bigl(\{(h_i, i)\}_{i\in \cI_k}\bigr)$.
		Reject if $j \notin \cI'_k$.
		Pick $x \inu \cX_j$, and 
		send	$(j,x)$ to the prover.			
\item[] \textbf{Prover:}
		Send $p$ to the verifier. 
		
		If the prover is honest, it sends $p:=\Pd(x)$.
		
\item[] \textbf{Verifier:}
		If $p \notin \cA_j$,
		let $p:= 2^{-j\eps}$.
		Output $(x,p)$.
\end{itemize}

\subsection{Analysis of the Protocol}
\label{sec:SamplingProtocolAnalysis}
Throughout this section we assume $\bigl(\frac{1}{\eps}\bigr)^{1/\delta}\leq 2^{n/50}$. 
For a fixed deterministic prover, we let $S, K, F, J, X, P$ be the random variables corresponding 
to $s,k,f,j,x,p$ in the protocol. 
For $s\in \cS$ we use the notation
$\Pd_{XP\mid S=s}(x,p)= \Pr[(X,P)=(x,p)|S=s]$ and  
$\Pd_{X\mid S=s}(x) = \Pr[X = x| S=s]$.
We will prove the following statements about the sampling protocol, which describe the guarantees we get conditioned on a fixed choice of the gaps, as given by shift $s$. 

\begin{lemma}[Running time]
	\label{lem:runningTime}
	For any deterministic prover the verifier runs in time 
	$\poly\left(n \left(\frac{1}{\eps}\right)^{1/\delta
	}\right)$.
\end{lemma}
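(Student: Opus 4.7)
The plan is to bound the bit length of every quantity the verifier handles and then observe that each step of the verifier's computation runs in time polynomial in those bounds. Write $B := n\cdot(1/\eps)^{1/\delta}$; the target bound is $\poly(B)$.

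First I would collect the parameter bounds. By definition $t = \lceil 2n/\eps\rceil$, and Claim~\ref{claim:ParamProperties} gives $\Isize \leq 2\Isize' = (4/(\eps\delta))\log_2(1/\eps)$, $\Gsize \leq \Isize$, and $|\cK| \leq t/\Gsize + 1$, all of which are $\poly(B)$. Since $\Isize'\cdot\eps = (2/\delta)\log_2(1/\eps)$, we get $2^{2\Isize'\eps} = (1/\eps)^{4/\delta}$, whence $2^{\SampGap} \leq t\cdot\Isize'\cdot(1/\eps)^{4/\delta}/\eps^4 = \poly(B)$. Finally $m \leq \log_2\bigl(\sum_{i\in\cI_k}2^{i\eps}h_i\bigr) \leq O(n)$, so sampling $f\inu\cH(n,m)$ and evaluating $f$ on $n$-bit inputs each take $\poly(B)$ time.

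Next I would bound the two prover messages. The histogram has $t+1 = \poly(n/\eps)$ entries, and the verifier can demand $\poly(B)$-bit precision for each entry, rejecting (or truncating) otherwise. For the second message, I claim that size check~(b) forces $|\cX_i| = O(2^{\SampGap}) = \poly(B)$ for every $i\in\cI'_k$. When $m=0$, the definition $m = \max\{0,\log_2(\sum_{j\in\cI_k}2^{j\eps}h_j)-g\}=0$ gives $\sum_{j\in\cI_k}2^{j\eps}h_j \leq 2^g$, so $|\cX_i| \leq 2^{(i+1)\eps}h_i \leq 2^{\eps}\cdot 2^g$ directly. When $m>0$, one uses $2^{i\eps}h_i \leq \sum_{j\in\cI_k}2^{j\eps}h_j$ to reduce the bound $2^{\eps}\cdot 2^g\cdot 2^{(i+1)\eps}h_i/\sum_j 2^{j\eps}h_j$ to $2^{2\eps}\cdot 2^g$. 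Since $|\cI'_k|\leq\Isize$ and each element has $n$ bits, the total length of the second message is $\poly(B)$, and the verifier can cap it a priori.

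With the size bounds settled, every remaining step is routine and runs in $\poly(B)$: computing $\cN$ and the weights $w_i(s), w(s)$ from $h$, sampling $s, k, j$ from the explicit discrete distributions $\cW(\cdot)$, sampling $f$ from $\cH(n,m)$, applying $f$ to each element of each $\cX_i$ for check~(a), verifying pairwise disjointness for check~(c), and checking $p\in\cA_j$ at the end. The main obstacle I expect is exactly the $|\cX_i|$ bound---verifying that the two cases of size check~(b) both collapse to $O(2^{\SampGap})$ via the inequality $2^{i\eps}h_i \leq \sum_{j\in\cI_k}2^{j\eps}h_j$---but once that is in place, the remainder is bookkeeping.
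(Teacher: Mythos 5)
Your proposal is correct and takes essentially the same approach as the paper: both reduce the running-time bound to showing that $t$ is polynomial and that size check~(b) forces each $|\cX_i|$ (and hence their total) to be $O(2^g) = \poly(n (1/\eps)^{1/\delta})$, handling the cases $m=0$ and $m>0$ via the same inequalities. The only cosmetic difference is that you bound each $|\cX_i|$ individually and multiply by $|\cI'_k|\leq\Isize$, while the paper bounds $\sum_{i\in\cI'_k}|\cX_i|$ directly; your write-up additionally spells out some routine bookkeeping (the bound on $m$, bit-precision of histogram entries) that the paper leaves implicit.
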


\begin{lemma}[Completeness for fixed $s$]
	\label{lem:complFixedS}
	Suppose the prover is honest.
	Then the verifier rejects with probability at most $25\eps$.
	Furthermore, there exists a set $\cM \subseteq \{0,1\}^n$
	with $\Pd(\cM) \geq 1-2\eps$ such that for every $s\in \cS$	
	there exists a set $\cM_s \subseteq \{0,1\}^n$ with 
	$w(s) := \Pd(\cM_s)$ such that for all $x\in \{0,1\}^n$ we
	have
	\begin{align*}
		\Pd_{XP\mid S=s}(x,p) 
			\begin{cases}
				\in(1\pm 132\eps)\cdot \frac{\Pd(x)}{w(s)}
											& \text{if } p = \Pd(x) \land x\in \cM \cap \cM_s,\\
		 = 0   					& \text{otherwise.} 
		\end{cases}
	\end{align*}
	Furthermore, the sets $\{\overline{\cM}_s \cap \cM\}_{s\in \cS}$ form a 
	partition of $\cM$. 
\end{lemma}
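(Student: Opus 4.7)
The plan is to trace through the honest prover's execution step by step. I would let $\cM_s := \bigcup_{i \in \cK} \bigcup_{j \in \cI_i(s)} \cB_j$ be the elements whose probability bucket lies in an interval (rather than a gap) for shift $s$; then $\Pd(\cM_s) = \sum_{i \in \cK} w_i(s) = w(s)$ by definition of $w_i(s)$. The set $\cM$ should collect those $x$ whose bucket index $j$ is in $\cN$ (so $h_j \geq \eps/(2t)$) and is ``generic'' in the sense that $j$ lies in a gap for exactly one shift $s \in \cS$. The indices outside $\cN$ contribute mass at most $t\cdot\eps/(2t) = \eps/2$, and the handful of boundary positions that fall in no gap for any shift in $\cS$ contribute at most $O(\eps)$ more, so $\Pd(\cM) \geq 1 - 2\eps$; the partition claim is built into the definition of $\cM$.

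The main computation is for $x \in \cM \cap \cM_s$. Let $j$ be the bucket of $x$ and $k$ the unique index with $j \in \cI_k(s)$. Conditioned on $S = s$, the verifier picks $K = k$ with probability $w_k(s)/w(s)$, samples $F$, and receives the honest $\cX_i = \{y \in \cB_i : f(y) = 0^m\}$ for $i \in \cI'_k$. Conditions (a) and (c) hold automatically; (b) and the check $j \in \cI'_k$ hold with probability $1 - O(\eps)$, by Lemma~\ref{lemma:hashMixingLemma}(i) union-bounded over $|\cI'_k| \leq \Isize$ buckets (each failure probability $O(\eps^3)$, using that the expected size $|\cB_j|/2^m \geq \Isize/(4\eps^3)$ follows from the choice of $\SampGap$ in Claim~\ref{claim:ParamProperties}) and by the definition of $\cN$. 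Conditioned on acceptance, $(x, \Pd(x))$ is output iff $f(x) = 0^m$, then $J = j$ is sampled (probability $h_j/w_k(s)$), and $x$ is picked from $\cX_j$ (probability $1/|\cX_j|$). Applying Lemma~\ref{lemma:hashMixingLemma}(iii) conditioned on $f(x) = 0^m$ with a suitably small $\gamma$ yields $|\cX_j| = (1 \pm O(\eps))|\cB_j|/2^m$, and combined with Claim~\ref{claim:sizeBi} and the choice of $m$ and $g$, this gives
\[
\Pd_{XP \mid S=s}(x, \Pd(x))
= \frac{w_k(s)}{w(s)} \cdot \frac{h_j}{w_k(s)} \cdot 2^{-m} \cdot \frac{2^m \cdot 2^{\pm O(\eps)}}{|\cB_j|}
= 2^{\pm O(\eps)} \cdot \frac{\Pd(x)}{w(s)},
\]
using $h_j/|\cB_j| = 2^{\pm \eps}\Pd(x)$ from Claim~\ref{claim:sizeBi}. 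Carefully accounting for the hidden constants yields the claimed $(1 \pm 132\eps)\Pd(x)/w(s)$.

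For the $25\eps$ rejection bound, the non-trivial contributions are: (b) failing, bounded at $O(\eps^2)$ by the union bound above; and sampling $J \in \cI_K \setminus \cI'_K$, which has marginal probability at most $\sum_{j \notin \cN} h_j / W \leq \eps/(2W)$ where $W = \sum_s w(s) \geq (1-2\eps)(|\cS|-1)$ by the partition claim, giving a small contribution. The complementary ``otherwise'' case follows because the honest prover always sends $p = \Pd(x)$, and $x$ can be picked only if $x \in \cX_j$ for some $j \in \cI'_k \subseteq \bigcup_i \cI_i(s)$, which forces $x \in \cM_s$ and $j \in \cN$; combined with $j$ being in its unique gap-shift, this forces $x \in \cM$.

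The main obstacle is twofold. First, pinning down the partition claim requires care: each $x \in \cM$ must fall in a gap for exactly one $s \in \cS$, which forces $\cM$ to exclude positions whose residue modulo $\Isize + \Gsize$ lies in the ``never-gap'' window $\{\Isize, \ldots, \Isize+\Gsize-1\}$ as well as positions so close to the endpoints of $(t)$ that the periodic pattern of intervals and gaps is not yet stable; one must argue these exclusions cost at most $O(\eps)$ of $\Pd$-mass. Second, propagating the chain of $2^{\pm \eps}$ factors (from Claim~\ref{claim:sizeBi}, the hash mixing error governed by $\gamma$, and the rounding in the definitions of $g$ and $m$) through to the final constant $132$ is tedious but mechanical.
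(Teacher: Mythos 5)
Your calculation of $\Pd_{XP\mid S=s}(x,\Pd(x))$ follows the same structure as the paper's (condition on $S=s$, $K=k$, $F(x)=0^m$, $J=j$; apply the hash mixing lemma conditioned on $f(x)=0^m$ together with Claims~\ref{claim:BareBig} and~\ref{claim:sizeBi}), and your $\cM_s$ is the paper's. The gap is in how you define $\cM$ and justify $\Pd(\cM)\geq 1-2\eps$. The paper sets $\cM := \bigcup_{i\in\cN}\cB_i$ and gets the partition claim for free from Claim~\ref{claim:setS}(i), which asserts that $\{\overline{\cN}_s\}_{s\in\cS}$ partitions $(t)$, i.e.\ every bucket index lies in a gap for exactly one shift. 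You instead fold a genericity requirement into $\cM$ (bucket index is a gap for exactly one $s$) and then claim the excluded ``never-gap'' buckets carry at most $O(\eps)$ of $\Pd$-mass. That step is unjustified: nothing constrains $\Pd$ to spread mass across bucket indices, and a single bucket $\cB_j$ with $h_j$ close to $1$ sitting in the residue window you call never-gap would make $\Pd(\cM)$ arbitrarily small, destroying the $1-2\eps$ lower bound. Either the partition property really does hold for every $j\in(t)$ --- in which case your extra condition is vacuous and you should simply verify that property rather than hedge around it --- or it does not, in which case the loss cannot be hidden in the $\eps$-slack and a different fix is required.

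A secondary arithmetic point: with $\gamma=\Theta(\eps)$ in Lemma~\ref{lemma:hashMixingLemma}(i) and Claim~\ref{claim:BareBig}, the per-bucket failure probability for check (b) is $\frac{2^m}{\gamma^2|\cB_i|} = O(\eps/\Isize)$, not $O(\eps^3)$, and the union bound over at most $\Isize$ buckets yields $O(\eps)$ rather than $O(\eps^2)$; this is exactly why the paper's total rejection probability is $25\eps$ and not quadratically smaller.
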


\begin{lemma}[Soundness for fixed $s$]
	\label{lem:soundFixedS}
	Fix any deterministic prover and $s \in \cS$.
	There exists a function
	$f_s: \{0,1\}^n \rightarrow [0,1]$ such that
	\begin{align*}
		\text{(i)}  \qquad &\Pr[P\leq f_s(X)|S=s] 
								\leq \frac{16\eps}{w(s)}, \\
		\text{(ii)} \qquad &\text{For all $x \in \{0,1\}^n$ we have} 
					\sum_{p> f_s(x)} \frac{\Pd_{XP\mid S=s}(x,p)}{p} \leq \frac{(1+6\eps)}{w(s)}.
	\end{align*}
\end{lemma}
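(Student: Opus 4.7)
The plan is to mirror the strategy described in the technical overview, carefully tracking the parameters introduced by the $\eps$-histogram, the hash function, and the shift $s$. I condition throughout on $S=s$ and abbreviate $w:=w(s)$.

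The first ingredient is a pointwise bound: for every $x$, $k\in \cK$, $j\in \cI'_k$,
\begin{equation*}
  \Pr[X=x, J=j\mid S=s]\;\leq\;\frac{2^{\eps}\cdot 2^{m(k)}}{w\cdot 2^{j\eps}}\,q_{k,j}(x),
  \qquad\text{where } q_{k,j}(x):=\Pr_{F}\!\bigl[x\in \cX_j\mid S=s, K=k\bigr].
\end{equation*}
This is obtained by unfolding the sampling rule $\Pr[X=x,J=j\mid S=s,K=k,F=f]=(h_j/w_k)\cdot[x\in \cX_j]/|\cX_j|$, invoking verifier check~(b) to conclude $|\cX_j|\geq 2^{-\eps}\cdot 2^g\cdot 2^{j\eps}h_j/\Sigma_k$ with $\Sigma_k:=\sum_{i\in \cI_k}2^{i\eps}h_i$ and $\Sigma_k/2^g\leq 2^{m(k)}$, and multiplying by $\Pr[K=k\mid S=s]=w_k/w$. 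The second ingredient is the per-interval inequality
\begin{equation*}
  \sum_{j\in \cI'_k} 2^{m(k)}\,q_{k,j}(x)\;\leq\;1,
\end{equation*}
which combines pairwise disjointness of $\{\cX_j\}_{j\in \cI'_k}$ (check~(c)) with the containment $\cX_j\subseteq \{y:f(y)=0^{m(k)}\}$: for any fixed $f$ at most one $j$ has $x\in \cX_j(f)$, and the chance that any does is bounded by $\Pr_F[f(x)=0^{m(k)}]=2^{-m(k)}$.

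I then set $f_s(x):=c_1\eps\cdot \Pd_{X\mid S=s}(x)$ and $M(x):=c_2\cdot \Pd_{X\mid S=s}(x)/\eps$ for constants $c_1,c_2$ to be fixed at the end, and split the relevant indices as $\cS(x):=\{j:2^{-j\eps}\leq f_s(x)\}$, $\cM(x):=\{j:f_s(x)<2^{-j\eps}<M(x)\}$, $\cL(x):=\{j:2^{-j\eps}\geq M(x)\}$. Using $\Gsize\eps\geq 2\log_2(1/\eps)$ from Claim~\ref{claim:ParamProperties}(iii) gives $|\cM(x)|\leq \Gsize+1$, and any consecutive block of $\leq \Gsize+1$ integers intersects at most one $\cI_k$ (spanning two intervals requires straddling a full gap of length $\Gsize$). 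For~(i), the event $P\leq f_s(X)$ forces $J\in \cS(X)$; applying the pointwise bound and the per-interval inequality yields, within each $\cI_k$, the estimate $\sum_{j\in \cS(x)\cap \cI_k}2^{-j\eps}\bigl(2^{m(k)}q_{k,j}(x)\bigr)\leq \max_{j\in \cS(x)\cap \cI_k}2^{-j\eps}$, and summing over $k$ produces a geometric series in $2^{-(\Isize+\Gsize)\eps}\leq \eps^2$ bounded by a small multiple of $f_s(x)$. Summing over $x$ with $\sum_x\Pd_{X\mid S=s}(x)\leq 1$ then yields $\Pr[P\leq f_s(X)\mid S=s]\leq 16\eps/w$ for $c_1$ chosen around $7$.

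For~(ii) I use $1/p<2^{(j+1)\eps}$ for $p\in \cA_j$ and split $\{p>f_s(x)\}$ into the $\cM(x)$ and $\cL(x)$ parts. Since $\cM(x)$ fits in a single interval, the pointwise bound and the per-interval inequality give $\sum_{j\in \cM(x)\cap \cI_k}2^{(j+1)\eps}\Pr[X=x,J=j\mid S=s]\leq 2^{2\eps}/w$, and the $\cL(x)$ part is bounded crudely by $(2^\eps/M(x))\cdot \Pd_{X\mid S=s}(x)=2^\eps\eps/c_2$. Adding gives at most $2^{2\eps}/w+2^\eps\eps/c_2$, which is $\leq (1+6\eps)/w$ for suitable $c_1,c_2$ (using $w\leq 1$). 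The main obstacle is the tight interplay between (i) and (ii): a looser per-interval sum $\sum q_{k,j}(x)\leq 1$ would cost an extra $1/\eps$ factor in (i) (from the slow $2^\eps$-geometric series) and force $\cM(x)$ wider than the single interval that $\Gsize$ can accommodate; the inequality $\sum 2^{m(k)}q_{k,j}(x)\leq 1$ is precisely what closes this gap. Minor technicalities (verifier rejections due to checks~(a)--(c) contributing $0$, and the override $p\to 2^{-j\eps}$ when $p\notin \cA_j$) are standard to handle.
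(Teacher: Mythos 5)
Your proposal is correct and follows essentially the same route as the paper's proof: a pointwise bound coming from check~(b), a per-interval bound of $1$ coming from disjointness and hashing (checks~(a),(c)), and the small/medium/large split keyed to a threshold of roughly $\eps\cdot\Pd_{X\mid S=s}(x)$ with the geometric decay across gaps. The only cosmetic difference is that you carry the explicit factor $2^{m(k)}$ alongside the unconditioned probability $q_{k,j}(x)$, whereas the paper folds it in by conditioning on $F(x)=0^m$ in the definition of $r_s(x,j)$ (and your $q_{k,j}$ should formally include the event that checks (a)--(c) pass, as you note, for the per-interval inequality to hold); your deferred constant-tuning matches the paper's hard-coded $f_s(x)=\Pd_{X\mid S=s}(x)/2^{(\Gsize/2-1)\eps}$.
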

 
Given these Lemmas, it is not difficult to prove 
Theorem~\ref{thm:main}. We defer the detailed proof to 
Section~\ref{sec:ProofOfMainSamplingTheorem}, and 
only give a proof sketch.

\begin{proof}[Proof of Theorem~\ref{thm:main} (Sketch)]
	The running time is given by 
	Lemma~\ref{lem:runningTime}. Defining the event $\Bad$ as 
	$P\leq f_S(X)$, completeness and soundness follow by 
	conditioning on $s\in \cS$ and applying 
	Lemmas~\ref{lem:complFixedS} and \ref{lem:soundFixedS},
	respectively.	
\end{proof}

It remains to prove the three lemmas above. The verifier's 
running time is easy to bound, as we just need to bound $t$ 
and the size of the sets $\cX_i$. The proof can be found 
in Section~\ref{sec:VerifiersRunningTime}.

To prove completeness and soundness, it is useful to consider
the probability that the prover puts $x$ into the set $\cX_i$
given that $f(x)=0$. Formally, for a fixed $s\in \cS$ we define 
the function
	$k_s: \bigcup_{i \in \cK} \cI_{i}(s) \rightarrow \cK$ that maps 
	$j \mapsto i$ such that $j \in \cI_{i}(s)$. 
Given a deterministic prover, for any $(x,j) \in \{0,1\}^n
	\times (t)$ define $r_s(x,j) :=$
	\begin{align*}
		\begin{cases} 
		 		\Pr_F\Bigl[(a)-(c) \land
					x \in \cX_{j} 
							\Big|(K,F(x),S)=(k_s(j),0^m,s)\Bigr]
				    &\mbox{if } 
				    	j \in \bigcup_{\ell \in \cK} \cI'_{\ell} \\
				  0
				    &\mbox{otherwise.}
		\end{cases}
	\end{align*}
	We note that once a deterministic prover is fixed, 
	the above probability is only over the choice of $f$.
	Also, we are only interested in the cases where (a)-(c) 
	hold, as otherwise the verifier rejects. 
We can then relate $\Pd_{XP\mid S=s}(x,p)$ and $r_s(x,p)$ as follows:
\begin{lemma}
	\label{lemma:qvsr}
	Fix any deterministic prover and $s\in \cS$. 
	Then for any $(x,j) \in \{0,1\}^n \times (t)$ we have
	\begin{align*}
		\sum_{p\in \cA_j}\Pd_{XP\mid S=s}(x,p) \in 
			\left[
				2^{-2\eps}\frac{r_s(x,j)}
								{w(s)\cdot 2^{j\eps}},
				2^{\eps}
					\frac{r_s(x,j)}
							 {w(s)\cdot 2^{j\eps}}
			\right].
		\end{align*}
\end{lemma}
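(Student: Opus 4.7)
The plan is to unfold the protocol conditioned on $S=s$, identify which branches contribute to outputs $(x,p)$ with $p\in \cA_j$, and then use condition~(b) to control $|\cX_j|$.

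First, since the intervals $\{\cA_i\}_{i\in (t)}$ are pairwise disjoint and the verifier always outputs $P\in \cA_J$ (either by accepting the prover's $p$ or by resetting to $2^{-J\eps}$), the event $\{P\in \cA_j\}$ coincides with $\{J=j\}$. If $j$ is in a gap (so the verifier cannot produce $J=j$) or $j\in \cI_k\setminus \cI'_k$ for its containing interval (so the verifier rejects after $J=j$), then the left-hand side vanishes; simultaneously $r_s(x,j)=0$ by definition, so the claim is trivial. Assume then $j\in \cI'_{k_s(j)}$, and unroll the protocol steps to obtain
\begin{align*}
\sum_{p\in \cA_j}\Pd_{XP\mid S=s}(x,p)
= \frac{w_{k_s(j)}}{w(s)}\cdot \frac{h_j}{w_{k_s(j)}}\cdot \Exp_{F}\Bigl[\mathbf{1}\bigl[(a)\text{-}(c)\land x\in \cX_j\bigr]\cdot \frac{1}{|\cX_j|}\Bigr],
\end{align*}
where the first factor is $\Pr[K=k_s(j)\mid S=s]$, the second is $\Pr[J=j\mid K=k_s(j)]$, and the expectation captures the random hash, the prover's deterministic reply passing checks~(a)--(c), and the uniform sampling of $x$ from $\cX_j$.

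Next I rewrite the expectation in terms of $r_s(x,j)$. Condition~(a) forces $f(x)=0^m$ whenever $x\in \cX_j$, so the event inside the indicator is contained in $\{F(x)=0^m\}$, which has probability $2^{-m}$ because $F$ is drawn uniformly from $\cH(n,m)$. Consequently,
\begin{align*}
\Pr_F\bigl[(a)\text{-}(c)\land x\in \cX_j\bigm|K=k_s(j),\,S=s\bigr]
= 2^{-m}\cdot r_s(x,j).
\end{align*}

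The last step is to bound $1/|\cX_j|$ on the event that condition~(b) holds. For $m>0$, the identity $2^g/\sum_{i\in \cI_k}2^{i\eps}h_i = 2^{-m}$ converts the stated bounds into $|\cX_j|\in 2^{j\eps-m}h_j\cdot[2^{-\eps},2^{2\eps}]$; for $m=0$, the bounds $|\cX_j|\in 2^{j\eps}h_j\cdot[1,2^{\eps}]$ sit inside the same interval. Hence in both cases $1/|\cX_j|\in (2^{m}/(h_j\, 2^{j\eps}))\cdot[2^{-2\eps},2^{\eps}]$. Substituting back, the factors $2^{m}$ and $2^{-m}$ cancel, as do the two copies of $h_j$, leaving precisely the range $[2^{-2\eps},2^{\eps}]\cdot r_s(x,j)/(w(s)\cdot 2^{j\eps})$ claimed by the lemma. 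The main obstacle is just bookkeeping the $\eps$ exponents in condition~(b) consistently across the two cases of $m$; beyond that, no further ideas are required.
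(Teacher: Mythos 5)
Your proof is correct and follows essentially the same route as the paper's. Both arguments unfold the protocol conditioned on $S=s$, reduce to $\Pr[(X,J)=(x,j)\mid S=s]$ after handling the degenerate cases (gap positions and $j\notin\cI'_{k_s(j)}$), factor out the $K$- and $J$-choice probabilities $\frac{w_{k_s(j)}}{w(s)}\cdot\frac{h_j}{w_{k_s(j)}}$, extract $2^{-m}$ via the containment $\{(a)\text{--}(c)\land x\in\cX_j\}\subseteq\{F(x)=0^m\}$, and finish by bounding $1/|\cX_j|$ through check (b). The only cosmetic difference is that you package the last three factors as a single expectation over $F$ (correctly using that the (b)-bounds on $|\cX_j|$ are deterministic given $s,k$, so they can be pulled out of the expectation), whereas the paper writes them as a chain of explicit conditional probabilities, recovering $1/|\cX_j|$ as the conditional probability of $X=x$ given $x\in\cX_j$.
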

A formal proof can be found in 
Section~\ref{sec:BoundingProbOfpInAj}. We provide a proof sketch.
\begin{proof}[Proof (Sketch)]
	The sum in the lemma equals $\Pr_{X,J}[(X,J)=(x,j)|S=s]$, which 
	equals
	\begin{align*}
		\frac{w_k}{w(s)}
			\cdot
		2^{-m}
		  \cdot
		\frac{h_j}{\sum_{i\in \cI_k} h_i}
		  \cdot
		r_s(x,j)
			\cdot
		\frac{1}{|\cX_j|},
	\end{align*}
	where the first factor is the probability that the appropriate
	$k$ is chosen in the protocol, the second factor is the probability
	that $f(x)=0$, the third factor is the probability that $j$ is 
	chosen (conditioned on $K=k$), 
	the fourth factor is the probability that the prover
	puts $x$ into $\cX_j$ (conditioned on $(K,F(x),J)=(k,0^m,j)$),
	and the fifth factor is the probability that 
	$x$ is chosen from $\cX_j$ 
	(conditioned on $(K,F(x),J)=(k,0^m,j) \land x\in \cX_j$). 
	The lemma then follows by definition of $m$ and 
	the verifier's check (b) that specifies	the size of the sets 
	$\cX_i$.
\end{proof}

\subsubsection{Proof of completeness: overview}
The goal of this section is to prove Lemma~\ref{lem:complFixedS}.
We only give proof sketches here and defer the formal proofs to 
Section~\ref{sec:SamplingProofCompletenessDetails}. 
Throughout this section we assume the prover is honest.
We prove completeness for $\cM := \bigcup_{i \in \cN} \cB_i$ 
and 
$$\cM_s:= \bigcup_{i \in \bigcup_{j\in \cK} \cI_j(s)}\cB_i \qquad \text{
		for $s\in \cS$}.$$ 
As $t$ is chosen such that only probabilities smaller than $2^{-2n}$
are neglected, we get $\sum_{i\in (t)}h_i \geq 1-2^{-n}$ (see
Claim~\ref{claim:sumwilarge}). Thus, the verifier never rejects after
receiving the first message. 
Also, it is easy to see that $\Pd(\cM) \geq 1-2\eps$ and 
that the sets $\{\overline{\cM}_s \cap \cM\}_{s\in \cS}$ 
form a partition of $\cM$ (see Claim~\ref{claim:propertiesOfM}). 

Now since the protocol only considers $h_i \geq \frac{\eps}{2t}$ (i.e.~$i\in \cN$), and by Claim~\ref{claim:sizeBi} we have 
$|\cB_i|\geq h_i2^{i\eps}$, the following lower bound on $|\cB_i|$ for
$i\in \cN$ follows (a formal proof can be found in 
Section~\ref{sec:SamplingProofCompletenessDetails}):
\begin{claim}
	\label{claim:BareBig}
	Fix any $s\in \cS$ and consider any protocol execution where 
	$(K,S)=(k,s)$	and $m>0$. Then for all $i\in \cI_k'$ we have
	$
			\frac{|\cB_i|}{2^m} \geq 
			 \frac{\Isize}
						{4\eps^3}
	$.
\end{claim}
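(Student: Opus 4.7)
The plan is to chain together four facts: (a) since $m>0$ we have the exact equality $2^m = (\sum_{j\in \cI_k} 2^{j\eps} h_j)/2^g$; (b) Claim~\ref{claim:sizeBi} gives $|\cB_i| \geq h_i 2^{i\eps}$; (c) for $i\in\cI'_k\subseteq\cN$ we have $h_i \geq \eps/(2t)$; and (d) Claim~\ref{claim:ParamProperties}(iii) gives $2^g \geq 2^{\SampGap} \geq t\cdot\Isize\cdot 2^{\Isize\eps}/(2\eps^4)$. The remaining ingredient is a clean upper bound on the sum $\sum_{j\in\cI_k} 2^{j\eps} h_j$ which, when combined with (a)-(d), exactly cancels the $2^{\Isize\eps}$ factor and produces the required $\Isize/(4\eps^3)$.

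First I would unfold $|\cB_i|/2^m$ as
\[
\frac{|\cB_i|}{2^m} \;=\; \frac{|\cB_i|\cdot 2^g}{\sum_{j\in\cI_k} 2^{j\eps} h_j}
\]
using the assumption $m>0$. Then I would bound the denominator: since $|\cI_k|\leq \Isize$ by Claim~\ref{claim:ParamProperties}(ii), any $j\in\cI_k$ satisfies $j\leq i + \Isize - 1$ whenever $i$ is the minimum element of $\cI_k$, and more generally $2^{j\eps} \leq 2^{i\eps}\cdot 2^{\Isize\eps}$ for every pair $i,j\in\cI_k$. Combined with $\sum_{j\in\cI_k} h_j \leq 1$, this yields
\[
\sum_{j\in\cI_k} 2^{j\eps} h_j \;\leq\; 2^{i\eps}\cdot 2^{\Isize\eps}.
\]

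Next I would substitute in the lower bound $|\cB_i|\geq h_i 2^{i\eps}$ from Claim~\ref{claim:sizeBi} and the lower bound on $2^g$ from Claim~\ref{claim:ParamProperties}(iii), so the $2^{i\eps}$ factors cancel and the $2^{\Isize\eps}$ factor in the numerator cancels the one in the denominator, giving
\[
\frac{|\cB_i|}{2^m} \;\geq\; \frac{h_i\cdot 2^{i\eps}\cdot \frac{t\cdot\Isize\cdot 2^{\Isize\eps}}{2\eps^4}}{2^{i\eps}\cdot 2^{\Isize\eps}} \;=\; \frac{h_i\cdot t\cdot\Isize}{2\eps^4}.
\]
Finally, applying $h_i\geq \eps/(2t)$ (from $i\in\cI'_k\subseteq\cN$) makes the $t$'s cancel and produces the claimed bound $\Isize/(4\eps^3)$.

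The only mildly tricky point is keeping the inequalities tight enough to land exactly on $\Isize/(4\eps^3)$ rather than a weaker constant — this is why one needs the sharp bound $2^{\SampGap}\geq t\Isize\, 2^{\Isize\eps}/(2\eps^4)$ from part~(iii) of the parameter claim and not just the looser upper bound. Otherwise the argument is a direct substitution with no real obstacle.
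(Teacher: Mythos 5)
Your proposal is correct and follows essentially the same chain of estimates as the paper's own proof: rewrite $2^m$ via the definition of $m$ and $g$, lower-bound $|\cB_i|$ by $h_i 2^{i\eps}$ (Claim~\ref{claim:sizeBi}), upper-bound $\sum_{j\in\cI_k}2^{j\eps}h_j$ by $2^{(i+\Isize)\eps}$ using $|\cI_k|\le\Isize$ and $\sum h_j\le 1$, and then invoke $2^g\ge 2^{\SampGap}\ge t\,\Isize\,2^{\Isize\eps}/(2\eps^4)$ together with $h_i\ge\eps/(2t)$; the only difference is the order in which you substitute the bounds on $2^g$ and $h_i$, which is immaterial.
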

Next, we show that the verifier rejects with small probability only. 
\begin{lemma}
	\label{lem:RejectWithSmallProb}
	The verifier rejects with probability at most $25\eps$.
\end{lemma}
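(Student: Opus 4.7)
The plan is to enumerate the distinct points at which the verifier can reject and to bound each contribution separately. For the honest prover, the very first check (that $\sum_{i\in(t)} h_i \in [1-2^{-n}, 1]$) never fails: the upper bound is trivial, and the lower bound follows from Claim~\ref{claim:sumwilarge}. Condition (a) holds by construction since $\cX_i := \{x \in \cB_i : f(x)=0^m\}$, and (c) holds because the buckets $\cB_i$ are pairwise disjoint. So the only possible sources of rejection are condition (b) and the final test $j \in \cI_k'$.

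For condition (b), if $m = 0$ then $\cX_i = \cB_i$ and the check reduces to $|\cB_i| \in [2^{i\eps} h_i, 2^{(i+1)\eps} h_i]$, which is exactly Claim~\ref{claim:sizeBi}. If $m > 0$, I would fix $s, k$ and apply Lemma~\ref{lemma:hashMixingLemma}(i) to each $i \in \cI_k'$ with deviation $\gamma := \eps/2$. A short calculation shows $1 - \eps/2 \geq 2^{-\eps}$ and $1 + \eps/2 \leq 2^{\eps}$ for $\eps \in (0,1]$, so the event $|\cX_i| \in (1\pm\gamma)|\cB_i|/2^m$, combined with Claim~\ref{claim:sizeBi}, implies the $i$-th instance of check (b). Using Claim~\ref{claim:BareBig} to lower bound $|\cB_i|/2^m \geq \Isize/(4\eps^3)$, the hashing failure probability is
\[
\Pr_f\!\left[|\cX_i| \notin (1 \pm \tfrac{\eps}{2}) \tfrac{|\cB_i|}{2^m}\right] \;\leq\; \tfrac{2^m}{(\eps/2)^2 |\cB_i|} \;\leq\; \tfrac{16\eps}{\Isize}.
\]
A union bound over $|\cI_k'| \leq \Isize$ and averaging over $s,k$ bounds this rejection source by $16\eps$.

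For the final rejection ($j \in \cI_k(s)$ with $h_j < \eps/(2t)$), I would compute directly
\[
\Pr[\text{reject at last step}] \;=\; \frac{1}{W}\!\sum_{j:\,h_j < \eps/(2t)} h_j \cdot \bigl|\{(s,k) : j \in \cI_k(s)\}\bigr|,
\]
where $W = \sum_{s\in \cS} w(s)$. The key combinatorial observation is that the gaps under distinct shifts occupy disjoint residue classes modulo $\Isize + \Gsize$, so each $j$ lies in a gap for at most one shift and hence inside some $\cI_k(s)$ for at least $|\cS|-1$ shifts. Combined with Claim~\ref{claim:ParamProperties} which implies $|\cS| = \Isize/\Gsize \geq 2$, this yields $W \geq (|\cS|-1)(1-2^{-n})$ and $|\{(s,k) : j \in \cI_k(s)\}| \leq |\cS|$. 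Since $\sum_{j:\,h_j < \eps/(2t)} h_j \leq (t+1)\cdot \eps/(2t) \leq \eps$, this rejection probability is at most $\tfrac{|\cS|}{(|\cS|-1)(1-2^{-n})}\cdot \eps \leq 4\eps$.

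Summing the two sources yields total rejection probability at most $20\eps \leq 25\eps$, as claimed. The main obstacle is calibrating the Hash Mixing Lemma's deviation parameter $\gamma = \Theta(\eps)$ against the union bound over up to $\Isize$ bins: a naive application would fail because $\gamma$ is so small. All the leverage comes from Claim~\ref{claim:BareBig}, whose expected bucket occupancy $\Isize/(4\eps^3)$ is tuned precisely so that the $1/\gamma^2$ concentration cost leaves $O(\eps/\Isize)$ per bin, which is exactly what the union bound can absorb.
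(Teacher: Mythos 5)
Your proof follows essentially the same route as the paper's: the same decomposition of rejection sources (initial histogram check, conditions (a)--(c), and the final $j\in\cI_k'$ test), the same application of the Hash Mixing Lemma with deviation $\eps/2$ combined with Claim~\ref{claim:BareBig} and a union bound over $\Isize$ bins to obtain the $16\eps$ contribution, and an averaging argument over $s$ using the partition structure of the gaps for the $j\notin\cN$ contribution (the paper fixes $s$, bounds $\Pr[J\notin\cN\mid S=s]\leq\eps/(2w(s))$, and averages via Claim~\ref{claim:setS}(iv); you compute the same quantity globally and get the slightly looser but still sufficient $4\eps$). Two small corrections: the lower bound $|\cS|\geq 2$ is supplied by Claim~\ref{claim:setS}(ii), which gives $|\cS|\geq 1/(2\delta)$, not by Claim~\ref{claim:ParamProperties} (which only says $\Isize/\Gsize\in\mathbb{N}$ and $\Isize,\Gsize\geq 1$); and your displayed expression for the last-step rejection probability should be a $\leq$ rather than an $=$, since the verifier may already have rejected in (a)--(c) before reaching the draw of $j$ --- harmless here, since you only use it as an upper bound.
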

We provide a proof sketch, and the full proof can again be found in Section~\ref{sec:SamplingProofCompletenessDetails}.
\begin{proof}[Proof (Sketch)]
	First, it is easy to see that the verifier only rejects with 
	small probability due to $j\notin \cN$, as by choice of $t$ such
	$j$ is only chosen with small probability.
	
	It remains to show that the verifier rejects in (a)-(c) only with
	small probability. Clearly, (c) always holds. The
	case $m=0$ is trivial, so suppose $m>0$. 
	Then by definition of $m$ and Claim~\ref{claim:sizeBi}, we find
	\begin{align}
		\frac{2^g2^{i\eps}h_i}{\sum_{\ell \in \cI_k}2^{\ell\eps}
		 h_{\ell}} \leq \frac{|\cB_i|}{2^m} \leq 
		 \frac{2^g2^{(i+1)\eps}h_i}{\sum_{\ell \in \cI_k}2^{\ell\eps}
		 h_{\ell}}, \label{eqn:24}
	\end{align}
	and thus we can apply the hash mixing lemma (Lemma~\ref{lemma:hashMixingLemma}) part (i), and 
	then Claim~\ref{claim:BareBig} to find 
	that (b) holds with high probability.
\end{proof}

Finally, we can prove completeness for a fixed value of $s$.
Again, the formal proof can be found in Section~\ref{sec:SamplingProofCompletenessDetails}. 
\begin{proof}[Proof of Lemma~\ref{lem:complFixedS} (Sketch)]	
	Since the prover is honest, we have $r(x,j)=0$ in case
	$x\notin \cB_j$ or $j\notin \cN$. 
	We first show that for any $x \in \cB_j$ with $j\in \cN$ we have 
	that $r_s(x,j)$ is close to $1$.
	In case $m=0$ we clearly have $r_s(x,j)=1$, so assume $m>0$.
	Since the prover is honest, we get that
	$r_s(x,j)=\Pr_F[(b)|(K,F(x),S)=(k_s(j),0^m,s)]$. 
	Assuming $f(x)=0$, $x$ is contained in exactly one set 
	$\cX_{i^*}$. Now again using (\ref{eqn:24}), 
	the claim	follows by first applying the hash mixing lemma 
	(Lemma~\ref{lemma:hashMixingLemma})
	(part (ii) in case $i\neq i^*$, and part (iii) in case 
	$i=i^*$), and then Claim~\ref{claim:BareBig}. 
	
	Finally, since the honest prover always sends $p=\Pd(x)$, 
	Lemma~\ref{lemma:qvsr} gives that $\Pd_{XP}(x,\Pd(x))$ is close to 
	$\frac{2^{-j\eps}}{w(s)}$, which in turn is close to 
	$\frac{\Pd(x)}{w(s)}$ as $x \in \cB_j$. 
\end{proof}

\subsubsection{Proof of soundness: overview}
The goal of this section is to prove Lemma~\ref{lem:soundFixedS}.
Throughout this subsection, fix any deterministic prover.
We first give the following simple claim: 
since the verifier ensures
that the sets $\cX_j$ sent by the prover are disjoint, for any
$s,k$ the sum over $j \in \cI_k(s)$ of the conditional probabilities $r_s(x,j)$ is at most $1$. A formal proof can be found in Section~\ref{sec:SamplingProofSoundnessDetails}.
\begin{claim}
	\label{claim:rsumtoone}
	For any $(s,k, x)\in \cS \times \cK \times \{0,1\}^n$ we have 
	$\sum_{j \in \cI_k(s)}r_s(x,j) \leq 1$.
\end{claim}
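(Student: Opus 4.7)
The plan is essentially one observation followed by a union bound over disjoint events. The key point is that for every $j \in \cI_k(s)$, the definition of $k_s$ gives $k_s(j)=k$, so the conditioning event $(K,F(x),S)=(k_s(j),0^m,s)$ appearing in the definition of $r_s(x,j)$ is the \emph{same} event $(K,F(x),S)=(k,0^m,s)$ for every term of the sum (note that $m$ is determined by $s$, $k$, and the first-round histogram $h$). Hence I can view $\sum_{j\in \cI_k(s)} r_s(x,j)$ as a single probability computation under one fixed conditional distribution on $f$.

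First I would restrict attention to $j\in \cI_k(s)\cap \cI'_k$, since by definition $r_s(x,j)=0$ for $j\notin \bigcup_{\ell}\cI'_\ell$. For such $j$, verifier's check (c) forces the sets $\{\cX_i\}_{i\in \cI'_k}$ (which, once the deterministic prover is fixed, are deterministic functions of $(s,k,f)$) to be pairwise disjoint. Consequently, whenever (c) holds, $x$ lies in at most one of the $\cX_j$, so the events
\[
\bigl\{\text{(a)-(c) hold} \land x\in \cX_j\bigr\}, \qquad j\in \cI_k(s)\cap \cI'_k,
\]
are pairwise disjoint under the fixed conditioning.

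Therefore the sum of their conditional probabilities equals the conditional probability of their union, which is at most $1$. This gives $\sum_{j\in \cI_k(s)} r_s(x,j) \leq 1$ as claimed. There is no real obstacle to overcome here beyond bookkeeping: the only care needed is to confirm that the conditioning event does not depend on $j$ within $\cI_k(s)$ (so that we are genuinely summing probabilities of disjoint events inside a single probability space, rather than adding probabilities across different conditional measures).
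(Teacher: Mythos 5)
Your proof is correct and is essentially the paper's argument, expressed a bit more compactly: the paper conditions on the value of $F$ and bounds the inner sum of indicators by $1$ using check (c), whereas you observe directly that the events $\{(a)\text{--}(c) \land x\in\cX_j\}_{j\in\cI'_k}$ are pairwise disjoint under the common conditioning (since $k_s(j)=k$ for all $j\in\cI_k(s)$ and $m$ depends only on $s,k,h$) and invoke countable additivity. The two are the same observation packaged differently.
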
 

Fix $s\in \cS$. We prove soundness for the function
	$f_s(x):=\frac{\Pd_{X\mid S=s}(x)}{2^{(\Gsize/2-1)\eps}}$.
	For a fixed $x\in \{0,1\}^n$ we define sets $\cS(x),\cM(x),\cL(x) 
	\subseteq [0,1]$ representing small, medium and large
	probabilities as follows:
	\begin{align*}
		\cS(x)&:= \left[0,\frac{\Pd_{X\mid S=s}(x)}{2^{(\Gsize/2-1)\eps}}\right],\\
		\cM(x)&:=\left(\frac{\Pd_{X\mid S=s}(x)}{2^{(\Gsize/2-1)\eps}},
							2^{(\Gsize/2-1)\eps}\Pd_{X\mid S=s}(x)\right), \\
		\cL(x)&:=\left[2^{(\Gsize/2-1)\eps}\Pd_{X\mid S=s}(x), 1\right].
	\end{align*}
	Now observe that in order for the protocol to output a value 
	$p\in \cS(x)$ (or $\cM(x), \cL(x)$, respectively),
	the value $J$ chosen in the protocol must be in the set $\cS'(x)$
	(or $\cM'(x), \cL'(x)$, respectively)	defined as follows:
	\begin{align*}
		\cS'(x)&:=\bigl\{j: 2^{-(j+1)\eps} \leq
									 \frac{\Pd_{X\mid S=s}(x)}{2^{(\Gsize/2-1)\eps}}
						\bigr\}, \\
		\cM'(x)&:=\bigl\{j: 
		2^{-j\eps} > \frac{\Pd_{X\mid S=s}(x)}{2^{(\Gsize/2-1)\eps}} 
		\land
		2^{-(j+1)\eps} < 2^{(\Gsize/2-1)\eps} \Pd_{X\mid S=s}(x) 
						\bigr\},\\
		\cL'(x)&:= \bigl\{j: 2^{-j\eps} \geq 2^{(\Gsize/2-1)\eps} \Pd_{X\mid S=s}(x)  
						\bigr\}.				
	\end{align*}	

	\begin{proof}[Proof of Lemma~\ref{lem:soundFixedS} (i)]
	We first calculate
	\begin{align}
		 \Pr[P\leq &f_s(X)|S=s] = \sum_{x \in\{0,1\}^n}\sum_{p\leq f_s(x)} \Pd_{XP\mid S=s}(x,p) \nonumber \\
		&=	
		 \sum_{x}
			\sum_{p \in \cS(x)} \Pd_{XP\mid S=s}(x,p) 
		 \leq 
			\sum_x \sum_{j \in \cS'(x)} \sum_{p \in \cA_j}\Pd_{XP\mid S=s}(x,p) \nonumber \\
		&\stackrel{\text{Lem.~\ref{lemma:qvsr}}}{\leq}
			\frac{2^{\eps}}{w(s)}
				\sum_{x}
				\sum_{j \in \cS'(x)}
		 	  \frac{r_s(x,j)}{2^{j\eps}}. \label{eqn:21}
	\end{align}
	To bound this, we find
	\begin{align}	
		\sum_{x}
				\sum_{j \in \cS'(x)}
		 	  \frac{r_s(x,j)}{2^{j\eps}}
		&=	\sum_{x}
				\sum_{k}
				\sum_{j \in \cS'(x) \cap \cI_k}
		 	  \frac{r_s(x,j)}{2^{j\eps}} 
		\nonumber \\
		&\leq\sum_{x}
				\sum_{k}
				\frac{1}{2^{\min(\cS'\cap \cI_k)\eps}}
				\sum_{j \in \cS'(x) \cap \cI_k} r_s(x,j) \nonumber\\
		&\stackrel{\text{Claim~\ref{claim:rsumtoone}}}{\leq}
			\sum_{x}
				\sum_{k}
				\frac{1}{2^{\min(\cS'\cap \cI_k)\eps}} 
				\cdot \left[\cS'\cap \cI_k \neq \emptyset\right]
				\nonumber\\
		&\leq 
			\sum_{x}
				\frac{\Pd_{X\mid S=s}(x)}{2^{(\Gsize/2-2)\eps}}
				\sum_{i=0}^{\infty}
				\frac{1}{2^{i\cdot\Gsize\cdot\eps}} \nonumber \\
		&\leq 
			\sum_{x}
				\frac{\Pd_{X\mid S=s}(x)}{2^{(\Gsize/2-2)\eps}}
				\sum_{i=0}^{\infty}
				\frac{1}{2^i}
		\leq \frac{2}{2^{(\Gsize/2-2)\eps}}.
	\end{align}
In the third inequality we used the definition of $\cS'$ and $\cS' \cap \cI_k \subseteq \cS'$, and that for any $k$ the distance between any two elements $j \in \cI_k$ and $j' \in \cI_{k+1}$
is at least $\Gsize$. In the fourth inequality we used 
that by definition of $\Gsize$, we have 
$2^{i\cdot \Gsize \cdot \eps} \geq 
 2^{i\cdot \Gsize' \cdot \eps}
 = (\frac{1}{\eps})^i \geq 2^i$ as $\eps < 1/2$. 
Thus the probability in (\ref{eqn:21}) is bounded by
$
\frac{2\cdot 2^{\eps}}{w(s) 2^{(\Gsize/2-2)\eps}} 
\leq 
\frac{2\cdot 2^{\eps}}{w(s) 2^{(\Gsize'/2-2)\eps}} 
=
\frac{2\eps 2^{3\eps}}{w(s)} \leq \frac{16 \eps}{w(s)}
$,
where we used the definition of $\Gsize'$ and 
$\Gsize' \leq \Gsize$.
\end{proof}	
	
\begin{proof}[Proof of Lemma~\ref{lem:soundFixedS} (ii)]
Fix any $x\in \{0,1\}^n$. 
We get
\begin{align}
	\sum_{p > f_s(x)} \frac{\Pd_{XP\mid S=s}(x,p)}{p} = \sum_{p\in \cM(x)}\frac{\Pd_{XP\mid S=s}(x,p)}{p}
		+ \sum_{p\in \cL(x)}\frac{\Pd_{XP\mid S=s}(x,p)}{p}  \label{eqn:22}
\end{align}

We bound the two terms separately, and first bound the sum over $p\in \cL(x)$. By definition of $\cL$, $p\geq 2^{(\Gsize/2 - 1)\eps}\Pd_{X\mid S=s}(x)$, and thus we find that 
\begin{align}
	\sum_{p\in \cL(x)}\frac{\Pd_{XP\mid S=s}(x,p)}{p} 
		&\leq \frac{1}{2^{(\Gsize/2 - 1)\eps}\Pd_{X\mid S=s}(x)} 
			\underbrace{\sum_{p\in \cL(x)}\Pd_{XP\mid S=s}(x,p)}_{\leq \Pd_{X\mid S=s}(x)} \nonumber \\
		 &\leq \frac{1}{2^{(\Gsize/2 - 1)\eps}}
		 =\eps 2^{\eps} \leq 2\eps. \label{eqn:23}
\end{align}

We proceed to bound the sum over $\cM(x)$. We find
\begin{align}
	&\sum_{p\in \cM(x)}\frac{\Pd_{XP\mid S=s}(x,p)}{p}
	  \leq \sum_{j\in \cM'(x)}\sum_{p \in \cA_j}\frac{\Pd_{XP\mid S=s}(x,p)}{p} \nonumber \\
		 &\leq \sum_{j\in \cM'(x)}2^{(j+1)\eps} \sum_{p \in \cA_j}\Pd_{XP\mid S=s}(x,p)
	  \stackrel{\text{Lem.~\ref{lemma:qvsr}}}{\leq} 
		  \sum_{j\in \cM'(x)}
			\frac{2^{(j+1)\eps} r_s(x,j) 2^{\eps}}
					 {w(s)2^{j\eps}}\nonumber \\
		&= \frac{2^{2\eps}}
					 {w(s)}
			\sum_{j\in \cM'(x)} r_s(x,j)
		\leq \frac{2^{2\eps}}{w(s)}
			\label{eqn:25}
\end{align}
The second inequality holds because $p\in \cA_j$ implies 
$p \geq 2^{-(j+1)\eps}$. 
The fourth inequality follows by
Claim~\ref{claim:rsumtoone} once we observe that
by definition of $\cM'(x)$ we have 
\begin{align*}  
	\cM'(x) \subseteq \Bigl\{ \Bigl\lceil  
												\frac{1}{\eps}\log_2\bigr(\frac{1}{\Pd_{X\mid S=s}(x)}\bigl) 
												-&\frac{\Gsize}{2}
										 \Bigr\rceil
										 , \ldots, \\ 
										 &\Bigl\lfloor 
												\frac{1}{\eps}\log_2\bigl(\frac{1}{\Pd_{X\mid S=s}(x)}\bigr) 
												+\frac{\Gsize}{2}-1
										 \Bigr\rfloor
						 \Bigr\}, 
\end{align*}
which implies $|\cM'(x)| \leq \Gsize$ and thus there exists a $k$
such that for all $j\in \cM'(x)$ with $r_s(x,j) >0$ we have $j\in \cI_k$.
Plugging (\ref{eqn:23}) and (\ref{eqn:25}) into~(\ref{eqn:22}) concludes the proof.
\end{proof}

\subsubsection{Proof of the main theorem}
\label{sec:ProofOfMainSamplingTheorem}
We prove the main theorem given Lemmas~\ref{lem:runningTime}, \ref{lem:complFixedS} and
\ref{lem:soundFixedS}. The running time is given by Lemma~\ref{lem:runningTime}, and
so it only remains to prove completeness and soundness.

\paragraph{Proof of completeness.}
We will use the following claim:
\begin{claim}
	\label{claim:setS}
	Fix any prover and let $\cN_s := \bigcup_{i \in \cK} \cI_i(s)$.
	Then the following holds:
	\begin{enumerate}[(i)]
		\item The sets $\{\overline{\cN}_s\}_{s\in \cS}$ 
		form a partition of $(t)$,
		\item $|\cS| = \frac{\Isize}{\Gsize}$, and $\frac{1}{2\delta} \leq |\cS| \leq \frac{1}{\delta}$,
		\item $\sum_{s \in \cS} w(s) \in [1-2^{-n},1]\cdot (|\cS|-1)$,
		\item $\frac{|\cS|}{\sum_{s\in \cS}w(s)} \leq 1+16\delta$.
	\end{enumerate}
\end{claim}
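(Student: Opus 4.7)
The plan is to prove (i) by a direct structural argument and then derive (ii)--(iv) algebraically from it.

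For part (i), I would show that each $j \in (t)$ lies in $\overline{\cN}_s$ for exactly one $s \in \cS$. The intervals $\cI_i(s)$ for $i \geq 1$ repeat with period $\Isize + \Gsize$: each period consists of a $\Gsize$-sized gap followed by an $\Isize$-sized interval. Varying $s$ over the $\Isize/\Gsize$ elements of $\cS$ translates this pattern by multiples of $\Gsize$, so the $\Isize/\Gsize$ shifted copies of the gap are meant to cover a full period. The initial segment $\cI_0(s) = \{0,\ldots,s\}$ absorbs positions at the left boundary, and the intersection with $(t)$ handles the right boundary. Concretely, for each $j \in (t)$ I would compute its residue modulo $\Isize + \Gsize$ and read off the unique $s$ placing $j$ in a gap, then verify no other $s \in \cS$ does.

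Part (ii) is immediate by counting: $|\cS| = \Isize/\Gsize$ from the definition of $\cS$, and the bounds $1/(2\delta) \leq |\cS| \leq 1/\delta$ follow from $\Isize = \lceil \Isize'/\Gsize\rceil\,\Gsize$ with $\Isize' = \Gsize'/\delta$ and $\Gsize \in [\Gsize', \Gsize'+1]$ (using $\Gsize' \geq 1$, which holds in the parameter regime $(1/\eps)^{1/\delta} \leq 2^{n/50}$). For (iii), the partition from (i) implies each $j \in (t)$ lies in $\cN_s$ for exactly $|\cS| - 1$ shifts, so
\begin{align*}
\sum_{s \in \cS} w(s) = (|\cS|-1) \sum_{j \in (t)} h_j,
\end{align*}
and the bound follows from Claim~\ref{claim:sumwilarge}, which gives $\sum_j h_j \in [1-2^{-n}, 1]$. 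For (iv), I write $\frac{|\cS|}{\sum_s w(s)} \leq \frac{|\cS|}{(|\cS|-1)(1-2^{-n})}$; since $|\cS| \geq 1/(2\delta)$, we have $\frac{|\cS|}{|\cS|-1} = 1 + \frac{1}{|\cS|-1} \leq 1 + 4\delta$ for $\delta$ small, and the second factor is only a $1 + O(2^{-n})$ multiplier, so the product sits well below $1 + 16\delta$ in the intended parameter regime.

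The main obstacle is part (i): the interactions at the left boundary (the growing $\cI_0(s) = \{0,\ldots,s\}$) and the truncation $\cap (t)$ at the right must interlock exactly so that every $j \in (t)$ appears in precisely one $\overline{\cN}_s$. The specific choices that $\cS$ starts at $-1$ and ends at $(\Isize/\Gsize - 1)\Gsize - 1$, together with the size of $\cK = \{0,\ldots,\lceil t/(\Isize+\Gsize)\rceil\}$, must line up just right to give an exact tiling rather than one with leftover positions. Once (i) is settled, (ii)--(iv) reduce to straightforward counting and arithmetic.
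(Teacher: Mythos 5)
Your argument for part~(i) has a quantitative error that a direct check would have exposed, and it is exactly the step ``the $\Isize/\Gsize$ shifted copies of the gap are meant to cover a full period.'' For each shift $s$ the pattern has period $\Isize + \Gsize$, of which the gap occupies $\Gsize$ positions; translating this by multiples of $\Gsize$ through the $|\cS| = \Isize/\Gsize$ shifts sweeps only $(\Isize/\Gsize)\cdot\Gsize = \Isize$ positions of each period, so $\Gsize$ positions per period belong to $\cN_s$ for \emph{every} $s\in\cS$ and hence to no $\overline{\cN}_s$. Thus the sets $\overline{\cN}_s$, while pairwise disjoint, do not cover $(t)$, and part~(i) as stated is false. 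A tiny concrete instance: with $\Gsize = 2$, $\Isize = 4$ one has $\cS = \{-1,1\}$, and position $4$ lies in $\cI_1(-1) = \{2,\ldots,5\}$ and also in $\cI_1(1) = \{4,\ldots,7\}$, so $4 \notin \overline{\cN}_{-1}\cup\overline{\cN}_{1}$.

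The paper's own ``proof'' of (i) is the single phrase ``follows by definition,'' so it offers no help here; the underlying issue appears to be that $\cS$ is one shift short of tiling a full period (appending the shift $\Isize - 1$, so $|\cS| = \Isize/\Gsize + 1$, would make the translated gaps tile exactly). Note also that without that repair the \emph{upper} bound $\sum_s w(s) \le |\cS|-1$ in part~(iii) can fail, since histogram mass sitting on the uncovered indices is counted $|\cS|$ times rather than $|\cS|-1$, and then (iv) needs the constants revisited. Your own suggested route of computing residues modulo $\Isize + \Gsize$ for each $j$ is exactly the right way to find this; the proposal just stops short of carrying it out, and the heuristic sentence above asserts something that the residue check would contradict. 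Your derivations of (ii)--(iv) from (i) are otherwise the same arithmetic as the paper's and are fine once (i) is repaired.
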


\begin{proof}
	Part (i) and (ii) follow by definition.
  To see (iii), we calculate
	\begin{align*}
	\sum_{s \in \cS} w(s) &= \sum_{s\in \cS} \sum_{i\in \cK} \sum_{j\in \cI_i(s)} h_j = \sum_{s\in \cS}\sum_{j\in \cN_s} h_j \\
	&= (|\cS|-1)\sum_{i \in (t)} h_j \in [1-2^{-n},1]\cdot (|\cS|-1). 
	\end{align*}
	where the third inequality follows since part (i) implies that each
	$j \in (t)$ occurs in exactly $(|S|-1)$ many sets $\cN_s$. 
	
	To see (iv), we calculate
	\begin{align*}
	\frac{|\cS|}{\sum_{s\in \cS}w(s)}
		&\stackrel{\text{(iii)}}{\leq} \frac{|\cS|}{(1-2^{-n})(|S|-1)}
		=\frac{1}{1-2^{-n}-\frac{1}{|S|}+\frac{2^{-n}}{|S|}}
		\leq \frac{1}{1-2^{-n}-\frac{1}{|S|}} \\
		&\leq \frac{1}{1-\frac{2}{|S|}} 
		\stackrel{\text{(ii)}}{\leq}
			\frac{1}{1-4\delta} \leq 1+16\delta.
	\end{align*}
	For the third inequality we used that $|\cS| < 2^n$, which is implied by $|\cS|\leq \frac{1}{\delta}$ from (ii) and using
	$\bigl(\frac{1}{\eps}\bigr)^{1/\delta}\leq 2^{n/50}$.
\end{proof}

\begin{proof}[Proof of completeness]
	Suppose the prover is honest. We prove completeness for the set 
	$\cM$ as provided by Lemma~\ref{lem:complFixedS}. This lemma also
	gives that the protocol only outputs pairs of the form $(x,\Pd(x))$
	where $x\in \cM$. 
	Fix any $x \in \cM$	and let $p:=\Pd(x)$.
	Since the sets $\{\overline{\cM}_s \cap \cM \}_{s\in \cS}$ form
	a partition of $\cM$, 
	we have that there is exactly one $s^* \in \cS$
	such that $x \notin \cM_{s^*}$.
	We find
	\begin{align*}
		\Pd_{XP}(x,p)
		&= \sum_{s\in \cS} \Pd_S(s) \Pd_{XP\mid S=s}(x,p)
		= \sum_{s\in \cS \setminus \{s^*\}} \Pd_S(s) \Pd_{XP\mid S=s}(x,p)\\
		&= \sum_{s\in \cS \setminus \{s^*\}}	
			\frac{w(s)}{\sum_{s'\in S}w(s')} \cdot \Pd_{XP\mid S=s}(x,p) \\
		&\stackrel{\text{Lem.~\ref{lem:complFixedS} (i)}}{\in }
		(1\pm 132\eps) \sum_{s\in \cS \setminus \{s^*\}}
					\frac{w(s)}{\sum_{s'\in S}w(s')} \cdot \frac{\Pd(x)}{w(s)}\\
		&= (1\pm 132\eps) \Pd(x) \frac{|\cS|-1}
				{\sum_{s'\in S}w(s')}.
	\end{align*}
	Applying item (iii) of Claim~\ref{claim:setS} gives
	\begin{align*}
		(1-\eps')\Pd(x) &\leq (1-132\eps) \Pd(x) 
		\leq \Pd_{XP}(x,p)
		\leq \frac{1+132\eps}{1-2^{-n}}\Pd(x)\\
		&\leq (1+136\eps)\Pd(x) \leq (1+\eps')\Pd(x),
	\end{align*}
	where we used that $\eps \geq 2^{-n}$, which follows from the assumption $\left(\frac{1}{\eps}\right)^{1/\delta}\leq 2^{n/50}$.
\end{proof}

\paragraph{Proof of soundness.}
To prove soundness, we first define the event $\Bad$. 
	For $s\in \cS$ let 
	$f_s$ be the function given by Lemma~\ref{lem:soundFixedS}.
	The event $\Bad$ occurs if 
	$(S,X,P)=(s,x,p)$ and $p \leq f_s(x)$.

\begin{proof}[Proof of soundness (i)]
	Part (i) of Lemma~\ref{lem:soundFixedS}	states that
	$\Pr[\Bad|S=s] \leq \frac{16\eps}{w(s)}$.
	Thus we find
	\begin{align*}
		\Pr[\Bad] 
		&= \sum_{s\in \cS}\Pd_S(s)\Pr[\Bad|S=s] 
		= \sum_{s\in \cS}\frac{w(s)}{\sum_{s'\in S}w(s')}\Pr[\Bad|S=s] \\
		&\leq \sum_{s\in \cS}\frac{w(s)}{\sum_{s'\in S}w(s')} 
					\frac{16\eps}{w(s)} 
		= \frac{16\eps|S|}{\sum_{s'\in S}w(s')} 
		\leq 16\eps (1+16\delta) \leq 300\eps \leq \eps'.
	\end{align*}
	The second inequality holds by item (iv) of Claim~\ref{claim:setS}, and 
	the last inequality holds since $\delta \leq 1$.
\end{proof}

\begin{proof}[Proof of soundness (ii)]
	Fix any $x \in \{0,1\}^n$. We find
\begin{align*}
		\sum_{p}&\frac{\Pr[(X,P)=(x,p)\land\lnot\Bad]}
													 {p}\\
		&= \sum_{p} \sum_{s\in \cS}
			\frac{\Pd_S(s)\Pr[(X,P)=(x,p)\land \lnot \Bad|S=s]}{p}\\
		&= \sum_{s\in \cS} \Pd_S(s) \sum_{p>f_s(x)}
				\frac{\Pd_{XP\mid S=s}(x,p)}{p} 
		\stackrel{\text{Lem.~\ref{lem:soundFixedS} (ii)}}{\leq} \sum_{s\in \cS} \frac{w(s)}{\sum_{s'\in \cS}w(s')} 
					\frac{(1+6\eps)}{w(s)}\\
		&= \frac{|S|(1+6\eps)}{\sum_{s'\in \cS}w(s')}
		\stackrel{\text{Claim~\ref{claim:setS} (iv)}}{\leq} 
			(1+16\delta)\left(1+6\eps\right).
\end{align*}
Thus, we conclude that
\begin{align*}
	&\sum_{p}\frac{\Pr[(X,P)=(x,p)|\lnot \Bad]}{p}\\
		&\qquad =
			\frac{1}{\Pr[\lnot \Bad]}
			\sum_{p}\frac{\Pr[(X,P)=(x,p)\land \lnot \Bad]}
					 {p}\\
		&\qquad \leq 
		  \frac{1}{1-300\eps}
				(1+16\delta)\left(1+6\eps\right) 
		\leq (1+600\eps)(1+16\delta)(1+6\eps) \\
		&\qquad\leq 1+9000\eps + 16\delta \leq 1+\eps'+\delta'. 
			\qedhere
\end{align*}
\end{proof}

\subsubsection{The verifier's running time}
\label{sec:VerifiersRunningTime}

\begin{proof}[Proof of Lemma~\ref{lem:runningTime}]
	It is sufficient to show that $t$ and the number of
	elements in the sets $\cX_i$ sent by the prover are
	polynomial. Clearly $t = \poly(n/\eps)$. 
	
	We now consider the sets $\cX_i$. Fix any $k\in \cK$
	and consider the corresponding value $m$. We first consider
	the case where $m=0$. By definition of $m$, this implies
	$g \geq \log_2\left(\sum_{i\in \cI_k}2^{i\eps}h_i\right)$. 
	Thus we find
	\begin{align*} 
		 \sum_{i \in \cI'_k}|\cX_i| 
		 &\leq \sum_{i\in \cI_k'}2^{(i+1)\eps}h_i
	   \leq 2^{\eps}\sum_{i\in \cI_k}2^{i\eps}h_i
		 \leq 2^{\eps}2^g
		 \leq 2\cdot 2^g 
		 \leq 4\cdot 2^{\SampGap} \\
		 &= 4 \cdot \frac{t \cdot \Isize' 
										 \cdot 2^{2\cdot \Isize'\eps}}
										{\eps^4}
		 = \frac{8 t\log_2\left(\frac{1}{\eps}\right) 
			 \left(\frac{1}{\eps}\right)^{4/\delta}}{\delta\eps^5}\\
		 &= \poly\left(n\cdot \left(
					\frac{1}{\eps}\right)^{1/\delta}\right).
	\end{align*}
	In case $m>0$ we find
	\begin{align*}
		\sum_{i \in \cI'_k}|\cX_i| 
		&\leq \sum_{i \in \cI'_k} 2^{\eps}
						\frac{2^g2^{(i+1)\eps}h_i}
								 {\sum_{j\in \cI_k}2^{j\eps}h_j}
		= 2^{\eps}2^g2^{\eps}
			\underbrace{\sum_{i \in \cI'_k}\frac{2^{i\eps}h_i}
								 {\sum_{j\in \cI_k}2^{j\eps}h_j}}_{\leq 1}
		\leq 4\cdot 2^g \\
		&= \poly\left(n\cdot \left(\frac{1}{\eps}\right)^{1/\delta}\right),
	\end{align*}
	where the last equality follows as in the first case above.
\end{proof}

\subsubsection{Bounding the probability of the event $p\in \cA_j$}
\label{sec:BoundingProbOfpInAj}

\begin{proof}[Proof of Lemma~\ref{lemma:qvsr}]
	Fix $s\in \cS$. By definition of the protocol, for any $x\in \{0,1\}^n$ and
	$j \notin \bigcup_{\ell \in \cK} \cI'_{\ell}(s)$, 
	we have $\sum_{p\in \cA_j}\Pd_{XP\mid S=s}(x,p)=0=r_s(x,j)$, which implies 
	the lemma. 
	For any 
	$x \in \{0,1\}^n$ and	$j \in \bigcup_{\ell \in \cK} \cI'_{\ell}(s)$,
	we find the following. 
	\begin{align}
		&\sum_{p\in \cA_j}\Pd_{XP\mid S=s}(x,p) 
			= \Pr_{X,P}[X=x \land P \in \cA_j | S=s]
			= \Pr_{X,J}[(X,J)=(x,j)|S=s] \label{eqn:26} \\
		&= \sum_{k} \Pr_{F,J,X}\Bigl[(X,J)=(x,j) \Big| 
			(K,S)=(k,s)\Bigr] \cdot \Pr_K[K=k|S=s] \nonumber \\
		&= \Pr_{F,J,X}\Bigl[(X,J)=(x,j)\Big| 
			(K,S)=(k_s(j),s)\Bigr] \cdot \Pr_K[K=k_s(j)|S=s]\nonumber \\
		& = \frac{w_{k_s(j)}}{w(s)}
			\cdot \Pr_{F,J,X}\Bigl[F(x)=0^m\Big| 
			(K,S)=(k_s(j),s)\Bigr] \nonumber\\
		& \qquad \qquad \cdot \Pr_{F,J,X}\Bigl[(X,J)=(x,j)\Big| 
			(K,F(x),S)=(k_s(j),0^m,s)\Bigr] \nonumber \\
		& = \frac{w_{k_s(j)}}{2^m w(s)}
				\cdot 
				\Pr_{F,J,X}\Bigl[(X,J)=(x,j) \Big| 
			(K,F(x),S)=(k_s(j),0^m,s)\Bigr]\nonumber  \\
		&= \frac{w_{k_s(j)}}{2^m w(s)} 
		\cdot \Pr_J[J=j|(K,F(x),S)=(k_s(j),0^m,s)] \nonumber  \\
		&\qquad \cdot \Pr_{F,X}\Bigl[
			X=x 
			\Big| (K,F(x),S,J)=(k_s(j),0^m,s,j)\Bigr]\nonumber  \\
		&= \frac{w_{k_s(j)}}{2^m w(s)} 
			\cdot \Pr_J[J=j|(K,F(x),S)=(k_s(j),0^m,s)] \label{eqn:18}\\
			&\qquad \cdot \Pr_{X}\Bigl[
					X=x 
			\Big|(K,F(x),J,S)=(k_s(j),0^m,j,s) \land
			  (a)-(c) \land x \in \cX_{j}\Bigr] \label{eqn:19}\\
		&\qquad \cdot	\Pr_{F}\Bigl[
					(a)-(c) \land
					x \in \cX_{j} 
			\Big|(K,F(x),J,S)=(k_s(j),0^m,j,s)\Bigr] \label{eqn:20}
	\end{align}			
	The second inequality in (\ref{eqn:26}) follows because
	$P\in \cA_j$ if and only if $J=j$. 
	The probability in (\ref{eqn:20}) equals $r_s(x, j)$
	since the prover's
	choice of the sets $\cX_i$ is independent of $J$. 
	The probability in (\ref{eqn:18}) equals
	$\frac{h_{j}}{\sum_{i\in \cI_{k_s(j)}}h_i}$.
	The probability in (\ref{eqn:19}) equals $\frac{1}{|\cX_{j}|}$, and to
	bound it we distinguish two cases:
	If $m=0$, condition (b) implies
	\begin{align*}
			\frac{1}
					 {2^{(j+1)\eps} h_{j}}
			\leq \frac{1}{|\cX_{j}|} \leq
			\frac{1}
				 {2^{j\eps} h_{j}},
	\end{align*}
	and if $m>0$, condition (b) implies
	\begin{align*}
			2^{-\eps}\frac{\sum_{i\in \cI_{k_s(j)}}2^{i\eps}h_i}
					 {2^{g} 2^{(j+1)\eps} h_{j}}
			\leq \frac{1}{|\cX_{j}|} \leq
			2^{\eps}\frac{\sum_{i\in \cI_{k_s(j)}}2^{i\eps}h_i}
				 {2^{g} 2^{j\eps} h_{j}}.
	\end{align*}
	Using the definition of $m$ and $w_{k_s(j)}$, this gives the claim.
\end{proof}

\subsubsection{Proof of completeness: the details}
\label{sec:SamplingProofCompletenessDetails}

\begin{claim}
	\label{claim:sumwilarge}
	$\sum_{i\in (t)}h_i \geq 1-2^{-n}$.
\end{claim}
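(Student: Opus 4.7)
The plan is to argue that the buckets $\cB_0, \ldots, \cB_t$ collectively cover every element of $\{0,1\}^n$ whose probability under $\Pd$ is non-negligible, and that the remaining mass is at most $2^{-n}$ by a simple counting bound.

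First, I would unwind the definitions: by Definition~\ref{def:histogram}, $\bigcup_{i \in (t)} \cA_i = \bigl(2^{-(t+1)\eps},\, 1\bigr]$, so $\bigcup_{i \in (t)} \cB_i = \{x \in \{0,1\}^n : \Pd(x) > 2^{-(t+1)\eps}\}$. Consequently
\begin{align*}
    \sum_{i \in (t)} h_i
    = \Pr_{x \leftarrow \Pd}\!\left[\Pd(x) > 2^{-(t+1)\eps}\right]
    = 1 - \Pr_{x \leftarrow \Pd}\!\left[\Pd(x) \leq 2^{-(t+1)\eps}\right].
\end{align*}

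Next I would use the choice $t = \lceil 2n/\eps \rceil$, which gives $(t+1)\eps > 2n$, and therefore $2^{-(t+1)\eps} < 2^{-2n}$. The set $\cE := \{x \in \{0,1\}^n : \Pd(x) \leq 2^{-(t+1)\eps}\}$ has cardinality at most $2^n$, so
\begin{align*}
    \Pr_{x \leftarrow \Pd}[x \in \cE]
    = \sum_{x \in \cE} \Pd(x)
    \leq |\cE| \cdot 2^{-(t+1)\eps}
    \leq 2^n \cdot 2^{-2n}
    = 2^{-n}.
\end{align*}
Combining the two displays yields $\sum_{i\in (t)} h_i \geq 1 - 2^{-n}$, which is exactly the claim.

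There is no real obstacle here; the only thing to be careful about is getting the strict inequality $(t+1)\eps > 2n$ right, so that $2^{-(t+1)\eps} \leq 2^{-2n}$ holds even without assuming $\eps$ is rational or that $t$ is tight. Since $t \geq 2n/\eps$ and $\eps > 0$, this is automatic.
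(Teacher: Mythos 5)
Your proof is correct and follows essentially the same approach as the paper: identify the elements omitted from the buckets as exactly those with $\Pd(x) \leq 2^{-(t+1)\eps} \leq 2^{-2n}$, and bound their total mass by $2^n \cdot 2^{-2n} = 2^{-n}$.
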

\begin{proof}
	Note that the only elements not in $\bigcup_{i\in (t)}\cB_i$
	are the elements $\overline{\cB}:=\{x: \Pd(x) \leq 
	2^{-(t+1)\eps}\}$. 
	By definition of $t$ we have $2^{-(t+1)\eps}\leq 
	2^{-\lceil 2n/\eps\rceil \eps} \leq 2^{-2n}$. 
	The claim follows by observing that
	$\sum_{x\in \overline{\cB}}\Pd(x) \leq |\{0,1\}^n|\cdot 2^{-2n} = 2^{-n}$.
\end{proof}

\begin{claim}
	\label{claim:propertiesOfM}
	We have
	\begin{enumerate}[(i)]
		\item $\Pd(\cM) \geq 1-2\eps$
		\item The sets $\{\overline{\cM}_s \cap \cM\}_{s\in \cS}$ 
		form a partition of $\cM$.
	\end{enumerate}
\end{claim}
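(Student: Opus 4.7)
The plan is to prove each part directly from the definitions, with essentially no technical difficulty — both statements are bookkeeping over the histogram structure.

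For (i), recall $\cM = \bigcup_{i\in \cN}\cB_i$ with $\cN = \{j : h_j \geq \eps/(2t)\}$. Since the $\cB_i$ are pairwise disjoint by Definition~\ref{def:histogram}, we have $\Pd(\cM) = \sum_{i\in \cN} h_i$. First I would invoke Claim~\ref{claim:sumwilarge} to get $\sum_{i\in (t)} h_i \geq 1-2^{-n}$, so the only probability mass missing from $\bigcup_{i\in (t)}\cB_i$ is at most $2^{-n}$. Next, I would bound the contribution of the discarded small buckets: since $|(t)| = t+1$ and each $i\in (t)\setminus \cN$ satisfies $h_i < \eps/(2t)$, we get
\begin{equation*}
    \sum_{i\in (t)\setminus \cN} h_i < (t+1)\cdot \frac{\eps}{2t} \leq \eps.
\end{equation*}
Combining, $\Pd(\cM) \geq 1 - 2^{-n} - \eps \geq 1 - 2\eps$, where the last inequality uses $\eps \geq 2^{-n}$. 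This last inequality follows from the standing assumption $(1/\eps)^{1/\delta} \leq 2^{n/50}$, which implies $\eps \geq 2^{-n/50} \geq 2^{-n}$.

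For (ii), the key input is Claim~\ref{claim:setS}(i), which says that $\{\overline{\cN}_s\}_{s\in \cS}$ is a partition of $(t)$, where $\cN_s = \bigcup_{j\in \cK}\cI_j(s)$. Since $\cN \subseteq (t)$ and the $\cB_i$ are pairwise disjoint, $\cM = \bigsqcup_{i\in \cN} \cB_i$. Unpacking the definition of $\cM_s$, we have $\overline{\cM}_s \cap \cM = \bigsqcup_{i\in \cN\cap \overline{\cN}_s}\cB_i$. Hence for each fixed $i \in \cN$, there is exactly one $s\in \cS$ with $i\in \overline{\cN}_s$, so $\cB_i$ is contained in $\overline{\cM}_s \cap \cM$ for that unique $s$ and is disjoint from $\overline{\cM}_{s'}\cap \cM$ for all other $s'$. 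It follows that $\{\overline{\cM}_s \cap \cM\}_{s\in \cS}$ is a partition of $\cM$.

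There is no real obstacle here — the only subtlety is remembering to use the assumption on $\eps$ to absorb the $2^{-n}$ term in part~(i), and noticing in part~(ii) that the disjointness of the $\cB_i$ transfers the partition structure of the index set $(t)$ directly to a partition of $\cM$.
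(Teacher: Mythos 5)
Your proof is correct and follows essentially the same route as the paper: for (i) it's the same chain $\Pd(\cM) \geq \sum_{i\in(t)} h_i - (\text{discarded mass}) \geq 1-2^{-n}-\eps \geq 1-2\eps$ via Claim~\ref{claim:sumwilarge} and the standing bound on $\eps$, and for (ii) the paper simply says ``follows by definition'' while you spell out the transfer of the partition $\{\overline{\cN}_s\}_{s\in\cS}$ of $(t)$ (Claim~\ref{claim:setS}(i)) to a partition of $\cM$ through the disjointness of the buckets $\cB_i$.
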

\begin{proof}
	To see (i), we note that
	$\Pd(\cM) 
		\geq \sum_{i\in (t)}h_i-t\cdot \frac{\eps}{2t}
		\geq 1-2^{-n}-\eps \geq 1-2\eps$,
	where the first inequality follows by definition of $\cN$, 
	and the second inequality holds by 
	Claim~\ref{claim:sumwilarge}.
	Part (ii)	follows by definition.
\end{proof}

\begin{proof}[Proof of Claim~\ref{claim:BareBig}]
		Using the definition of $m$ and 
		$h_i \geq \frac{\eps}{2t}$, we find
		\begin{align*}
			\frac{|\cB_i|}{2^m} &= 
			\frac{|\cB_i|2^g}
					 {\sum_{{\ell}\in \cI_{k}}2^{\ell\eps}h_{\ell}}			
			\stackrel{\text{Claim~\ref{claim:sizeBi}}}{\geq}	
			\frac{2^g2^{i\eps}h_i}
					 {\sum_{{\ell}\in \cI_{k}}2^{\ell\eps}h_{\ell}}
			\geq
				\frac{2^g2^{i\eps}h_i}{\sum_{\ell \in \cI_{k}} 
				2^{\max(\cI_{k})\eps}h_{\ell}} \\
			&\geq
				\frac{2^g2^{i\eps}h_i}{\sum_{\ell \in \cI_{k}} 
				2^{(i+\Isize)\eps}h_{\ell}}
			\geq
				\frac{2^g2^{i\eps}h_i}
				{2^{(i+\Isize)\eps}}	 
			= \frac{2^gh_i}	
							{2^{\Isize\cdot \eps}}		\\	 
			&\geq
				 \frac{2^{\SampGap} \eps}
				 		  {2 t2^{\Isize\cdot \eps}} 
		  \stackrel{\text{Claim~\ref{claim:ParamProperties} (iii)}}{\geq }
				\frac{\Isize}{4\eps^3}.\qedhere
		\end{align*}
	\end{proof}
	
\begin{proof}[Proof of Lemma~\ref{lem:RejectWithSmallProb}]
	By Claim~\ref{claim:sumwilarge}, the verifier does not reject after
	receiving the prover's first message.

	Fix $s\in \cS$. We first bound the probability that the verifier rejects
	because $j \notin \cI_k'$
	given $S=s$. 	
	First note that
	\begin{align*}
	\Pr[J=j|S=s] 
		&= \Pr[J=j \land K=k_s(j)|S=s] \\
		&= \Pr[J=j |(S,K)=(s,k_s(j))]
			\cdot \Pr[K=k_s(j)|S=s] \\
		&\leq \frac{w_{k_s(j)}}{w(s)}\cdot \frac{h_j}{w_{k_s(j)}} 
		= \frac{h_j}{w(s)},
	\end{align*}
	where we write an inequality above because the verifier 
	may reject in (a)-(c) before choosing $j$.
	Thus, using the definition of $\cN$, we find
	\begin{align*}
		\Pr[J \notin \cN] 
		&=		\sum_{j\notin \cN} \Pr[J=j|S=s]
		\leq	\sum_{j\notin \cN} \frac{h_j}{w(s)} 
		\leq \sum_{j\notin \cN} \frac{\eps}{2 \cdot t \cdot w(s)}\\
		&\leq t \cdot \frac{\eps}{2 \cdot t \cdot w(s)}
		=    \frac{\eps}{2 \cdot w(s)}.
	\end{align*}	
	
  We bound the probability that the verifier rejects
	in (a)-(c) given $S=s$.	
	Since the prover is honest, (a) and (c) always hold. It remains
	to bound the probability that the verifier rejects in (b). If 
	$m=0$, the verifier does not reject by Claim~\ref{claim:sizeBi}. 
	So assume $m>0$ and suppose
	$K=k$ for some $k\in \cK$. Then for any 
	$i \in \cI_k'$ we find
	\begin{align*}
		\Pr_F\Bigl[ |\cX_{i}| &\notin
										\Bigl[2^{-\eps} 
											\frac{2^{g} 2^{i\eps} h_i}
													 {\sum_{{\ell}\in \cI_{k}}
													2^{\ell\eps}h_{\ell}},
										2^{\eps}  
											\frac{2^{g} 2^{(i+1)\eps} h_i}
													 {\sum_{{\ell}\in
													  \cI_{k}}2^{\ell\eps}h_{\ell}}\Bigr]
										\Bigr] 
		\leq 	
			\Pr_F \Bigl[ |\cX_{i}| \notin
										\Bigl[2^{-\eps}  
											\frac{|\cB_i|}
													 {2^m},
										2^{\eps} 
											\frac{|\cB_i|}
													 {2^m}
										\Bigr] \Bigr] \\
		&\leq
		\frac{2^m}{(\eps/2)^2|\cB_i|}
		\leq \frac{16\eps}{\Isize},
	\end{align*}	
	where the first inequality holds by definition of $m$ and 
	Claim~\ref{claim:sizeBi}, the second inequality is an application 
	of part (i) of the hash mixing lemma (Lemma~\ref{lemma:hashMixingLemma}) using 
	$(1\pm \eps/2)\subseteq 2^{\pm \eps}$, and the third inequality
	follows using Claim~\ref{claim:BareBig}. Noting that 
	$|\cI_k'|\leq \Isize$ and by the union bound, the
	probability that there exists some $i\in \cI_k'$ such that $|\cX_i|$
	is not in the desired interval is at most $16\eps$. 
	
	Thus, we conclude that the probability that the verifier rejects is
	bounded by
	\begin{align*}
		&\sum_{s\in \cS}\Pd_S(s)\cdot 
				 \left(\frac{\eps}{2 \cdot w(s)}+16\eps \right)
		= 16\eps + \sum_{s\in \cS}\frac{w(s)}{\sum_{s'\in \cS }w(s')}\cdot \frac{\eps}{2 \cdot w(s)} \\
		&\qquad = 16\eps + 
			\frac{\eps}{2}\frac{|\cS|}{\sum_{s'\in \cS }w(s')}
		\stackrel{\text{Claim~\ref{claim:setS} (iv)}}{\leq} 16\eps + \frac{\eps}{2}(1+16\delta) \leq 25\eps.\qedhere
	\end{align*} 
\end{proof}
				
\begin{proof}[Proof of Lemma~\ref{lem:complFixedS}]
	For a fixed $k$, let $m_s(k)$ be the value of $m$ in the protocol 
	given $K=k$ and $S=s$.
	By definition of the protocol, we have $\Pd_{XP\mid S=s}(x,p)=0$ if $p\neq \Pd(x)$ or $x\notin \cM \cap \cM_s$.
	Fix any $s\in \cS$ and $x \in \cM \cap \cM_s$. Then for some $j$ we have
	$x \in \cB_{j}$, $j \in \cI'_{k_s(j)}$ and $h_j\geq \frac{\eps}{2t}$.
	Since the prover is honest, we have
	$	r_s(x,j) = \Pr_{F}[(b)|(K,F(x),S)=(k_s(j),0^m,s)]$.
	In case $m_s(k_s(j))=0$ we find
	\begin{align*}
			r_s(x,j) = \Bigl[
												\forall i\in \cI'_{k_s(j)}: 
											2^{i\eps} h_i
											\leq |\cX_{i}| \leq
											2^{(i+1)\eps} h_i 
										\Bigr]  = 1,
	\end{align*}
	since the event $F(x)=0^0$ is always true, the honest prover defines 
	$\cX_i := \cB_i$, and we used Claim~\ref{claim:sizeBi}.

	We now consider the case $m_s(k_s(j))>0$. We find
	\begin{align}
					&r_s(x,j)
					= \Pr_{F}[(b)|(K,F(x),S)=(k_s(j),0^m,s)]  \nonumber\\
					&= \Pr_{F}\Bigl[
												\forall i\in \cI'_{k_s(j)}: 
										2^{-\eps}
											\frac{2^{g} 2^{i\eps} h_i}
													 {\sum_{\ell \in \cI_{k_s(j)}}
													 2^{\ell\eps}h_{\ell}}
										\leq |\cX_{i}| \leq
										2^{\eps}
											\frac{2^{g} 2^{(i+1)\eps} h_i}
													 {\sum_{\ell \in \cI_{k_s(j)}}
													 2^{\ell\eps}h_{\ell}} \nonumber\\
										&\qquad\qquad\qquad\qquad\qquad\qquad
										\Big|(F(x),S)=(0^m,s)\Bigr] \nonumber \\
					&\geq
							1- \sum_{i\in \cI'_{k_s(j)}}
									\Pr_{F}\Bigl[ |\cX_{i}| \notin
										\Bigl[2^{-\eps}
											\frac{2^{g} 2^{i\eps} h_i}
													 {\sum_{{\ell}\in \cI_{k_s(j)}}
													2^{\ell\eps}h_{\ell}},
										2^{\eps}
											\frac{2^{g} 2^{(i+1)\eps} h_i}
													 {\sum_{{\ell}\in
													  \cI_{k_s(j)}}2^{\ell\eps}h_{\ell}}\Bigr]\nonumber\\
										&\qquad\qquad\qquad\qquad\qquad\qquad
										\Big|(F(x),S)=(0^m,s)\Bigr]\nonumber \\
					&\geq 
							1- \sum_{i\in \cI'_{k_s(j)}}
							\Pr_{F} 
					 \Bigl[|\cX_i|\notin\Bigl[
					 				2^{-\eps}\frac{|\cB_i|}{2^m},
					 				2^{\eps}\frac{|\cB_i|}{2^m}\Bigr]
					 \Big|(F(x),S)=(0^m,s)\Bigr] 
					\label{eqn:15} 
	\end{align}
	where the first inequality follows by the union bound, and the second
	inequality holds by definition of $m$ and 
	Claim~\ref{claim:sizeBi}. To bound the
	probabilities in the sum in (\ref{eqn:15}), we distinguish two cases. 
	The first case is $i\neq j$. By assumption we have $x\in \cB_{j}$ and
	thus $x \notin \cB_i$. We find
	\begin{align}
		&\Pr_{F} 
					 \Bigl[|\cX_i|\notin\Bigl[
					 				2^{-\eps}\frac{|\cB_i|}{2^m},
					 				2^{\eps}\frac{|\cB_i|}{2^m}\Bigr]
					 \Big|(F(x),S)=(0^m,s)\Bigr] \nonumber \\
		&\qquad\qquad\leq
		\frac{2^m}{(\eps/2)^2|\cB_i|}
		\leq \frac{16\eps}{\Isize},
		\label{eqn:27}
	\end{align}	
	where the first inequality is an application 
	of part (ii) of the hash mixing lemma (Lemma~\ref{lemma:hashMixingLemma}) using $(1\pm\eps/2)\in
	2^{\pm\eps}$, and the second inequality
	follows using Claim~\ref{claim:BareBig}.
	
	The second case is $i=j$. Then by assumption we have $x \in \cB_i$. 
	We find
	\begin{align}
		\Pr_{F} 
					 &\Bigl[|\cX_i|\notin\Bigl[
					 				2^{-\eps}\frac{|\cB_i|}{2^m},
					 				2^{\eps}\frac{|\cB_i|}{2^m}\Bigr]
					 \Big|(F(x),S)=(0^m,s)\Bigr] \nonumber \\
		&\leq	 \Pr_{F} 
					 \Bigl[|\cX_i|\notin\Bigl[
					 				1+\left(1-\frac{\eps}{4}\right)\frac{|\cB_i|-1}{2^m},
					 				1+\left(1+\frac{\eps}{4}\right)\frac{|\cB_i|-1}{2^m}\Bigr] \nonumber\\
					 &\qquad\qquad\qquad\qquad\qquad\qquad\qquad\qquad\qquad
					\Big|(F(x),S)=(0^m,s)\Bigr] \nonumber\\
		&\leq
		\frac{2\cdot 2^m}{(\eps/4)^2|\cB_i|}
		\leq
			\frac{2^7 \eps}{\Isize}.
			\label{eqn:28}
	\end{align}
	The second inequality is an application of the hash mixing lemma (Lemma~\ref{lemma:hashMixingLemma}) part 
	(iii) and using that $|\cB_i|\geq 2$ by Claim~\ref{claim:BareBig} combined with $\eps < 1/2$, 
	and the first inequality follows because on the one hand, we have
	\begin{align*}
		1+\left(1-\frac{\eps}{4}\right)\frac{|\cB_i|-1}{2^m} &\geq
		\left(1-\frac{\eps}{4}\right)(1+\frac{|\cB_i|-1}{2^m}) = \left(1-\frac{\eps}{4}\right)
		\frac{2^m-1+|\cB_i|}{2^m} \\
		&\geq \left(1-\frac{\eps}{4}\right)\frac{|\cB_i|}{2^m} 
		\geq (1-\frac{\eps}{2})\frac{|\cB_i|}{2^m}
		\geq 2^{-\eps}\frac{|\cB_i|}{2^m},
	\end{align*}
	where we used $2^m \geq 1$, and on the other hand it holds that
	$
		1+\left(1+\frac{\eps}{4}\right)\frac{|\cB_i|-1}{2^m}
		\leq 2^{\eps}\frac{|\cB_i|}{2^m}$.
	To see this, we calculate
	\begin{align*}	
		2^{\eps}&\frac{|\cB_i|}{2^m}-
		\Bigl(1+\left(1+\frac{\eps}{4}\right)\frac{|\cB_i|-1}{2^m}\Bigr) 
		\geq 2^{\eps}\frac{|\cB_i|}{2^m}-
		\left(1+\frac{\eps}{4}\right)\Bigl(1+\frac{|\cB_i|-1}{2^m}\Bigr) \\
		&\geq 
		2^{\eps}\frac{|\cB_i|}{2^m}
			-\left(1+\frac{\eps}{4}\right)\Bigl(1+\frac{|\cB_i|}{2^m}\Bigr) 
		= \frac{\eps}{4}\Bigl(\frac{|\cB_i|}{2^m} - 1\Bigr)-1 \\
		&\stackrel{\text{Claim~\ref{claim:BareBig}}}{\geq} \frac{\eps}{4}\Bigl(\frac{\Isize}{4\eps^3}-1\Bigr) - 1 
		\geq 0,
	\end{align*}
	where the last inequality follows since $\eps < 1/10$ and
	$\Isize \geq 1$. 
	
	Finally, plugging (\ref{eqn:27}) and (\ref{eqn:28}) into (\ref{eqn:15}), we find
	\begin{align*}
		r_s(x,j) \geq 1 - \Bigl((\Isize-1)
					\frac{16 \eps}{ \Isize}
					+ \frac{2^7 \eps}{\Isize} 
						\Bigr) 
		\geq 1-2^7\eps.
	\end{align*}		
	
	We showed that in both cases $m_s(k_s(j))=0$ and $m_s(k_s(j))>0$, 
	we have $1-2^7\eps \leq r_s(x, j) \leq 1$.
	Plugging this into Lemma~\ref{lemma:qvsr} and 
	since the honest prover always sends $p=\Pd(x)$, we find
	\begin{align*}
		q\bigl(x,\Pd(x)\bigr) \in 
			&\left[
				2^{-2\eps}
				(1-2^7\eps) \frac{2^{-j\eps}}{w(s)},
				2^{\eps}
				\frac{2^{-j\eps}}{w(s)}			
			\right] 
	\subseteq 
			\left[
				2^{-2\eps}
				(1-2^7\eps) \frac{\Pd(x)}{w(s)},
				2^{\eps}
				\frac{2^{\eps}\Pd(x)}{w(s)}	
			\right]\\
	&\subseteq (1\pm 132\eps)\frac{\Pd(x)}{w(s)},
	\end{align*}
	where we used that
	$x\in \cB_j$ implies $2^{-j\eps}\in [\Pd(x), 2^{\eps}\Pd(x)]
	$ for the first inclusion, and $2^{\pm 2\eps} \subseteq 
	(1\pm 4\eps)$ for the second inclusion.
\end{proof}

\subsubsection{Proof of soundness: the details}
\label{sec:SamplingProofSoundnessDetails}
\begin{proof}[Proof of Claim~\ref{claim:rsumtoone}]
	Letting $\cI_k := \cI_k(s)$ and $\cI'_k := \cI'_k(s)$ we find
 	\begin{align*}
 		\sum_{j \in \cI_k}
							 r_s(x,j)
		&= \sum_{j \in \cI'_k}
							 r_s(x,j) \\
		&=\sum_{j \in \cI'_k}
							 \Pr_{F}\Bigl[(a)-(c) \land	x \in \cX_{j}
							 \Big|(K,F(x),S)=(k,0^m,s)\Bigr]  \\
		&= \sum_{j \in \cI'_k}
					\sum_{f:f(x)=0^m}
							 \Pr_F[F=h|(K,F(x),S)=(k,0^m,s)]	\\
		&\qquad \cdot			\Bigl[(a)-(c) \land	x \in \cX_{j} 
							\Big|(K,F(x),F,S)=(k,0^m,h,s)\Bigr]  \\
		&= \sum_{f:f(x)=0^m}
							 \Pr_F[F=h|(K,F(x),S)=(k,0^m,s)]\\
		&\qquad \cdot
							\sum_{j \in \cI'_k}
							\Bigl[(a)-(c) \land	x \in \cX_{j}
							\Big|(K,F(x),F,S)=(k,0^m,h,s)\Bigr]\\
		&\leq \sum_{f:f(x)=0^m}
							 \Pr_F[F=h|(K,F(x),S)=(k,0^m,s)]
					\cdot 1 =  1,
 	\end{align*}
 	where the last inequality holds since the prover either makes the 
	verifier reject in (c), or sends sets
 	$\cX_i$ that are pairwise disjoint.
\end{proof}

\section{Private Coins Versus Public Coins in Interactive Proofs}
\label{chap:PrivCoinsPubCoins}

\subsection{Theorem Statement and Discussion}
\label{sec:PubCoinThmAndDiscussion}
An interactive proof has completeness $c$ and soundness $s$ if the verifier accepts with probability at least $c$ in case the prover's claim is true, and at most $s$ if the claim is false. Recall from Definition~\ref{def:InteractiveProofs} that $\IP$ denotes the set of languages that admit a $k(n)$-round private-coin interactive proof (we write $\textsf{rounds}=k(n)$) with message length $m(n)$ ($\textsf{msg size}=m(n)$), $\ell(n)$ many random coins ($\textsf{coins} = \ell(n)$), completeness $c(n)$ ($\textsf{compl}\geq c(n)$) and soundness $s(n)$ ($\textsf{sound}\leq s(n)$), where the verifier runs in time $t(n)$ ($\textsf{time}= t(n)$), where $n$ denotes the input length. Also recall that for the public-coin case, the set $\AM$ is defined analogously. 

We state our theorem for the case where the public-coin verifier calls the private-coin verifier exactly once. This highlights the overhead in the verifier's running time, and allows to compare our transformation to the one of \cite{GS86} in a natural way. We remark that before applying the transformation as given in the theorem, one can repeat the private-coin protocol in parallel to amplify completeness and soundness. However, this requires that the original private-coin verifier is called several times. Our main result can be stated as follows:
\begin{theorem}\label{thm:GoldwasserSipserUsingSampling}
	For any functions $c, s, \eps, \delta: 
	\mathbb{N}\rightarrow (0,1)$, and 
	$k, t, m, \ell: \mathbb{N} \rightarrow \mathbb{N}$, if $1/\eps, 1/\delta, k, t, m, \ell$ are time-constructible, then
	\begin{align*}
		&\IP\left(
			\begin{array}{l@{\hspace{0.4em}}l@{\hspace{0.4em}}l@{\hspace{0.4em}}}
					\textsf{rounds}& =    & k	\\
					\textsf{time}  & =    & t \\
					\textsf{msg size}  & =    & m \\
					\textsf{coins}  & =    & \ell \\
					\textsf{compl} & \geq & c \\
					\textsf{sound} & \leq & s \\
			\end{array}
		\right) \subseteq
		\AM\left(
			\begin{array}{l@{\hspace{0.4em}}l@{\hspace{0.4em}}l@{\hspace{0.4em}}}
					\textsf{rounds}& =    & 4k+3	\\
					\textsf{time}  & =    & t+ k\cdot \poly
							\left((m+\ell) \cdot \left(\frac{1}{\eps}\right)
																 ^{1/\delta}\right) 	\\
					\textsf{msg size}  & =    & \poly
							\left((m+\ell) \cdot \left(\frac{1}{\eps}\right)
																 ^{1/\delta}\right) \\
					\textsf{coins}  & =    & \poly
							\left((m+\ell) \cdot \left(\frac{1}{\eps}\right)
																 ^{1/\delta}\right) \\
					\textsf{compl} & \geq & c-2(k+1)\eps \\
					\textsf{sound} & \leq & (1+\eps + \delta)^{k+1} 
																	s +(k+1)\eps  \\
			\end{array}
		\right)
	\end{align*}
	Moreover, in the protocol that achieves this
	transformation, the public-coin verifier calls the 
	private-coin verifier exactly once.
\end{theorem}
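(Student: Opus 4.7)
The plan is to construct a public-coin protocol that, at each round $i$ of the given private-coin protocol, invokes the sampling protocol of Theorem~\ref{thm:main} to sample the verifier's $i$-th message. At the start of round $i$ both parties share the transcript prefix $(m_0,a_0,\ldots,m_{i-1},a_{i-1})$, so the conditional distribution $\Pd_i$ of the private-coin verifier's $i$-th message given this prefix is well-defined: it equals the fraction of $r\in\{0,1\}^{\ell(|x|)}$ consistent with the prefix that extend to $m_i$. The all-powerful prover can compute $\Pd_i$ exactly and play honestly in the sampling subprotocol with parameters $(m,\eps,\delta)$; its output $(m_i,p_i)$ becomes the verifier's $i$-th message, and the prover replies with $a_i$. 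After $k$ such rounds the prover sends $r$, and the verifier runs $V$ on $(x,r,m_0,a_0,\ldots,m_{k-1},a_{k-1})$ and accepts iff $V$ accepts; since Definition~\ref{def:InteractiveProtocol} forces $V$ to reject non-consistent transcripts, this single invocation of $V$ simultaneously checks that $r$ is consistent with the sampled $m_i$'s and that the private verifier accepts.

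Completeness would follow from a union bound over the completeness guarantees of Theorem~\ref{thm:main}. Conditioned on no sampling failure, each $(m_i,p_i)$ equals $(m_i,\Pd_i(m_i))$ up to a $(1\pm\eps)$ factor, so the induced joint distribution over transcripts is $O(k\eps)$-close to that of the private-coin protocol. Accounting for the sampling-failure probability and the multiplicative drift at each of the $k$ rounds plus the final step of sampling $r$ gives completeness at least $c-2(k+1)\eps$.

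Soundness is the main obstacle. Fix any deterministic public-coin prover $P^*$ and let $\Bad_i$ denote the bad event from Theorem~\ref{thm:main} for the $i$-th sampling subprotocol; by the union bound $\Pr[\bigvee_i \Bad_i]\le (k+1)\eps$. Conditioned on $\bigwedge_i \lnot \Bad_i$, the sampling soundness gives $\sum_{p_i}\Pr[(m_i,p_i)\mid \text{hist}_i,\lnot \Bad]/p_i \le 1+\eps+\delta$ at each round and for every $m_i$. I would then construct a randomized private-coin prover $\tilde P^*$ that, given a private-coin history $(m_0,a_0,\ldots,m_{i-1},a_{i-1},m_i)$, samples the randomness of the public-coin sampling subprotocols conditioned on producing the prescribed $m_j$'s, and replies with the $a_i$ that $P^*$ would play on the induced public-coin transcript. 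Decomposing $\Pr[\text{pub acc}\land \lnot \Bad]$ as a sum over accepting transcripts $\tau$ of $\sum_{\vec p}\Pr[\tau,\vec p]$ and iterating the sampling-soundness inequality over $i=0,\ldots,k-1$, one obtains
\[
\Pr[\text{pub acc}\land \lnot \Bad]\;\le\;(1+\eps+\delta)^{k+1}\,\Pr[\text{priv acc with }\tilde P^*],
\]
the extra factor absorbing the final consistency-sampling of $r$. Applying the private-coin soundness $\Pr[\text{priv acc}]\le s$ then yields the claimed soundness $(1+\eps+\delta)^{k+1}s+(k+1)\eps$.

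For complexity: each of the $k$ sampling subprotocols from Section~\ref{sec:SamplingProtocolDescription} is constant-round and can be interleaved with the prover's $a_i$ response, with the final $r$-message contributing an additive constant, giving $4k+3$ rounds overall; its verifier-time, message-size, and coin-length are each $\poly((m+\ell)(1/\eps)^{1/\delta})$, with $m+\ell$ arising because $\Pd_i$ is a distribution over $m$-bit messages induced by $\ell$ random bits. The verifier calls $V$ exactly once, at the final step, contributing the additive $t$ to the verifier-time. The main technical obstacle is the soundness reduction: the sampling soundness is phrased in terms of the emitted probabilities $p_i$ rather than the target probabilities $\Pd_i(m_i)$, so the definition of $\tilde P^*$ and the iteration of $\sum_p \Pd_{XP}(x,p)/p\le 1+\eps+\delta$ across rounds must be done with care so that the cross-round dependence of the sampling subprotocols only costs a multiplicative factor of $(1+\eps+\delta)$ per round.
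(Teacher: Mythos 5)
Your proposal is missing the single most important ingredient of the paper's protocol: the verifier's final check $\prod_{i=0}^{k} p_i = 2^{-\ell}$ (condition (b) in Section~\ref{sec:PubCoinTheProtocol}). In your protocol the emitted probabilities $p_i$ play no role in the acceptance decision, and then the sampling-soundness guarantee $\sum_{p}\Pd_{XP}(x,p)/p\le 1+\eps+\delta$ is vacuous for your purposes: it constrains $\Pd_{XP}(x,p)$ only relative to the \emph{reported} $p$, and a cheating prover is free to report whatever it likes. Concretely, a cheating prover can, in the $i$-th sampling subprotocol, behave as the honest prover for a \emph{fake} distribution such as a point mass on a message $m_i^*$ of its choosing; by the completeness guarantee of Theorem~\ref{thm:main} the subprotocol then outputs $m_i^*$ with probability close to $1$. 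The prover can thus steer the transcript to any $(m_0^*,a_0,\ldots,m_{k-1}^*,a_{k-1})$ and then produce (or point-mass-sample) any consistent $r$ for which $V$ accepts. For any $x\notin L$ with nonzero acceptance probability this makes your $V'$ accept with probability close to $1$, so your claimed soundness bound is false.

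The product check is precisely what makes the sampling soundness bite: conditioning on $(p_0,\ldots,p_{k-1})$, check (b) forces $p_k=(2^{\ell}\prod_{i<k}p_i)^{-1}$, and then soundness (ii) of Theorem~\ref{thm:main} gives, for every $r^*$,
\begin{align*}
\Pr\bigl[R^*=r^*\ \land\ P_k=(2^{\ell}\textstyle\prod_{i<k}p_i)^{-1}\ \big|\ \cdot\ \bigr]\le \frac{1+\eps+\delta}{2^{\ell}\prod_{i<k}p_i},
\end{align*}
which is the engine of the paper's induction (Claims~\ref{claim:LastSamplingStep} and~\ref{claim:SamplingInduction}); it also does not harm completeness, since for the honest prover $\Pr_r[\Gamma_{k-1}=\gamma_{k-1}]\cdot\Pr_r[r\mid\gamma_{k-1}]=2^{-\ell}$. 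You yourself identify the gap (``the sampling soundness is phrased in terms of the emitted probabilities $p_i$ rather than the target probabilities'') but never resolve it, and your reduction to a ``randomized private-coin prover $\tilde P^*$'' cannot resolve it either without the product check, because there is nothing pinning the $p_i$'s down. Separately, your protocol text says the prover ``sends $r$'' while your completeness and soundness sketches both speak of ``sampling $r$''; the paper samples $(r^*,p_k)$ via the sampling protocol, and the value $p_k$ is needed precisely for the product check.
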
 

We state the following corollary that shows two interesting special cases for specific parameter choices. 
\begin{corollary}
	\label{cor:GoldwasserSipserUsingSampling}
	The following inclusions hold:
	\begin{enumerate}
		\item[(i)]
			For any functions $t,m, \ell: \mathbb{N}\rightarrow \mathbb{N}$, polynomial $k(n)$ and inverse polynomial $\gamma(n)<1/5$, if $t,m,\ell,k$ and $1/\gamma$ are time-constructible, we have
\end{enumerate}
	\begin{align*}
		&\IP\left(
			\begin{array}{lll}
					\textsf{rounds}& =    & k	\\
					\textsf{time}  & =    & t \\
					\textsf{msg size}  & =    & m \\
					\textsf{coins}  & =    & \ell \\
					\textsf{compl} & \geq & 2/3 + \gamma \\
					\textsf{sound} & \leq & 2^{-(k+5)} \\
			\end{array}
		\right) \subseteq
		\AM\left(
			\begin{array}{lll}
					\textsf{rounds}& =    & 4k+3	\\
					\textsf{time}  & =    & t+ 
													\poly((m+\ell)\frac{k}{\gamma})	\\
					\textsf{msg size}  & =    & \poly((m+\ell)\frac{k}{\gamma}) \\
					\textsf{coins}  & =    & \poly((m+\ell)\frac{k}{\gamma}) \\
					\textsf{compl} & \geq & 2/3 \\
					\textsf{sound} & \leq & 1/3 \\
			\end{array}
		\right)
	\end{align*}	

\begin{enumerate}
		\item[(ii)] For any functions $k,t,m,\ell:\mathbb{N}\rightarrow\mathbb{N}$, and $\gamma, \nu: \mathbb{N}\rightarrow (0,1)$, $\gamma \leq \nu$, if $k,t,m,\ell,1/\gamma, 1/\nu$ are time-constructible, then
\end{enumerate}
		\begin{align*}
		&\IP\left(
			\begin{array}{l@{\hspace{0.3em}}l@{\hspace{0.3em}}l@{\hspace{0.3em}}}
					\textsf{rounds}& =    & k	\\
					\textsf{time}  & =    & t \\
					\textsf{msg size}  & =    & m \\
					\textsf{coins}  & =    & \ell \\
					\textsf{compl} & \geq & 2/3 + \gamma \\
					\textsf{sound} & \leq & 1/3-\nu \\
			\end{array}
		\right) \subseteq
		\AM\left(
			\begin{array}{l@{\hspace{0.3em}}l@{\hspace{0.3em}}l@{\hspace{0.3em}}}
					\textsf{rounds}& =    & 4k+3	\\
					\textsf{time}  & =    & t+ 
							\poly\left((m+\ell)\cdot (\frac{k}{\gamma})
																						^{k/\nu}\right)\\
					\textsf{msg size}  & =    & \poly\left((m+\ell)\cdot (\frac{k}{\gamma})
																						^{k/\nu}\right) \\
					\textsf{coins}  & =    & \poly\left((m+\ell)\cdot (\frac{k}{\gamma})
																						^{k/\nu}\right) \\
					\textsf{compl} & \geq & 2/3 \\
					\textsf{sound} & \leq & 1/3 \\
			\end{array}
		\right)
	\end{align*}	
	Moreover, in the protocols that achieve the
	transformations in (i) and (ii), the public-coin verifier
	calls the private coin verifier exactly once.
\end{corollary}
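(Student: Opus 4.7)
The plan is to deduce both inclusions by applying Theorem~\ref{thm:GoldwasserSipserUsingSampling} with choices of $\eps$ and $\delta$ tailored to the parameter regime. The rounds ($4k+3$), the single invocation of the private-coin verifier, and time-constructibility of $1/\eps, 1/\delta$ (from that of $1/\gamma, 1/\nu, k$) come directly from the theorem; so for each case one only needs to verify (a) $c-2(k+1)\eps\geq 2/3$, (b) $(1+\eps+\delta)^{k+1}s+(k+1)\eps\leq 1/3$, and (c) that $(1/\eps)^{1/\delta}$ has the declared asymptotic form.

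For part~(i), I will set $\eps := \gamma/(2(k+1))$ and $\delta := 1/2$. Then $c-2(k+1)\eps = 2/3+\gamma-\gamma = 2/3$, so completeness is fine. For soundness, $\eps<1/2$ gives $(1+\eps+\delta)^{k+1}\leq 2^{k+1}$, hence $(1+\eps+\delta)^{k+1}\cdot 2^{-(k+5)}\leq 2^{-4}=1/16$; adding $(k+1)\eps=\gamma/2 < 1/10$ leaves the sum below $1/3$. Finally $(1/\eps)^{1/\delta}=(2(k+1)/\gamma)^{2}=\poly(k/\gamma)$, which matches the claimed running time.

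For part~(ii) the soundness budget is only $1/3-\nu$, so $(1+\eps+\delta)^{k+1}$ must be held close to $1$ rather than merely exponential in $k$. I will take $\eps := \gamma/(4(k+1))$ and $\delta := \nu/(4(k+1))$. Completeness gives $c-\gamma/2\geq 2/3$. For soundness, $(k+1)(\eps+\delta)=(\gamma+\nu)/4\leq \nu/2$, so $(1+\eps+\delta)^{k+1}\leq e^{\nu/2}\leq 1+\nu$ (valid for $\nu<1$), and
\[
(1+\eps+\delta)^{k+1}(1/3-\nu)+(k+1)\eps \;\leq\; (1+\nu)(1/3-\nu)+\gamma/4 \;=\; 1/3-2\nu/3-\nu^2+\gamma/4 \;\leq\; 1/3,
\]
using $\gamma\leq \nu$. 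The running-time contribution is $(1/\eps)^{1/\delta}=(4(k+1)/\gamma)^{4(k+1)/\nu}$, which after absorbing the constant $4$ into a larger polynomial exponent is $\poly((k/\gamma)^{k/\nu})$.

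The only real obstruction is the tension in part~(ii): pinning $(1+\eps+\delta)^{k+1}$ to $1+O(\nu)$ forces $\delta=O(\nu/k)$, while the verifier's running time $(1/\eps)^{1/\delta}$ deteriorates as $\delta$ shrinks. The bound $(k/\gamma)^{k/\nu}$ in the corollary is exactly the cost of this tradeoff when the original soundness gap $\nu$ is inverse polynomial; the rest is bookkeeping.
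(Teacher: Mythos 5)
Your proposal is correct and takes essentially the same approach as the paper: both parts use the identical parameter choices $\eps=\gamma/(2(k+1)),\,\delta=1/2$ for (i) and $\eps=\gamma/(4(k+1)),\,\delta=\nu/(4(k+1))$ for (ii), with only superficial differences in how the resulting inequalities are verified (the paper routes part (ii) through an intermediate restatement with soundness $\tfrac13(1-3(k+1)\eps)2^{-4\delta(k+1)}$, whereas you bound $(1+\eps+\delta)^{k+1}\le e^{\nu/2}\le 1+\nu$ directly).
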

The proof can be found in Section~\ref{sec:ProofOfSamplingCorollaries}. 
Note that part (i) implies $\IP(\textsf{rounds}=k) \subseteq \AM(\textsf{rounds}=4k+3)$ (where $\textsf{time}=\poly(n), \textsf{compl}\geq 2/3, \textsf{sound}\leq 1/3$), as the error probabilities of the $\IP$ protocol can be decreased by repeating it in parallel. As mentioned above, due to the repetition, the private coin verifier needs to be called multiple times in the resulting protocol. 

Next, we would like to compare our result to the \cite{GS86} transformation. For this comparison, the theorem below states what their transformation achieves \textit{after} repeating the private-coin protocol in parallel, i.e.~we again consider the setting where the private-coin verifier is called exactly once by the public-coin verifier. 
\begin{theorem}[\cite{GS86}]
	\label{thm:GS86}
	For any time-constructible polynomials $k(n)$, $t(n)$, 
	$m(n)$, and $\ell(n)$ we have
	\begin{align*}
		&\IP\left(
			\begin{array}{l@{\hspace{0.5em}}l@{\hspace{0.5em}}l@{\hspace{0.5em}}}
					\textsf{rounds}& =    & k	\\
					\textsf{time}  & =    & t \\
					\textsf{msg size}  & =    & m \\
					\textsf{coins}  & =    & \ell \\
					\textsf{compl} & \geq & 1-\ell^{-12k^2} \\
					\textsf{sound} & \leq & \ell^{-12k^2} \\
			\end{array}
		\right) \subseteq
		\AM\left(
			\begin{array}{l@{\hspace{0.5em}}l@{\hspace{0.5em}}l@{\hspace{0.5em}}}
					\textsf{rounds}& =    & k+2	\\
					\textsf{time}  & =    & t+\poly((m+\ell)k)\\
					\textsf{msg size}  & =    & \poly(m+\ell) \\
					\textsf{coins}  & =    & \poly(m+\ell) \\
					\textsf{compl} & \geq & 2/3 \\
					\textsf{sound} & \leq & 1/3 \\
			\end{array}
		\right)
	\end{align*}	
	Moreover, in the protocol that achieves this
	transformation, the public-coin verifier calls the private coin verifier exactly once.
\end{theorem}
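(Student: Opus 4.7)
Given a $k$-round private-coin protocol $(V,P)$ on $\ell$-bit random coins, we build a public-coin protocol $(V',P')$ that, in each round $i=0,\ldots,k-1$, runs the sampling protocol of Theorem~\ref{thm:main} on parameters $(n,\eps,\delta)$ to obtain a pair $(m_i,p_i)$; the honest $P'$ supplies as input the distribution of $V(x,i,R,a_0,\ldots,a_{i-1})$ when $R$ is uniform over the random coins consistent with the current history $(m_0,a_0,\ldots,m_{i-1},a_{i-1})$. After each sampling subprotocol $P'$ sends $a_i$. After round $k-1$, one additional sampling invocation produces $(r,p_r)$, with the honest prover supplying the uniform distribution over the coins consistent with the full transcript. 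Finally $V'$ runs $V$ once on $(x,r,m_0,a_0,\ldots,m_{k-1},a_{k-1})$ and accepts iff $V$ accepts \emph{and} an independent public-coin Bernoulli trial with success probability $\min\{1,\,2^{-\ell}/(p_0\cdots p_{k-1}\cdot p_r)\}$ succeeds. Since each sampling subprotocol is constant-round, this yields the claimed $4k+3$-round bound, and $V$ is invoked exactly once.

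\textbf{Completeness.} Apply the completeness part of Theorem~\ref{thm:main} to each of the $k+1$ sampling invocations. For honest $P'$, each one outputs $(x,\Pd(x))$ with probability $(1\pm\eps)\Pd(x)$ for $x$ in a set of $\Pd$-mass $\geq 1-\eps$, so it behaves well with probability at least $1-2\eps$; a union bound across the $k+1$ calls gives joint success probability at least $1-2(k+1)\eps$. Conditioned on this event, each $p_i$ equals the true conditional probability of $m_i$ in $(V,P)$ and $p_r$ equals $1/|R(h_k)|$ for $R(h_k)$ the set of coins consistent with the transcript, so $\prod_i p_i\cdot p_r=2^{-\ell}$ up to a factor $1\pm O(k\eps)$, the Bernoulli coin lands heads with probability near $1$, and the joint distribution of transcript and $r$ is $O(k\eps)$-close to that of $(V,P)$. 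Collecting error terms yields completeness at least $c-2(k+1)\eps$.

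\textbf{Soundness.} Fix a deterministic $P^*$ and let $\Bad:=\bigcup_{i=0}^{k}\Bad_i$ be the union of bad events of the $k+1$ sampling invocations; by a union bound $\Pr[\Bad]\leq(k+1)\eps$. Conditioned on $\lnot\Bad$, Theorem~\ref{thm:main} yields, for every round $i$, every history $h_i$, and every message $m$,
\[
  \sum_{p}\Pd^{(i)}_{XP\mid h_i}(m,p)/p\;\leq\;1+\eps+\delta,
\]
and analogously for the final $(r,p_r)$-invocation. Bounding the Bernoulli weight by $2^{-\ell}/(\prod_i p_i\cdot p_r)$ and expanding gives
\[
  \Pr[\text{accept}\mid\lnot\Bad]\;\leq\;2^{-\ell}\sum_{(T,r):\,V\text{ accepts}}\;\sum_{p_0,\ldots,p_{k-1},p_r}\frac{\prod_i\Pd^{(i)}(m_i,p_i)\cdot\Pd^{(k)}(r,p_r)}{\prod_i p_i\cdot p_r}.
\]
Iterating the per-round bound from the innermost sum outward (each inner sum factors given its outer history) contracts by a factor $1+\eps+\delta$ each of the $k+1$ times. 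Since under the fixed strategy $P^*$ each $r$ determines a unique transcript, the number of accepting $(T,r)$-pairs equals $|R_{\text{acc}}^{P^*}|\leq 2^\ell s$ by the original soundness. Hence $\Pr[\text{accept}\mid\lnot\Bad]\leq(1+\eps+\delta)^{k+1}s$, and adding $\Pr[\Bad]$ yields the stated bound.

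\textbf{Main obstacle.} The crucial design choice is the Bernoulli weight $\min\{1,\,2^{-\ell}/(\prod_i p_i\cdot p_r)\}$. Without this weighting, a cheating prover could concentrate each sampling output on a single accepting $(T,r)$-pair (at the price, by soundness of Theorem~\ref{thm:main}, of having to claim $p_i,p_r\approx 1$) and force acceptance near $1$ irrespective of $s$; the $1/p$ factors in the Bernoulli weight are exactly what pair with the $1/p$ factors in the sampling-soundness inequality, so that iterated use produces the multiplicative loss $(1+\eps+\delta)^{k+1}$ rather than an uncontrolled blowup.
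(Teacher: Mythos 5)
You have proved the wrong theorem. Your construction is precisely the paper's new sampling-based transformation (Theorem~\ref{thm:GoldwasserSipserUsingSampling}), which the paper proves in Sections~\ref{sec:PubCoinTheProtocol}--\ref{sec:PubCoinAnalysis}. That transformation runs the constant-round sampling protocol of Theorem~\ref{thm:main} in each of the $k$ rounds plus once more for the coins, and indeed yields $4k+3$ rounds --- you even write ``this yields the claimed $4k+3$-round bound.'' But the theorem you were asked to prove, Theorem~\ref{thm:GS86}, asserts $k+2$ rounds, which your construction cannot match: each invocation of the sampling protocol costs a constant number of rounds $\geq 3$, not $1$. The paper does not prove Theorem~\ref{thm:GS86} at all; it is the classical Goldwasser--Sipser result, stated (after parallel repetition of the private-coin protocol) for comparison purposes. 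That result rests on an entirely different construction: at each round the public-coin verifier uses pairwise-independent hashing to run an approximate \emph{set-size lower bound} protocol on the set of private-coin verifier messages (or, more precisely, on the preimages of random-coin strings compatible with a given transcript prefix), combined with a counting argument over the communication tree. Because each lower-bound step can be interleaved so that the extra interaction amortizes, one round of the private-coin protocol is emulated by essentially one round of the public-coin protocol, giving $k+2$ rather than $\Theta(k)$. Your soundness loss $(1+\eps+\delta)^{k+1}s + (k+1)\eps$ also does not directly yield the stated $1/3$ under the hypothesis $s\leq \ell^{-12k^2}$ without choosing $\eps,\delta$ as inverse polynomials in $\ell$, and the corresponding verifier running time would then exceed the $t + \poly((m+\ell)k)$ bound claimed in Theorem~\ref{thm:GS86}. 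In short: your argument is a reasonable (if slightly different in the final check --- you use a Bernoulli rejection weight $\min\{1, 2^{-\ell}/\prod_i p_i\}$ where the paper demands exact equality $\prod_i p_i = 2^{-\ell}$) reconstruction of the paper's Theorem~\ref{thm:GoldwasserSipserUsingSampling}, but it is not a proof of the Goldwasser--Sipser theorem as stated.
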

By first repeating the $\IP$ protocol in parallel, this implies $\IP(\textsf{rounds}=k) \subseteq \AM(\textsf{rounds}=k+2)$. 
Comparing this theorem to our Corollary~\ref{cor:GoldwasserSipserUsingSampling} (i), we see that our result only 
loses a polynomially small fraction $\gamma$ in completeness, and
both losses in soundness and completeness are independent of $\ell$. 
If we apply Corollary~\ref{cor:GoldwasserSipserUsingSampling} (ii) for constant $k$, polynomial $t(n)$, any $\gamma$ that is inverse polynomial in $n$, and any constant $\nu > 0$, we obtain a verifier that runs in polynomial time. This is stronger than \cite{GS86} in all parameters except the number of rounds.

If there exists a constant-round sampling protocol that achieves the guarantees of our theorem, but the verifier runs in time $\poly(\frac{n}{\eps \delta})$, then we would get the transformation
\begin{align*}
		&\IP\left(
			\begin{array}{lll}
					\textsf{rounds}& =    & k	\\
					\textsf{time}  & =    & t \\
					\textsf{compl} & \geq & 2/3 + \gamma \\
					\textsf{sound} & \leq & 1/3 - \gamma \\
			\end{array}
		\right) \subseteq
		\AM\left(
			\begin{array}{lll}
					\textsf{rounds}& =    & \Theta(k)	\\
					\textsf{time}  & =    & 
											\poly(\frac{k\cdot t}{\gamma}) \\
					\textsf{compl} & \geq & 2/3 \\
					\textsf{sound} & \leq & 1/3 \\
			\end{array}
		\right)
\end{align*}
for any polynomially small $\gamma$ and polynomial $k$ (where the private-coin verifier is called exactly once). It is an interesting open problem if this can be achieved. 

\subsection{The Protocol}
\label{sec:PubCoinTheProtocol}
Let $(V,P)$ be the $\IP$ (i.e.~private-coin) protocol for some language $L$, as given in the theorem. We assume that on input $x$, $V$ chooses randomness $r \in \{0,1\}^{\ell(|x|)}$. 
For a fixed input $x \in \{0,1\}^n$ we 
let $\ell = \ell(n)$, and $M_0, \ldots, M_{k-1}$, and 
$A_0, \ldots, A_{k-1}$ be the random variables over the choice of $r$ that correspond to $V$'s messages $m_0, \ldots m_{k-1}$ and $P$'s answers $a_0,\ldots, a_{k-1}$ in the protocol execution $(V,P)(x)$. Furthermore, we let $\Gamma_i := (M_0, A_0, \ldots, M_{i},
A_{i})$ be the random variable over the entire communication.
We now describe the protocol $(V',P')$ that achieves the transformation described by Theorem~\ref{thm:GoldwasserSipserUsingSampling}.
The protocol will use the sampling protocol of 
Theorem~\ref{thm:main} several times, 
always using parameters $\delta$ and $\eps$.
$(V',P')(x)$ is defined as follows: 

\begin{itemize}
	\item[] \textbf{For} $i=0,\ldots, k-1$ \textbf{do}
	\begin{itemize}
		\item[] \textbf{Prover and Verifier:}
						Use the sampling protocol to sample $(m_i,p_i)$.
						
						If the prover is honest, it honestly executes
						the sampling protocol for the distribution $\Pd$
						defined by
						$$\Pd(m_i) := 
						\Pr_r[M_i = m_i|
							\Gamma_{i-1}
						= (m_0, a_0, \ldots, m_{i-1}, a_{i-1})].$$
		\item[] \textbf{Prover:}
						Send $a_i$ to the verifier.
						
						If the prover is honest, it sends 
						$a_i:=P(x,i,m_0, \ldots, m_i)$.\footnote{Recall that $P(x,i,m_0, \ldots, m_i)$ is the prover's answer in round $i$ given input $x$ and the previous verifier messages $m_i$.}
	\end{itemize}
	\item[] \textbf{Prover and Verifier:}
					Use the sampling protocol to sample $(r^*,p_k)$.
						
					If the prover is honest, it honestly executes
					the sampling protocol for the distribution $\Pd$
					defined by
					$$\Pd(r^*) := 
						\Pr_r[r^*=r|
							\Gamma_{k-1}
						= (m_0, a_0, \ldots, m_{k-1}, a_{k-1})].$$
	\item[] \textbf{Verifier:}
					Accept if and only if all of the following 
					conditions hold:
					\begin{enumerate}[(a)]
						\item $V(x,k,r^*,m_0, a_0, \ldots, m_{k-1},
									 a_{k-1}) = \accept$
						\item $\prod_{i=0}^{k} p_i = \frac{1}{2^{\ell}}$
					\end{enumerate}
\end{itemize}

\subsection{Technical Overview}
\label{sec:PubCoinTechOverview}
Theorem~\ref{thm:GoldwasserSipserUsingSampling} follows by applying the completeness and soundness guarantees of the sampling protocol. Completeness is quite easy to see, since the completeness of the sampling protocol guarantees that the correct probabilities are provided, and the distribution over the $m_i$'s (or $r$'s in the last round) is statistically close to the true distribution. In the following, we sketch the soundness proof. 

To simplify the exposition we make two assumptions. The proof is not much harder when these assumptions are omitted, and we refer to the full proof in Section~\ref{sec:PubCoinAnalysis} for details. First, we assume that the event $\Bad$ never occurs (i.e.~it has probability $0$) in the sampling protocol. Second, we only prove soundness for deterministic provers whose answers $a_i$ are fixed given the previous $(m_0, p_0, a_0, \ldots, m_{i-1},p_{i-1},a_{i-1}, m_i, p_i)$.\footnote{
	We cannot assume this in general, as the prover may for example choose $a_i$ depending on a hash function that was chosen in some earlier execution of the sampling protocol. 
} 

So suppose $x\notin L$ and let $P^*$ be any prover that satisfies the above assumption. We denote the randomness of $V'$ by $r'$, let $M_i', P_i', A_i', R^*$ be the random variables over the choice of $r'$ corresponding to $m_i, p_i, a_i, r^*$ in the protocol $(V',P^*)(x)$, and define $\Gamma_i':=(M_0', P_0', A_0', \ldots, M_i', P_i', A_i')$. 

We first fix any $\gamma'_{k-1} = (m_0, p_0, a_0, \ldots, m_{k-1}, p_{k-1}, a_{k-1})$ that occurs with nonzero probability, and let $\gamma_{k-1}=(m_0, a_0, \ldots, m_{k-1}, a_{k-1})$. Now we bound the acceptance probability conditioned on $\gamma'_{k-1}$ as follows (we discuss the steps of this calculation below).
\begin{align}
  \Pr_{r'}&\left[
				(V'(r'),P^*)(x)=\accept | \Gamma_{k-1}'=\gamma_{k-1}'
				\right] \nonumber\\
	&= \sum_{r^*\in \{0,1\}^{\ell}}
		 \Pr_{r'}\left[
							(V'(r'),P^*)(x)=\accept 
							\land R^*=r^*
							| \Gamma_{k-1}'=\gamma_{k-1}'
					 \right] \nonumber\\
	&= \sum_{r^*}
		 \Pr_{r'}\Bigr[
							V(x,k,r^*,\gamma_{k-1})=\accept
							\land P_k=\frac{1}{2^{\ell}\prod_{i=0}^{k-1}p_i}
							\land R^*=r^* \nonumber\\
							&\qquad\qquad\qquad\qquad\qquad\qquad
							 \qquad\qquad\qquad\qquad\qquad\qquad
							\Big| \Gamma_{k-1}'=\gamma_{k-1}'
					 \Bigl] \nonumber
\end{align}
\begin{align}					
	&= \sum_{r^*}
		 \left[V(x,k,r^*,\gamma_{k-1})=\accept \right] \nonumber\\
		 &\qquad\cdot
		 \Pr_{r'}\left[
							P_k=\frac{1}{2^{\ell}\prod_{i=0}^{k-1}p_i}
							\land R^*=r^*
							\Big| \Gamma_{k-1}'=\gamma_{k-1}'
					 \right] \nonumber\\
	&\leq
		 \sum_{r^*}
		 \left[V(x,k,r^*,\gamma_{k-1})=\accept \right] 
		 \cdot
		 \frac{1+\eps+\delta}{2^{\ell}\prod_{i=0}^{k-1}p_i}
		\nonumber\\
	&=\frac{1+\eps+\delta}{\prod_{i=0}^{k-1}p_i}
		\cdot \Pr_{r \in \{0,1\}^{\ell}}
			\left[
				V(x,k,r,\gamma_{k-1})=\accept
			\right] \label{eqn:pubcoinsketch1}
\end{align}
We used the notation $[\cdot]$ for a Boolean expression, which is defined to be $1$ iff the expression is true. Furthermore, we applied the soundness guarantee of the sampling protocol to obtain the above inequality. 

Using this, we can proceed to bound the acceptance probability
conditioned on any fixed $\gamma'_{k-2} = (m_0, p_0, a_0, \ldots, m_{k-2}, p_{k-2}, a_{k-2})$ as follows (we again discuss the steps of this calculation below). 
\begin{align*}
  \Pr_{r'}&\left[
				(V'(r'),P^*)(x)=\accept | \Gamma_{k-2}'=\gamma_{k-2}'
				\right] \\
	&=\sum_{m_{k-1},p_{k-1}}
		\Pr_{r'}\bigl[
				(V'(r'),P^*)(x)=\accept \\
		&\qquad\qquad\qquad\qquad	\big| 
					(\Gamma_{k-2}',M_{k-1}', P_{k-1}')= 
					(\gamma_{k-2}',m_{k-1}, p_{k-1})
				\bigr]\\
		&\qquad \cdot
		\Pr_{r'}\left[
				(M_{k-1}', P_{k-1}')= 
					(m_{k-1}, p_{k-1})
				| \Gamma_{k-2}'= \gamma_{k-2}'
				\right]\\
	&=\sum_{m_{k-1},p_{k-1}}
		\Pr_{r'}\bigl[
				(V'(r'),P^*)(x)=\accept \\
		&\qquad\qquad\qquad \qquad	\big|
					(\Gamma_{k-1}')= 
					(\gamma_{k-2}',m_{k-1}, p_{k-1}, a_{k-1}^*)
				\bigr]\\
		&\qquad \cdot
		\Pr_{r'}\left[
				(M_{k-1}', P_{k-1}')= 
					(m_{k-1}, p_{k-1})
				| \Gamma_{k-2}'= \gamma_{k-2}'
				\right]\\
	&\leq 
		\sum_{m_{k-1},p_{k-1}}
		\frac{1+\eps+\delta}{\prod_{i=0}^{k-1}p_i}
		\cdot \Pr_{r}
			\left[
				V(x,k,r,\gamma_{k-2},m_{k-1},a_{k-1}^*)=\accept
			\right]\\
		&\qquad \cdot
		\Pr_{r'}\left[
				(M_{k-1}', P_{k-1}')= 
					(m_{k-1}, p_{k-1})
				| \Gamma_{k-2}'= \gamma_{k-2}'
				\right]\\
	&=\frac{1+\eps+\delta}{\prod_{i=0}^{k-2}p_i}
		\sum_{m_{k-1}}
			\Pr_{r}
			\left[
				V(x,k,r,\gamma_{k-2},m_{k-1},a_{k-1}^*)=\accept
			\right] \\
		&\qquad \cdot
		 \underbrace{
			 \sum_{p_{k-1}} \frac
				{
					\Pr_{r'}\left[
					(M_{k-1}', P_{k-1}')= 
						(m_{k-1}, p_{k-1})
					| \Gamma_{k-2}'= \gamma_{k-2}'
					\right]
				}
				{	p_{k-1} }	
			}_{
				\leq (1+\eps+\delta)
			} 
	\end{align*}
	\begin{align*}
		&\leq
			\frac{(1+\eps+\delta)^2}{\prod_{i=0}^{k-2}p_i}
			\sum_{m_{k-1}}
				\Pr_{r}
				\left[
					V(x,k,r,\gamma_{k-2},m_{k-1},a_{k-1}^*)=\accept
				\right]
\end{align*}
The second inequality above follows by our assumption that 
for fixed values $\gamma'_{k-2},m_{k-1},p_{k-1}$, the prover's answer $A_{k-1}$ is also fixed. We denote this value by $a_{k-1}^*$ and stress that formally this is a function of $(\gamma'_{k-2},m_{k-1},p_{k-1})$. We proceed to use this notation below, and in the given context the values $a_{i}^*$ must always be interpreted as a function of $(\gamma'_{i-1},m_{i},p_{i})$. This notation is somewhat sloppy, but we use it here to simplify the exposition, and treat the issue more formally in the full soundness analysis of Section~\ref{sec:PubCoinProofSoundnessOverview}. 

The first inequality above follows from Eqn.~\ref{eqn:pubcoinsketch1}, and the underbraced term is bounded using the soundness guarantee of the sampling protocol (noting that in each execution of the sampling protocol, the verifier uses independent random coins). 

Finally, repeating the above calculation conditioning on $\gamma'_{k-3}, \gamma'_{k-4}$ and so on yields that
\begin{align*}
	\Pr_{r'}&\left[
				(V'(r'),P^*)(x)=\accept 
				\right] \leq (1+\eps+\delta)^{k+1}\\
	& \qquad\qquad
		\cdot \sum_{m_0}\sum_{m_1}\ldots\sum_{m_{k-1}}
		\Pr_r\left[
			V(x,k,r,m_0, a_0^*, \ldots, m_{k-1}, a_{k-1}^*)
			=\accept
		\right]\\
	&\leq (1+\eps+\delta)^{k+1}s,
\end{align*}
where for the second inequality we used that $V$ only accepts if the transcript is consistent, and applied the soundness guarantee of the protocol $(V,P)$. 

\paragraph{Removing the simplifying assumptions.}
We briefly sketch how the two simplifying assumptions in the proof sketch can be waived. 

First, as the event $\Bad$ occurs with probability at most $\eps$ for each sampling protocol invocation, one simply conditions the analysis on no bad event occurring. This induces the additional error term of $(k+1)\eps$. 

Second, we show below that the proof above can be modified to deal with general provers who may send arbitrary answers $a_i$. This is quite intuitive, and our analysis shows that the best strategy for the prover is to send the $a_i$ that maximize the acceptance probability of the verifier $V$.

\subsection{Analysis of the Protocol}
\label{sec:PubCoinAnalysis}

\subsubsection{Proof of completeness}
	Suppose $x \in L$ and fix the honest prover $P'$. 
	For any $i \in (k-1)$ the following holds.
	For any $\gamma_i = (m_0, a_0, \ldots m_i, a_i)$, where 
	$a_i = P(x,i,m_0, \ldots, m_i)$, letting
	$\gamma_{i-1} = (m_0, a_0, \ldots, m_{i-1}, a_{i-1})$, we 
	have
	\begin{align}
		\Pr_r\left[\Gamma_{i}
		  	 =\gamma_i\right]
		&= \Pr_r\left[(M_i, A_i) = 
					  (m_i, a_i)
						|
						\Gamma_{i-1}
					 =\gamma_{i-1} \right]
		\cdot 
		\Pr_r\left[\Gamma_{i-1}
					 =\gamma_{i-1}\right] \nonumber \\
		&= \Pr_r\left[M_i = 
					  m_i
						|
						\Gamma_{i-1}
					 =\gamma_{i-1}\right]
		\cdot 
		\Pr_r\left[\Gamma_{i-1}
					 =\gamma_{i-1}\right], \label{eqnp:50}
	\end{align}
	where the first equality holds by the definition of conditional probabilities, and the second inequality holds because the prover is deterministic and $a_i = P(x,i,m_0, \ldots, m_i)$. 
	
	Furthermore, since any fixed $r$ uniquely determines 
	$\Gamma_{k-1}$, we find that for any $r^*$ and any
	$\gamma_{k-1} = (m_0, a_0, \ldots m_{k-1}, a_{k-1})$ where 
	$(x,r^*, \gamma_{k-1})$ is consistent for $V$	we have
	\begin{align}
		\Pr_r\left[r^*=r | \Gamma_{k-1}
		  	 =\gamma_{k-1}\right]
		\cdot
		  \Pr_r\left[\Gamma_{k-1}
							  =\gamma_{k-1}\right]
		= 
			\Pr_r\left[r^*=r\right] = \frac{1}{2^{\ell}}.\label{eqnp:51}
	\end{align}
	The completeness of the sampling protocol implies that 
	if some output $m$ is generated, then it is generated along
	with $\Pd(m)$.
	Together with (\ref{eqnp:50}) and (\ref{eqnp:51}),	this
	implies that in case the verifier does not
	reject in the sampling protocols, (b) holds.
	
	The next claim states that the output distribution given by the
	sampling protocol (when interacting with the honest prover for some distribution $\Pd$) is close
	to the distribution $\Pd$. 
	\begin{claim}
		\label{claim:SmallStatDist}
		Fix a distribution $\Pd$ over $\{0,1\}^n$, and 
		$\delta, \eps \in (0,1)$. 		
		Let $\Pd'$ be the following distribution over 
		$\{0,1\}^n \cup \{\bot\}$: 
		Run the sampling protocol of Theorem~\ref{thm:main} for
		$\Pd$ with parameters $\delta, \eps$ and the honest 
		prover. If the verifier rejects, return $\bot$. 
		Otherwise, let $(x,p)$ be the verifier's output and 
		return $x$. Then the statistical distance between 
		$\Pd$ and $\Pd'$ is at most $2\eps$. 
	\end{claim}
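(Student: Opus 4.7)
The plan is to unfold the definition of statistical distance and bound it directly using the completeness guarantee of Theorem~\ref{thm:main}. Writing $\Delta(\Pd,\Pd') = \tfrac{1}{2}\sum_{y \in \{0,1\}^n \cup \{\bot\}}|\Pd(y)-\Pd'(y)|$ (with the convention $\Pd(\bot)=0$), I would split the sum into three contributions: the mass $\Pd'$ places on $\bot$, the mass on good elements $x \in \cM$, and the mass on bad elements $x \notin \cM$, where $\cM$ is the set provided by the completeness part of Theorem~\ref{thm:main}.

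For the first part, the rejection probability of the verifier is at most $\eps$, so $\Pd'(\bot) \leq \eps$. For $x \in \cM$, the completeness guarantee tells us that $\Pd_{XP}(x,p)$ vanishes except at $p=\Pd(x)$, so $\Pd'(x) = \Pd_{XP}(x,\Pd(x)) \in (1\pm\eps)\Pd(x)$, and hence $|\Pd(x)-\Pd'(x)| \leq \eps\,\Pd(x)$. Summing over $\cM$ gives a contribution of at most $\eps\,\Pd(\cM)\leq \eps$. For $x \notin \cM$, the completeness guarantee forces $\Pd_{XP}(x,p) = 0$ for every $p$, so $\Pd'(x)=0$ and the contribution $\sum_{x\notin \cM}\Pd(x) = 1-\Pd(\cM) \leq \eps$ by the lower bound $\Pd(\cM)\geq 1-\eps$.

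Adding the three contributions gives $\Delta(\Pd,\Pd') \leq \tfrac{1}{2}(\eps+\eps+\eps) = \tfrac{3\eps}{2} \leq 2\eps$. The only mild subtlety, and the place where one must be careful, is to make sure that $\Pd'$ is treated as a distribution on $\{0,1\}^n \cup \{\bot\}$ (so that the mass lost to rejection is not accidentally double-counted against the $x \in \cM$ or $x \notin \cM$ contributions), and that $\sum_p \Pd_{XP}(x,p)$ really does collapse to a single term at $p = \Pd(x)$ by the ``otherwise'' clause of Theorem~\ref{thm:main}. Neither of these is a real obstacle; the claim is essentially a direct rewriting of the completeness part of the sampling theorem in the language of statistical distance.
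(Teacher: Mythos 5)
Your proof is correct and follows essentially the same route as the paper's (very terse) proof: decompose the statistical distance over $\bot$, $\cM$, and $\overline{\cM}$, then invoke the completeness guarantee of Theorem~\ref{thm:main} for each piece. You even obtain the slightly sharper bound $3\eps/2$ before rounding up to the stated $2\eps$.
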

	\begin{proof}
		Let $\cM$ be the set given by the completeness 
		guarantee of the sampling protocol. 
		By completeness, for any $x\in \cM$ we have
		$\Pd'(x) \in (1\pm\eps)\Pd(x)$. Furthermore, we have
		$\Pd(\overline{\cM}) \leq	\eps$. This implies the claim.
	\end{proof}
	
	Thus, by coupling, we may assume that in each invocation of
	the sampling protocol, the bitstring	is sampled according
	to the distribution $\Pd$, and this
	fails with probability at most $2\eps$. Thus, with
	probability $1-(k+1)2\eps$ we simulate an	execution
	of the protocol $(V,P)$, which accepts with probability
	at least $c$.

\subsubsection{Proof of soundness: overview}
	\label{sec:PubCoinProofSoundnessOverview}
	Suppose $x\notin L$ and let $P^*$ be any deterministic prover.\footnote{
	By a standard argument, we may assume that the prover is deterministic: for any probabilistic prover, we consider the deterministic prover that maximizes the verifier's acceptance probability. 
	}
	We denote the randomness of $V'$ by $r'$. 
	For $i \in (k)$ let $\Bad_i$ be the event that occurs 
	if in the $i$'th execution of the sampling protocol
	the event $\Bad$ as described in Theorem~\ref{thm:main}
	occurs. 
	Furthermore, let $\Bad' = \Bad_0 \lor \ldots 
	\lor \Bad_{k}$.
	As $V'$ uses independent 
	random coins for each execution of the sampling protocol, 
	and by Theorem~\ref{thm:main} each event $\Bad_i$ occurs 
	with probability at most $\eps$, we find
	that $\Pr_{r'}[(V'(r'),P^*)(x)=\accept]$ equals
	\begin{align*}
	  &\Pr_{r'}[(V'(r'),P^*)(x)=\accept | \lnot \Bad']
		   \Pr_{r'}[\lnot \Bad'] \\
		&\qquad +
		   \Pr_{r'}[(V'(r'),P^*)(x)=\accept | \Bad']
			 \Pr_{r'}[\Bad'] \\
		\leq &\Pr_{r'}[(V'(r'),P^*)(x)=\accept | \lnot \Bad']
		 +
			 \Pr_{r'}[\Bad'] \\
		\leq
		&\Pr_{r'}[(V'(r'),P^*)(x)=\accept | \lnot \Bad'] + 
		(k+1)\eps.		
	\end{align*}

	We proceed to bound the acceptance probability conditioned
	on $\lnot \Bad'$. Let 
	$M'_i, P'_i, A'_i, R^*$ be the random variables over the 
	choice of $r'$ corresponding to
	$m_i, p_i, a_i, r^*$ in the protocol	$(V', P^*)(x)$, and
	additionally define $\Gamma'_{i} := (M'_0, P'_0, A'_0,
	\ldots, M'_i, P'_i, A'_i)$.

	\begin{claim} 
	\label{claim:LastSamplingStep}
	Fix any $\gamma'_{k-1} = (m_0, p_0, a_0, \ldots, 
	m_{k-1}, p_{k-1}, a_{k-1})$ that satisfies
		$
			\Pr_{r'}\left[\Gamma'_{k-1} 
										= \gamma'_{k-1} \land \lnot \Bad'
							\right] > 0
		$, and let 
		$\gamma_{k-1} = (m_0, a_0, \ldots, m_{k-1},a_{k-1})$. 
		Then we have
		\begin{align}
			\Pr_{r'}&\left[(V'(r'),P^*)(x)=\accept
						 |\Gamma'_{k-1}=\gamma'_{k-1} 
							\land \lnot \Bad'\right] \label{eqn:GS1} \\
							&\leq 
				\frac{1+\eps+\delta}
						 {\prod_{i=0}^{k-1}p_i}
				\Pr_{r\leftarrow \{0,1\}^{\ell}}\left[
					V(x,k,r,\gamma_{k-1}) = \accept
					\right].\nonumber
		\end{align}
	\end{claim}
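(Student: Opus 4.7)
The plan is to isolate the role of the final sampling invocation. Conditioning on $\gamma'_{k-1}$ fixes $p_0, \ldots, p_{k-1}$, and the verifier $V'$ accepts iff two things hold in the last round: (a) $V(x,k,R^*,\gamma_{k-1}) = \accept$, and (b) $P_k = q$ where $q := 1/(2^\ell \prod_{i=0}^{k-1} p_i)$ is the unique value making the product check pass. So I would expand the left-hand side of (\ref{eqn:GS1}) as a sum over $r^* \in \{0,1\}^\ell$ of $[V(x,k,r^*,\gamma_{k-1})=\accept]$ multiplied by the conditional probability $\Pr_{r'}[(R^*,P_k)=(r^*,q) \mid \Gamma'_{k-1}=\gamma'_{k-1} \land \lnot \Bad']$, and bound the latter factor using the soundness of the sampling protocol.

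The next step is to identify this second factor with the output distribution of a single sampling-protocol invocation. Because $V'$ uses fresh, independent random coins in the $k$-th sampling, and the events $\Bad_0, \ldots, \Bad_{k-1}$ depend only on coins used in earlier invocations, conditioning on $\Gamma'_{k-1}=\gamma'_{k-1}$ pins down the (deterministic) strategy of $P^*$ inside the $k$-th sampling, and the remaining conditioning on $\lnot \Bad'$ reduces there to conditioning on $\lnot \Bad_k$. Writing $\Pd_{XP}$ for the joint output distribution of that sampling invocation and $\Bad$ for its associated event from Theorem~\ref{thm:main}, the conditional probability above equals $\Pd_{XP}(r^*,q)/\Pr[\lnot \Bad]$. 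Part (ii) of the soundness guarantee, applied with the sampling-protocol variable set to $r^*$ and keeping only the single term $p=q$ in the sum, gives $\Pd_{XP}(r^*,q)/(q \Pr[\lnot \Bad]) \leq 1+\eps+\delta$. Substituting back, the $\Pr[\lnot\Bad]$ factors cancel, and recognizing $\sum_{r^*} [V(x,k,r^*,\gamma_{k-1}) = \accept]/2^\ell = \Pr_{r \leftarrow \{0,1\}^\ell}[V(x,k,r,\gamma_{k-1})=\accept]$ produces exactly the bound claimed in the statement.

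The main obstacle I anticipate is not the calculation but the bookkeeping around the conditioning. Concretely, I need to argue carefully that, after fixing $\gamma'_{k-1}$, the behavior of $P^*$ during the $k$-th sampling invocation is a well-defined deterministic prover strategy against that sampling protocol (so that Theorem~\ref{thm:main} applies directly), and that the global conditioning on $\lnot \Bad'$ agrees inside the $k$-th invocation with the local conditioning on $\lnot \Bad_k$. Both facts rely on the independence of the random coins used by $V'$ across sampling invocations together with the determinism of $P^*$; writing them out formally is the delicate step, while the rest is a routine substitution.
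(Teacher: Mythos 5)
Your proposal is correct and follows essentially the same route as the paper's proof: expand the acceptance probability as a sum over $r^*$, observe that $[V(x,k,r^*,\gamma_{k-1})=\accept]$ is a deterministic bit once $\gamma_{k-1}$ and $r^*$ are fixed, and bound the remaining conditional probability $\Pr_{r'}[(R^*,P_k)=(r^*,q)\mid\cdots]$ by $q(1+\eps+\delta)$ via soundness (ii) of Theorem~\ref{thm:main} applied to the final sampling invocation. If anything, you are slightly more explicit than the paper about why the conditioning on $\Gamma'_{k-1}=\gamma'_{k-1}\land\lnot\Bad'$ reduces, inside the $k$-th invocation, to the local conditioning on $\lnot\Bad_k$ against a well-defined deterministic prover strategy — a point the paper's proof leaves implicit.
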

	As the proof is not difficult, we only give a proof sketch. The formal proof can be found in Section~\ref{sec:GSSoundnessDetails}.	
	\begin{proof}[Proof	(Sketch)]
		By definition of the protocol, the verifier $V'$ 
		accepts if 
		and only if $V(x,k,R^*,\gamma_{k-1})=\accept$
		and $P_k = \frac{1}{2^{\ell}\prod_{i=0}^{k-1}p_i}$. 
		The soundness condition (ii) of the sampling protocol
		guarantees that for any $r^*$ the probability of 
		$R^*=r^*\land \P_k=\frac{1}{2^{\ell}\prod_{i=0}^{k-1}p_i}$
		(under the given conditions) is bounded by 
		$\frac{1+\eps+\delta}{2^{\ell}\prod_{i=0}^{k-1}p_i}$. 
		As $V$ chooses $r$ uniformly in $\{0,1\}^{\ell}$, 
		this gives the claim.
	\end{proof}
		
	\begin{claim}
		\label{claim:SamplingInduction}
		Fix any $i \in (k)$ and 
		$\gamma'_{i-1} = (m_0,p_0,a_0, \ldots, m_{i-1},p_{i-1},
		a_{i-1})$ with
		$\Pr_{r'}[\Gamma'_{i-1} = \gamma'_{i-1} \land \lnot 
		\Bad']>0$, and
		let $\gamma_{i-1} = (m_0, a_0, \ldots, m_{i-1}, a_{i-1})$. 
		Then we have 
		\begin{align*}	
			&\Pr_{r'}\left[
				(V'(r'),P^*)(x)=\accept
						 |\Gamma'_{i-1}=\gamma'_{i-1} \land \lnot \Bad'
			\right] \\
			&\leq
			\frac{(1+\eps+\delta)^{k-i+1}}
						 {\prod_{j=0}^{i-1}p_j}
			\sum_{m_{i}}\max_{a_{i}}
				\Bigl( \ldots \\
			&\qquad 
			\sum_{m_{k-1}}
				\max_{a_{k-1}}
			\Bigl(
					\Pr_{r\leftarrow \{0,1\}^m}\left[
					V(x,k,r,\gamma_{i-1},m_{i},a_i, \ldots, 
						m_{k-1},a_{k-1}) = \accept 
					\right]
			 \Bigr)
			 \Bigr).
		\end{align*}
	\end{claim}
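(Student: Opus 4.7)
The plan is to prove the claim by downward induction on $i$, going from $i=k$ down to $i=0$. The base case $i=k$ is immediate: the sums $\sum_{m_i}\max_{a_i}\bigl(\cdots\bigr)$ are vacuous, the exponent becomes $(1+\eps+\delta)^{1}$, and what remains is exactly the statement of Claim~\ref{claim:LastSamplingStep} applied to $\gamma'_{k-1}$.

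For the inductive step, fix $0 \leq i \leq k-1$, assume the claim for $i+1$, and fix any $\gamma'_{i-1}$ with positive probability under $\lnot \Bad'$. The first step is to expand by conditioning on $(M'_i,P'_i,A'_i)$:
\begin{align*}
	\Pr_{r'}&\left[(V'(r'),P^*)(x)=\accept \mid \Gamma'_{i-1}=\gamma'_{i-1} \land \lnot \Bad'\right] \\
	&= \sum_{m_i,p_i,a_i} \Pr_{r'}\left[(V'(r'),P^*)(x)=\accept \mid \Gamma'_i = (\gamma'_{i-1},m_i,p_i,a_i) \land \lnot \Bad'\right] \\
	&\qquad\qquad\quad \cdot \Pr_{r'}\left[(M'_i,P'_i,A'_i)=(m_i,p_i,a_i) \mid \Gamma'_{i-1}=\gamma'_{i-1} \land \lnot \Bad'\right],
\end{align*}
where the sum ranges over triples giving nonzero conditional probability. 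The induction hypothesis bounds the first factor, after which I manipulate the sum in three stages. (a) Pull the sum over $a_i$ inside: since $P^*$ is deterministic, for fixed $(\gamma'_{i-1},m_i,p_i)$ the value of $A'_i$ is in fact uniquely determined, so this sum degenerates and can trivially be replaced by a maximum over $a_i$ (this is where the $\max_{a_i}$ in the claim arises, and it is a genuine upper bound even for the alternate formulation where $A'_i$ is not deterministic given the preceding data, since $\sum_{a_i}\Pr[A'_i=a_i\mid\cdots] \leq 1$). (b) Pull out $\sum_{p_i} \Pr[(M'_i,P'_i)=(m_i,p_i) \mid \Gamma'_{i-1}=\gamma'_{i-1} \land \lnot \Bad']/p_i$ and apply the sampling soundness to bound it by $1+\eps+\delta$. (c) What remains is an outer sum over $m_i$, and matching up exponents gives $(1+\eps+\delta)^{(k-i)+1}$, precisely the claim for index $i$.

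The main obstacle is step (b): the soundness of Theorem~\ref{thm:main} is phrased for a single sampling execution conditioned on its own $\lnot \Bad$ event, whereas here we condition on $\lnot \Bad' = \bigwedge_{j=0}^{k}\lnot \Bad_j$. I resolve this by exploiting the independence of the random coins used in different sampling invocations. Given $\Gamma'_{i-1}=\gamma'_{i-1}$, the deterministic prover induces a fixed prover strategy for the $i$-th sampling execution, whose randomness is independent of the randomness used in the other $k$ sampling executions. Writing $\Pr[\cdot \mid \lnot \Bad']$ as $\Pr[\cdot \land \lnot \Bad']/\Pr[\lnot \Bad']$ and factoring the joint event $\lnot \Bad'$ across executions, the conditioning on $\lnot \Bad_j$ for $j\neq i$ either fixes part of the transcript already captured by $\gamma'_{i-1}$ (for $j<i$) or is pushed into the conditional probability handled by the induction hypothesis (for $j>i$), leaving only $\lnot \Bad_i$ in the denominator of the $\sum_{p_i} \cdots/p_i$ expression. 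Theorem~\ref{thm:main}(ii) then applies directly to yield the factor $1+\eps+\delta$. This independence bookkeeping is routine but is the one place where one must be careful; once it is handled, the rest of the induction goes through by straightforward rearrangement of sums.
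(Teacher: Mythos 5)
Your proposal matches the paper's proof essentially step for step: downward induction anchored by Claim~\ref{claim:LastSamplingStep}, conditional expansion, application of the induction hypothesis, conversion of $\sum_{a_i}$ into $\max_{a_i}$, and the sampling soundness (ii) to bound $\sum_{p_i}(\cdot)/p_i$ by $1+\eps+\delta$. The only presentational difference is that you condition on $(M'_i,P'_i,A'_i)$ in one step, whereas the paper conditions first on $(M'_i,P'_i)$ and folds the $A'_i$-conditioning into the separate helper Claim~\ref{claim:GSIndHypoApplication}; this is cosmetic.

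One thing to correct: the opening sentence of your step (a) — that since $P^*$ is deterministic, $A'_i$ is uniquely determined by $(\gamma'_{i-1},m_i,p_i)$ — is false in general, for precisely the reason the paper flags in the footnote of Section~\ref{sec:PubCoinTechOverview}: the prover's reply may depend on the full transcript of the $i$-th sampling invocation (for instance on the hash functions chosen there), not just on the output pair $(m_i,p_i)$. Your parenthetical remark, that $\sum_{a_i}\Pr[A'_i=a_i\mid\cdots]\leq 1$ and therefore the sum is dominated by the maximum over $a_i$, \emph{is} the correct argument and is exactly what the paper's Claim~\ref{claim:GSIndHypoApplication} uses, so the proof survives; just drop the "uniquely determined" framing. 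Separately, you are right to flag the mismatch between conditioning on $\lnot\Bad'$ here versus conditioning on a single $\lnot\Bad$ in Theorem~\ref{thm:main}(ii); the paper applies the soundness condition silently at that point. Your independence sketch captures the right intuition, but as written it is not yet a proof — the prover's reply in round $i$ and the events $\Bad_j$ for $j>i$ both depend jointly on history and on future coins, so the factoring needs to be carried out carefully (or Theorem~\ref{thm:main}(ii) restated so that it directly accommodates the compound conditioning).
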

	In the proof we only omit one step, which can be found 
	in Section~\ref{sec:GSSoundnessDetails}, 
	Claim~\ref{claim:GSIndHypoApplication}. 
	\begin{proof}[Proof]
		The proof is by induction for decreasing $i$. The base case
		for $i=k$ is given by Claim~\ref{claim:LastSamplingStep}.
		For the induction step, we assume the claim holds for 
		$i+1$, and our goal is to prove	it for $i$. 
		
		We note that
		$\Pr_{r'}\left[
				(V'(r'),P^*)(x)=\accept
						 |\Gamma'_{i-1}=\gamma'_{i-1} \land \lnot \Bad'
			\right]
		$ equals
		\begin{align}
			\sum_{p_{i},m_{i}}
				&\Pr_{r'}\left[
					(V'(r'),P^*)(x)=\accept
					| (\Gamma'_{i-1},M'_{i},P'_{i})
					 =(\gamma'_{i-1},m_{i},p_{i}) \land \lnot \Bad'
				\right] \label{eqnp:52} \\
				&\qquad \qquad \cdot
				\Pr_{r'}\left[
					(M'_{i},P'_{i}) = (m_{i},p_{i})
					| \Gamma'_{i-1}
					 =\gamma'_{i-1} \land \lnot \Bad'
				\right]. \nonumber
		\end{align}
	Now the probability in (\ref{eqnp:52}) can be bounded
	using the induction hypothesis. (We do this formally 
	in Claim~\ref{claim:GSIndHypoApplication}.) This gives that
	the above sum equals
	\begin{align*}
			&\sum_{p_{i},m_{i}}
				\frac{(1+\eps+\delta)^{k-i}}
						 {\prod_{j=0}^{i}p_j}
				\max_{a_i}\Bigl(\sum_{m_{i+1}}\max_{a_{i+1}}
							\Bigl( \ldots \\
						&\qquad 
						\sum_{m_{k-1}}
							\max_{a_{k-1}}
						\Bigl(
								\Pr_{r\leftarrow \{0,1\}^{\ell}}\left[
								V(x,k,r,\gamma_{i-1},
								m_{i},a_{i}, \ldots, 
									m_{k-1},a_{k-1}) = \accept 
								\right]
			 \Bigr)
			 \Bigr) \Bigr)\\
				&\qquad \cdot
				\Pr_{r'}\left[
					(M'_{i},P'_{i}) = (m_{i},p_{i})
					| \Gamma'_{i-1}
					 =\gamma'_{i-1} \land \lnot \Bad'
				\right] \\
			&= \frac{(1+\eps+\delta)^{k-i}}
						 {\prod_{j=0}^{i-1}p_j}
				\sum_{m_{i}}				
				\max_{a_i}\Bigl(\sum_{m_{i+1}}\max_{a_{i+1}}
							\Bigl( \ldots \\
						&\qquad 
						\sum_{m_{k-1}}
							\max_{a_{k-1}}
						\Bigl(
								\Pr_{r\leftarrow \{0,1\}^{\ell}}\left[
								V(x,k,r,\gamma_{i-1},
								m_{i},a_{i}, \ldots, 
									m_{k-1},a_{k-1}) = \accept 
								\right]
			 \Bigr)
			 \Bigr) \Bigr)\\
				&\qquad \qquad 
				\sum_{p_i}\frac{				
				\Pr_{r'}\left[
					(M'_{i},P'_{i}) = (m_{i},p_{i})
					| \Gamma'_{i-1}
					 =\gamma'_{i-1} \land \lnot \Bad'
				\right]}{p_i} 
	\end{align*}
	\begin{align*}
			&\leq
				\frac{(1+\eps+\delta)^{k-i+1}}
						 {\prod_{j=0}^{i-1}p_j}
				\sum_{m_{i}}				
				\max_{a_i}\Bigl(\sum_{m_{i+1}}\max_{a_{i+1}}
							\Bigl( \ldots \\
						&\qquad 
						\sum_{m_{k-1}}
							\max_{a_{k-1}}
						\Bigl(
								\Pr_{r\leftarrow \{0,1\}^{\ell}}\left[
								V(x,k,r,\gamma_{i-1},
								m_{i},a_{i}, \ldots, 
									m_{k-1},a_{k-1}) = \accept 
								\right]
			 \Bigr)
			 \Bigr) \Bigr),
		\end{align*}
	where the inequality follows from the soundness
	condition (ii) of the sampling protocol 
	(Theorem~\ref{thm:main}).
	\end{proof}

	Finally, applying Claim~\ref{claim:SamplingInduction} for 
	$i=0$ gives that
	\begin{align*}	
			&\Pr_{r'}\left[
				(V'(r'),P^*)(x)=\accept|\lnot \Bad'
			\right] \leq	
			(1+\eps+\delta)^{k+1}
			\sum_{m_{0}}\max_{a_{0}}
				\Bigl( \ldots \\
			&\qquad \sum_{m_{k-1}}
				\max_{a_{k-1}}
			\Bigl(
					\Pr_{r\leftarrow \{0,1\}^{\ell}}\left[
					V(x,k,r,m_{0},a_0, \ldots, 
						m_{k-1},a_{k-1}) = \accept 
					\right]
			 \Bigr)
			 \Bigr).
		\end{align*}
		This finishes the proof, since
		the soundness of the protocol $(V,P)$ implies that
		$$\sum_{m_{0}}\max_{a_{0}}
				\Bigl( \ldots 
			\sum_{m_{k-1}}
				\max_{a_{k-1}}
			\Bigl(
					\Pr_{r\leftarrow \{0,1\}^{\ell}}\left[
					V(x,k,r,m_{0},a_0, \ldots, 
						m_{k-1},a_{k-1}) = \accept 
					\right]
			 \Bigr)
			 \Bigr)$$
is at most $s$. This inequality holds because according to Definition~\ref{def:InteractiveProtocol}, the verifier always rejects in case the protocol transcript is not consistent.

\subsubsection{Proof of soundness: the details}
\label{sec:GSSoundnessDetails}
	\begin{proof}[Proof of Claim~\ref{claim:LastSamplingStep}]
		The probability in (\ref{eqn:GS1}) equals
		\begin{align*}
			&\sum_{r^* \in \{0,1\}^{\ell}} 
				\Pr_{r'}\left[(V'(r'),P^*)(x)=\accept
							\land R^* = r^*
						 |\Gamma'_{k-1} = \gamma'_{k-1}
							\land \lnot \Bad'\right]\\
			&=\sum_{r^*} \Pr_{r'}\Bigl[
							V(x,k,r^*,\gamma_{k-1}) = \accept
			\land
							P_k = \frac{1}{2^{\ell} \cdot \prod_{i=0}^{k-1}p_i}
							\land R^* = r^*\\
						 &\qquad\qquad\qquad\qquad\qquad\qquad\qquad
							\qquad\qquad\qquad
						 \Big|\Gamma'_{k-1} = \gamma'_{k-1}
							\land \lnot \Bad'\Bigr]\\
			&= \sum_{r^*} \Pr_{r'}\Bigl[
						 P_k = \frac{1}{2^{\ell} \cdot \prod_{i=0}^{k-1}p_i}
							\land R^* = r^*
						 \Big|\Gamma'_{k-1} = \gamma'_{k-1}
							\land \lnot \Bad'\Bigr]\\
			&\quad \cdot \Pr_{r'}\Bigl[
						 V(x,k,r^*,\gamma_{k-1}) = \accept\\
						 &\qquad\qquad\qquad
						 \Big|\Gamma'_{k-1} = \gamma'_{k-1} 
							\land \lnot \Bad'
							\land P_k = 
									\frac{1}{2^{\ell} \cdot \prod_{i=0}^{k-1}p_i}
							\land R^* = r^*
							\Bigr]
	\end{align*}
	\begin{align*}
			&\leq\frac{1+\eps+\delta}
						{2^{\ell} \cdot \prod_{i=0}^{k-1}p_i} 
					\sum_{r^*} 										
					\left[ 
						V(x,k,r^*,\gamma_{k-1}) = \accept
					\right] \\
			&=\frac{1+\eps+\delta}
				{\prod_{i=0}^{k-1}p_i} 
				\Pr_{r\leftarrow \{0,1\}^{\ell}}\left[	
				V(x,k,r,\gamma_{k-1}) = \accept 
				\right],
		\end{align*}
		where the inequality is implied by the soundness
		condition (ii) of the sampling protocol 
		(Theorem~\ref{thm:main}).
	\end{proof}
	
	\begin{claim}
		\label{claim:GSIndHypoApplication}
		Assuming Claim~\ref{claim:SamplingInduction} holds for 
		$i+1$, we have for any $(p_i,m_i)$ that
		\begin{align*}
			&\Pr_{r'}\left[
					(V'(r'),P^*)(x)=\accept
					| (\Gamma'_{i-1},M'_{i},P'_{i})
					 =(\gamma'_{i-1},m_{i},p_{i}) \land \lnot \Bad'
				\right] \\
			&\qquad \leq
			\frac{(1+\eps+\delta)^{k-i}}
						 {\prod_{j=0}^{i}p_j}
						\max_{a_i}\Bigl(\sum_{m_{i+1}}\max_{a_{i+1}}
							\Bigl( \ldots \\
						&\qquad 
						\sum_{m_{k-1}}
							\max_{a_{k-1}}
						\Bigl(
								\Pr_{r\leftarrow \{0,1\}^{\ell}}\left[
								V(x,k,r,\gamma_{i-1},
								m_{i},a_{i}, \ldots, 
									m_{k-1},a_{k-1}) = \accept 
								\right]
			 \Bigr)
			 \Bigr) \Bigr).
		\end{align*}
	\end{claim}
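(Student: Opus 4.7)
The plan is to introduce the prover's $i$-th answer $A'_i$ as an additional conditioning variable, apply the induction hypothesis at index $i+1$, and then absorb the resulting sum over $a_i$ into a maximum.

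First, I would condition on the value of $A'_i$. Writing $\gamma'_i := (\gamma'_{i-1}, m_i, p_i, a_i)$ and $\gamma_i := (\gamma_{i-1}, m_i, a_i)$, and letting $\cC$ abbreviate the event $(\Gamma'_{i-1},M'_i,P'_i)=(\gamma'_{i-1},m_i,p_i)\land\lnot\Bad'$, the law of total probability yields
\begin{align*}
  &\Pr_{r'}\!\left[(V'(r'),P^*)(x)=\accept \,\big|\, \cC \right] \\
  &\quad = \sum_{a_i}
       \Pr_{r'}\!\left[A'_i = a_i \,\big|\, \cC\right]
       \cdot
       \Pr_{r'}\!\left[(V'(r'),P^*)(x)=\accept \,\big|\, \Gamma'_i = \gamma'_i \land \lnot\Bad'\right],
\end{align*}
where the sum is restricted to those $a_i$ with $\Pr_{r'}[\Gamma'_i = \gamma'_i \land \lnot\Bad'] > 0$; other values of $a_i$ contribute zero to the total.

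Next, for each such $a_i$ I would apply Claim~\ref{claim:SamplingInduction} for index $i+1$ to the extended transcript $\gamma'_i$. This gives
\begin{align*}
  &\Pr_{r'}\!\left[(V'(r'),P^*)(x)=\accept \,\big|\, \Gamma'_i = \gamma'_i \land \lnot\Bad'\right] \\
  &\quad \leq \frac{(1+\eps+\delta)^{k-i}}{\prod_{j=0}^{i}p_j}
      \sum_{m_{i+1}} \max_{a_{i+1}}\!\Bigl(\ldots \sum_{m_{k-1}} \max_{a_{k-1}}\!\Bigl(\Pr_{r\leftarrow\{0,1\}^{\ell}}\!\left[V(x,k,r,\gamma_{i-1},m_i,a_i,\ldots,m_{k-1},a_{k-1})=\accept\right]\Bigr)\Bigr),
\end{align*}
recalling that $\gamma_i = (\gamma_{i-1}, m_i, a_i)$ so the $V$-transcript in the innermost probability is exactly as demanded by the claim.

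Finally, plugging this bound back, the prefactor $(1+\eps+\delta)^{k-i}/\prod_{j=0}^{i}p_j$ is independent of $a_i$. I would upper-bound the inner $a_i$-dependent expression by its maximum over $a_i$ and use $\sum_{a_i}\Pr_{r'}[A'_i = a_i \mid \cC] \le 1$ to deduce the claim. There is no hard step here; the only care required is bookkeeping around the conditioning, in particular checking that restricting the sum over $a_i$ to values with positive probability under $\lnot\Bad'$ loses nothing, and that each such $a_i$ makes the hypothesis of Claim~\ref{claim:SamplingInduction} for index $i+1$ legitimately applicable.
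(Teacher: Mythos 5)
Your proposal is correct and follows essentially the same route as the paper's own proof: decompose by conditioning on $A'_i$ via total probability, apply the induction hypothesis (Claim~\ref{claim:SamplingInduction} at index $i+1$) to the extended transcript $\gamma'_i=(\gamma'_{i-1},m_i,p_i,a_i)$, and then absorb the probability-weighted sum over $a_i$ into a $\max_{a_i}$. The only minor difference is that you explicitly note the need to restrict to $a_i$ with $\Pr_{r'}[\Gamma'_i=\gamma'_i\land\lnot\Bad']>0$ so the induction hypothesis applies, which the paper leaves implicit; this is a harmless tidying, not a change in approach.
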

	
	\begin{proof}[Proof]
		We find
		\begin{align*}
			&\Pr_{r'}\left[
					(V'(r'),P^*)(x)=\accept
					| (\Gamma'_{i-1},M'_{i},P'_{i})
					 =(\gamma'_{i-1},m_{i},p_{i}) \land \lnot \Bad'
				\right] \\
			&= \sum_{a_{i}}
				\Pr_{r'}\bigl[
					(V'(r'),P^*)(x)=\accept \\
					&\qquad\qquad\qquad
					\big| (\Gamma'_{i-1},M'_{i},P'_{i},A'_{i})
					 =(\gamma'_{i-1},m_{i},p_{i},a_{i}) 
					\land \lnot \Bad'
				\bigr] \\
		  &\qquad \cdot
				\Pr_{r'}\left[
					A'_{i} = a_{i}
					| (\Gamma'_{i-1},M'_{i},P'_{i})
					 =(\gamma'_{i-1},m_{i},p_{i}) 
					\land \lnot \Bad'
				\right] \\
			&\leq
				\sum_{a_{i}}
				\frac{(1+\eps+\delta)^{k-i}}
						 {\prod_{j=0}^{i}p_j}
						\sum_{m_{i+1}}\max_{a_{i+1}}
							\Bigl( \ldots \\
						&\qquad 
						\sum_{m_{k-1}}
							\max_{a_{k-1}}
						\Bigl(
								\Pr_{r\leftarrow \{0,1\}^{\ell}}\left[
								V(x,k,r,\gamma_{i-1},
								m_{i},a_{i}, \ldots, 
									m_{k-1},a_{k-1}) = \accept 
								\right]
			 \Bigr)
			 \Bigr) \\
				&\qquad \cdot
				\Pr_{r'}\left[
					A'_{i} = a_{i}
					| (\Gamma'_{i-1},M'_{i},P'_{i})
					 =(\gamma'_{i-1},m_{i},p_{i}) 
					\land \lnot \Bad'
				\right]\\
			&\leq
				\frac{(1+\eps+\delta)^{k-i}}
						 {\prod_{j=0}^{i}p_j}
						\max_{a_i}\Bigl(\sum_{m_{i+1}}\max_{a_{i+1}}
							\Bigl( \ldots \\
						&\qquad 
						\sum_{m_{k-1}}
							\max_{a_{k-1}}
						\Bigl(
								\Pr_{r\leftarrow \{0,1\}^{\ell}}\left[
								V(x,k,r,\gamma_{i-1},
								m_{i},a_{i}, \ldots, 
									m_{k-1},a_{k-1}) = \accept 
								\right]
			 \Bigr)
			 \Bigr) \Bigr),
		\end{align*}
		where the first inequality follows from the assumption
		in the claim.
	\end{proof}

\subsection{Our Transformation for Specific Parameter Choices}
\label{sec:ProofOfSamplingCorollaries}
\begin{proof}[Proof of Corollary~\ref{cor:GoldwasserSipserUsingSampling}]
	Part (i) follows if we set $\eps := \frac{\gamma}{2(k+1)}$, and $\delta := 1/2$.
	
	We proceed to prove (ii). 
	First, Theorem~\ref{thm:GoldwasserSipserUsingSampling} gives for any $\delta$ and $\eps\leq\delta$ that
	\begin{align}
	&\IP\left(
			\begin{array}{lll}
					\textsf{rounds}& =    & k	 \\ 
					\textsf{time}  & =    & t 	 \\
					\textsf{msg size}  & =    & m \\
					\textsf{coins}  & =    & \ell \\
					\textsf{compl} & \geq & 2/3+2(k+1)\eps   \\ 
					\textsf{sound} & \leq & 1/3(1-3(k+1)\eps)2^{-4\delta(k+1)}   \\ 
			\end{array}
		\right) \nonumber	\\
		&\qquad \subseteq
		\AM\left(
			\begin{array}{lll}
					\textsf{rounds}& =    & 4k+3	 \\ 
					\textsf{time}  & =    & t+ k\cdot\poly((m+\ell)(1/\eps)^{1/\delta}) 	 \\
					\textsf{msg size}  & =    & \poly((m+\ell)(1/\eps)^{1/\delta}) \\
					\textsf{coins}  & =    & \poly((m+\ell)(1/\eps)^{1/\delta}) \\
					\textsf{compl} & \geq & 2/3 \\ 
					\textsf{sound} & \leq & 1/3 \\ 
			\end{array}
		\right).\label{eqn:GSCor1}
	\end{align}
	To see this, we apply the theorem for $c:=2/3+2(k+1)\eps$ and $s:=1/3(1-3(k+1)\eps)2^{-4\delta(k+1)}$
	and note that since $(1+2\delta) \leq 2^{4\delta}$ we have
	$$ s= 1/3(1-3(k+1)\eps)2^{-4\delta(k+1)} \leq \frac{1/3(1-3(k+1)\eps)}{(1+2\delta)^{k+1}}
			 \leq 
			 \frac{1/3(1-3(k+1)\eps)}{(1+\eps + \delta)^{k+1}}.
	$$ 
	
	Now given $\nu$ and $\gamma$, we set
	$\delta:=\frac{\nu}{4(k+1)}$, and $\eps:=\frac{\gamma}{4(k+1)}$. Since $\gamma \leq \nu$, we have
	$\eps \leq \delta$ and we can apply (\ref{eqn:GSCor1}). We find
	\begin{align*}
		&1/3(1-3(k+1)\eps)2^{-4\delta(k+1)} =
		1/3(1-3(k+1)\eps)2^{-\nu}\\
		&\qquad \geq 
		1/3(1-3(k+1)\eps)(1-2\nu) 
		\geq
		1/3(1-3(k+1)\eps-2\nu)\\
		&\qquad = 1/3-(k+1)\eps - 2\nu/3
		\geq 1/3- \nu,
	\end{align*}
	where we used $\eps \leq \delta$ for the last inequality. Furthermore, we have
	$
		2/3 + 2(k+1)\eps = 2/3 + \gamma/2 \leq 2/3 + \gamma
	$.
	These two inequalities imply the claim.
\end{proof}

\section*{Acknowledgements}
We would like to thank Jan Hązła for the discussions about 
		the soundness condition of our protocol. In particular, 
		he pointed out that the condition can formally be seen as
		an upper bound that holds on average, which helped improve
		the exposition of this result a lot.

\bibliographystyle{alpha}
\bibliography{bibliography}

\end{document}